\documentclass[aps,pra,singlecolumn,superscriptaddress,floatfix,nofootinbib,showpacs,longbibliography,notitlepage,24 pt]{revtex4-1}

\usepackage[utf8]{inputenc}  
\usepackage[T1]{fontenc}     
\usepackage[british]{babel}  
\usepackage[sc,osf]{mathpazo}\linespread{1.05}  
\usepackage[scaled=0.86]{berasans}  
\usepackage[colorlinks=true, citecolor=blue, urlcolor=blue]{hyperref}
\makeatletter
\newcommand{\setword}[2]{%
  \phantomsection
  #1\def\@currentlabel{\unexpanded{#1}}\label{#2}%
}
\makeatother
\usepackage{graphicx} 
\usepackage[babel]{microtype}  
\usepackage{amsmath,amssymb,amsthm, bm,amsfonts,mathrsfs,bbm} 

\usepackage{xspace}  
\usepackage{pgf,tikz}
\usepackage{xcolor}
\usepackage{multirow}
\usepackage{array}
\usepackage{bigstrut}
\usepackage{braket}
\usepackage{color}
\usepackage{natbib}
\usepackage{multirow}
\usepackage{mathtools}
\usepackage{float}
\usepackage[caption = false]{subfig}
\usepackage{xcolor,colortbl}
\usepackage{color}
\usepackage{geometry}
\newcommand{\Tr}{\operatorname{Tr}}

\newcommand{\be}{\begin{equation}}
\newcommand{\ee}{\end{equation}}
\newcommand{\ba}{\begin{eqnarray}}
\newcommand{\ea}{\end{eqnarray}}
\newcommand{\ketbra}[2]{|#1\rangle \langle #2|}

\newtheorem{lemma}{Lemma}






\def\>{\rangle}
\def\<{\langle}


\newcommand{\map}[1]{\mathcal{#1}}


\newtheorem{theo}{Theorem}

\newtheorem{cor}{Corollary}
\newtheorem{defi}{Definition}


\usepackage{centernot}
\usepackage{subfig}

\setlength{\parskip}{0.5em}  
\setlength{\parindent}{0pt}  
\DeclareUnicodeCharacter{202A}{} 
\DeclareUnicodeCharacter{202C}{} 

\definecolor{lime}{HTML}{A6CE39}
\DeclareRobustCommand{\orcidicon}{
	\begin{tikzpicture}
	\draw[lime, fill=lime] (0,0) 
	circle [radius=0.16] 
	node[white] {{\fontfamily{qag}\selectfont \tiny ID}};
	\draw[white, fill=white] (-0.0625,0.095) 
	circle [radius=0.007];
	\end{tikzpicture}
	\hspace{-2mm}
}

\foreach \x in {A, ..., Z}{\expandafter\xdef\csname orcid\x\endcsname{\noexpand\href{https://orcid.org/\csname orcidauthor\x\endcsname}
			{\noexpand\orcidicon}}
}
%

\begin{document}

\title{Randomness-free Detection of Non-projective Measurements: Qubits \& Beyond}

\author{Sumit Rout\orcidS{}}
\email{sumit.rout@phdstud.ug.edu.pl}
\affiliation{International Centre for Theory of Quantum Technologies (ICTQT), University of Gda{\'n}sk, Jana Ba{\.z}ynskiego 8, 80-309 Gda{\'n}sk, Poland}

\author{‪Some Sankar Bhattacharya‬\orcidB{}}
\affiliation{Física Teòrica: Informació i Fenòmens Quàntics, Universitat Autònoma de Barcelona,
08193 Bellaterra, Spain}
\author{Pawe{\l} Horodecki\orcidP{}}
\affiliation{International Centre for Theory of Quantum Technologies (ICTQT), University of Gda{\'n}sk, Jana Ba{\.z}ynskiego 8, 80-309 Gda{\'n}sk, Poland}

\begin{abstract}
  Non-projective measurements play a crucial role in various information-processing protocols. In this work, we propose an operational task to identify measurements that are neither projective nor classical post-processing of data obtained from projective measurements. Our setup involves space-like separated parties with access to a shared state with bounded local dimensions. Specifically, in the case of qubits, we focus on a bipartite scenario with different sets of target correlations. While some of these correlations can be obtained through non-projective measurements on a shared two-qubit state, it is impossible to generate these correlations using {\it projective simulable} measurements on bipartite qubit states, or equivalently, by using one bit of shared randomness and local post-processing. For certain target correlations, we show that detecting qubit non-projective measurements is robust under {\it arbitrary} depolarising noise, except in the limiting case. We extend this task for qutrits and demonstrate that some correlations achievable via local non-projective measurements cannot be reproduced by both parties performing the same qutrit {\it projective simulable} measurements on their pre-shared state. We provide numerical evidence for the robustness of this scheme under {\it arbitrary} depolarising noise. For a more generic consideration (bipartite and tripartite scenario), we provide numerical evidence for a projective-simulable bound on the reward function for our task. We also show a violation of this bound by using qutrit POVMs. From a foundational perspective, we extend the notion of non-projective measurements to general probabilistic theories (GPTs) and use a randomness-free test to demonstrate that a class of GPTs, called {\it square-bits or box-world} are unphysical.
\end{abstract}

\maketitle
\section{Introduction}
A measuring device is associated with an input physical system and an output classical variable. Any physical theory in addition to specifying the state also prescribes allowed observables and a rule for predicting the outcomes of a measurement. For a two-level quantum system also known as a qubit, in contrast to classical two-level systems (bits), the quantum formalism allows for measuring devices with more than two irreducible outcomes. In other words, there are measurements with more than two outcomes that cannot be realised as a post-processing of the outcomes obtained from a two-outcome quantum measurement. Such measurements are called {\it non-projective measurements}. The usefulness of general measurement devices has been recognised since the late 1980s. Studies have demonstrated their advantages in quantum state discrimination \cite{Ivanovic87, Peres88, DiMario22, QuantumStateDisriminationRev}, entanglement detection \cite{Shang2018}, quantum tomography \cite{Derka1998, Renes2004, OptTomography, singleSettingTom, povmSHADOWGhune}, and quantum metrology \cite{Ragy2016, Szcyykulska2016, rafalMETRO}. More recently, non-projective measurements have been shown to outperform projective ones in information-processing tasks such as randomness certification \cite{Acin16}, quantum cryptography \cite{optimalCRYPTO}, and port-based teleportation \cite{Ishizaka2008, Studzinski2017port, Mozrzymas2018optimal}. Additionally, research has established the advantages of non-projective measurements in quantum computing \cite{algREVIEW, Bacon2006, HSP} and the foundations of quantum information \cite{Barret2002, Vertesi2010}. Consequently, detecting and characterising such non-projective measurements is of practical importance.

In the simplest scenario, one can consider tomography \cite{Lundeen09} of a fixed measurement device by randomly preparing the system to be measured in different states. However, this procedure is resource-intensive and a better protocol can be developed. For example, one may randomise over a smaller set of measurements, which is a powerful tool for probing the non‐classical nature of physical processes. This approach has led to several non-classicality certification tasks with various degrees of knowledge about the process/device \cite{Brunner14, Sekatski23}. In some cases, even a minimal amount of seed randomness turns out to be sufficient for success in non-classicality detection \cite{Barrett11, Putz14}. In 2010 Vertesi et al. \cite{Vertesi10} and later in 2016 Gomez et al. \cite{Gomez16} proposed schemes to detect non-projective measurements in a setting involving two space-like separated parties. They perform local measurements, chosen randomly from a set, on a pre-shared bipartite state. In the prepare and measure setup where Alice and Bob receive random inputs, Tavakoli et al. \cite{Tavakoli20} showed robust semi-device independent self-testing of extremal qubit three- and four-outcomes POVM. Also, Martínez et al. \cite{Martinez23} showed robust self-testing of seven-outcomes ququad non-projective measurement using a similar setup. Both of these works are semi-device independent since they assume the underlying theory describing the communicated system to be quantum with an upper bound on the associated Hilbert space dimension.

But the fact remains that classical devices can only generate pseudo-random sequences (which might be sufficient for certain practical purposes), making it in principle necessary to use non-classical devices to certify the non-classicality of other processes. In an attempt to overcome this circular nature of the certification technique, some proposals have been forwarded. In \cite{Renou19} the authors consider a network scenario with multiple non-classical sources and fixed measurements are performed locally. Although this approach does not require any knowledge of the internal workings of the devices, it assumes that the sources are independent. In another approach \cite{Guha2021, Ma2023}, the non-classicality of a fixed source as well as a measuring device has been detected, while an upper bound on the operational dimension of the system under consideration requires to be known. The present work follows the latter approach and unveils a larger set of correlation simulation tasks that can be used for detecting non-projective measurements.

Specifically, in \cite{Ma2023} a proof of concept realization with photonic systems for the certification of three-outcome qubit non-projective measurements without seed randomness has been demonstrated. The scheme requires a minimal assumption that the dimension of parts of the system is a priori known to the experimenter. The authors show that qubit non-projective measurements are necessary for producing specific correlated (public) coins from pairs of photons entangled in their transverse spatial modes. The possibility of a similar scheme for higher dimensional measurements was left open. 

In the present article, we answer several open questions mentioned in \cite{Ma2023} regarding randomness-free detection of non-projective measurements. First, we extend the notion of non-projective measurements to general probabilistic theories (GPTs) \cite{Barrett11,Janotta11,Scandolo21} and propose an analogous problem of their detection. Second, we expand the range of target correlations used for detecting qubit non-projective measurements and also the noise-robustness of the proposed schemes. Last, we explore the detection of higher dimensional quantum non-projective measurement with different numbers of outcomes.

We consider the problem of certifying whether some uncharacterised measurement device is a genuine measurement with a fundamentally irreducible number of outcomes or otherwise. We only assume the knowledge regarding the input dimension of the measuring device. Given an arbitrary measuring device, such a characterization requires the use of other known measurements and randomness to perform prepare and measure experiments \cite{Tavakoli20, Martinez23}. Instead in this work, we consider a different setting where multiple copies of the same uncharacterised measurement device are available to spatially separated multiple labs. These fixed measurements are performed on a multipartite system shared between the labs. This has a two-fold advantage in comparison to the previous approach: (i) it cuts the cost of having seed randomness usually considered a costly resource \cite{Ambainis08} and (ii) it does not necessitate the use of additional fully characterised measurement devices. Thus, our proposed framework is characterised by the tuple $(n,d,k)$, where $n$ is the available number of copies of the uncharacterised measuring devices which are spatially separated, $d$ is their input dimension, and $k(>d)$ is the number of outcomes of each measuring device.
\subsection{Outline}

In section \ref{sec:Prelimianary}, we introduce the notion of projective simulable measurements as well as non-projective measurements in quantum theory and its extension to GPT. Then we propose an operational task and two different criteria using these tasks, which we would consider for detecting qudit non-projective measurements. Next, in section \ref{sec:qubit non proj}, we use these tasks to detect qubit non-projective measurement. First in sub-section \ref{Subsec:QPS bound} we show that all the correlations obtained using classically correlated systems and post-processing of outcomes can also be obtained using projective simulable measurements on quantum systems with equal local dimension and vice versa. Then in sub-sections \ref{sec:S223} and \ref{sec:S224} we discuss the detection of three and four-outcome qubit non-projective measurements. We also show the robustness against arbitrary noise of the proposed detection schemes. For these detection schemes, we do not use any assumption regarding the projective simulable measurements performed by each of the parties, {\it i.e.} they could be different in general. However, for the result in sub-sections \ref{sec:S223S} and \ref{sec:S235S}, we assume that even the projective simulable measurement devices for the parties are identical copies. In sub-section \ref{sec:S223S}, we demonstrate the impossibility of obtaining some correlations while using two copies of shared qubit states adaptively and performing identical qubit projective simulable measurements. However, these target correlations can be achieved using non-projective qubit measurements. In section \ref{sec:qutrit non-proj}, we discuss the tasks to detect qutrit non-projective measurement. In sub-sections \ref{sec:S235} and \ref{sec:S335}, we discuss the detection of five-outcome qutrit non-projective measurements involving two and three non-communicating parties respectively. We provide numerically obtained bounds on the figure of merit using qutrit projective simulable measurements and a violation of it using qutrit POVM. In sub-section \ref{sec:S235S}, we present analytical proof for the detection of five-outcome qutrit non-projective measurements in a bipartite scenario while assuming that the parties use identical measuring devices. We also provide evidence for the robustness of the detection scheme against arbitrary noise in this case. In section \ref{sec:gpt non-proj}, we provide a task for a randomness-free test to show that some special kinds of GPT called boxworld are unphysical. Lastly in the section \ref{Discussion}, we conclude with a summary of our results and some open directions.

\section{Preliminaries}\label{sec:Prelimianary}
This section will briefly introduce the necessary preliminaries and notations used throughout the article. 
  
\subsection{Quantum Measurements}
The state space associated with a quantum system is given by the set of positive semi-definite operators with a unit trace $D(\mathbb{C}^d)$ defined over some associated complex Hilbert space $\mathbb{C}^d$. The measurement in any physical theory is a mapping from the state space associated with a system to real numbers. In quantum mechanics, a general measurement is given by a set of positive semi-definite operators over Hilbert space $\mathbb{C}^d$, {\it i.e.},  $\{E_i\}_{i=0}^{n-1}$ that sum up to identity, {\it i.e.}, $\sum_{i=0}^{n-1} E_i=\mathbb{I}$. Such a measurement is called a Positive Operator Valued Measure (POVM). A projective measurement (PVM) is a specific type for which any two of the positive operators $\{F_i\}_{i=0}^{d-1}$ are orthogonal projectors    $F_iF_j=F_i\delta_{ij}$. The projectors are extreme points in the convex set formed by the elements of POVM in quantum theory. One can also define a class of POVMs on a $d$-dimensional quantum system with $n$ outcomes ($n > d$), which can be realised by post-processing the outcomes of a $d$-outcome projective measurement performed on the same system. 
\begin{defi}
A qudit $k$-outcome POVM $\mathcal{E}:=$ $\{E_i\}_{i=0}^{k-1}$ is projective-simulable if $E_i=\sum_{j} P_{ij} F_{j}$ where $\{P_{ij}\}_{i=0}^{k-1}$ is a valid probability distribution $\forall j\in\{0,\cdots,d-1\}$ and $\{F_{j}=|\psi_j\rangle\langle\psi_j|, \ket{\psi_j}\in\mathbb{C}^d\}_{j=0}^{d-1}$ is a qudit projective measurement {\it i.e.} $F_{j}F_{k}=F_{j}\delta_{j,k}$. 
\end{defi}
However, there exist POVMs with $n$ outcomes ($n > d$) that can be performed on a $d$-dimensional quantum system but cannot be simulated by post-processing the outcomes of any projective measurement on the same system. Throughout this work, we will refer to such qudit measurement that is not projective-simulable as {\it non-projective simulable} (or non-projective in short). These measurements can be realised as projective measurements which are performed on a quantum system with an extended Hilbert space \cite{Nielsen2012}. 
In other words, exactly simulating a non-projective measurement in $d$ dimensional Hilbert space using some projective measurement requires some additional ancilla along with the $d$ dimensional quantum system. Recently, it was shown that the simulation of most general non-projective measurements on $m$ qubits using projective measurements requires an additional $m$-qubit ancilla \cite{Oszmaniec2017}. Note that there is another concept of simulation of non-projective measurement, namely probabilistic simulation where one is allowed to reject some of the trials. The probabilistic simulation of any non-projective measurement using a projective measurement requires a single qubit ancilla \cite{Singal2022}.

As an example of the situation when a given POVM is non-projective simulable, consider the $3$-outcome qubit POVM $\{E_i=\lambda_i(\mathbb{I}+\hat{n}_i\cdot \vec{\sigma})\}_{i=0}^2$ where $\lambda_i>0$, $\vec{\sigma}=(\sigma_1,\sigma_2,\sigma_3)$ is a vector of Pauli matrices, $\sum_i \lambda_i=1$, $\hat{n}_i\in \mathbb{R}^3$, $|\hat{n}_i|^2=1$ (unit vector), $\sum_i \lambda_i \hat{n}_i=0$ and $\hat{n}_i\neq \hat{n}_j~\forall i,j$. These qubit POVMs are not simulable using qubit projective measurements. Another such example is four outcome qubit SIC-POVM $\{\Pi_{\alpha}=\frac{1}{2}\ket{\psi_{\alpha}}\bra{\psi_{\alpha}}:\ket{\psi_{0}}=\ket{0},\ket{\psi_{\alpha}}=\sqrt{\frac{1}{3}}\ket{0}+  \sqrt{\frac{2}{3}}e^{\frac{2\pi i}{3}(\alpha-1)}\ket{1} \text{ for } \alpha\in\{1,2,3\} \}$. Here we are interested in detecting such measurements using correlations obtained under some operational constraints. With this aim, we will later define operational tasks to this end. For these tasks, we will additionally show that the statistics obtained from projective simulable measurements can also be reproduced using classical systems with equivalent local dimensions. In this sense, projective simulable measurements correspond to a notion of classicality for measurements. Although we assume that the systems are completely described by quantum theory and have some bounded dimension, such tasks might not require the entire internal description of the devices to be restricted to quantum theory, {\it i.e.}, the possibility of achieving these tasks perfectly is sensitive to a notion of non-classicality that is theory-independent in general. We consider the framework of General Probabilistic Theory (GPT) to introduce this notion of non-classicality for measurements.

\subsection{Measurements in GPTs}
 A General Probabilistic Theory $\mathcal{X}$ is specified by a list of system types and the composition rules specifying a combination of several systems. A system $S$ is described by a state $\omega$ which specifies outcome probabilities for all measurements that can be performed on it. For a given system, the set of possible normalised states forms a compact and convex set $\Omega$ embedded in a positive convex cone $V_+$ of some real vector space $V$. Convexity of state space $\Omega$ assures that any statistical mixture of states is also a valid state. The extremal points of $\Omega$, that do not allow any convex decomposition in terms of other states, are called pure states or states of maximal knowledge. An effect $e$ is a linear functional on $\Omega$ that maps each state onto a probability, \textit{i.e.}, $e:\Omega\mapsto[0,1]$ by a pre-defined rule $p(e|\omega)=Tr(e^T.\omega)$. The set of effects $\mathcal{E}$ is embedded in the positive dual cone $(V^\star)_+$. The normalization of $\omega$ is determined by $u$ which is defined as the unit effect and a specified element of $(V^\star)_+$, such that, $p(u|\omega)=Tr(u^T.\omega)=1, \forall \omega\in\Omega$. Assuming {\it no-restriction hypothesis} \cite{Chiribella11}, a $k$-outcome measurement is specified by a collection of $k$ effects, $M\equiv\{e_j~|~\sum_{j=1}^ke_j=u\}$, such that, $\sum_{j=1}^kp(e_j|\omega) = 1$, for all valid states $\omega$. A measurement is called {\it sharp} if all its elementary effects $\{e_j\}_j$ correspond to the extreme points of the set of effects. Another much-needed component to complete the mathematical structure for GPT is the reversible transformation $\mathcal{T}$ which maps states to states, {\it i.e.}, $\mathcal{T}(\omega)= \omega'\in\Omega$. They are linear to preserve the statistical mixtures, and they cannot increase the total probability. In a GPT one can introduce the idea of distinguishable states from an operational perspective which consequently leads to the concept of \textit{Operational dimension}.

\begin{defi}
Operational dimension of a system is the largest cardinality of the subset of states, $\{\omega_i\}_{i=1}^n\subset\Omega$, that can be perfectly distinguished by a single measurement, {\it i.e.}, there exists a measurement, $M\equiv\{e_j~|~\sum_{j=1}^ne_j=u\}$, such that, $p(e_j|\omega_i)=\delta_{ij}$.
\end{defi}

The classical and quantum theory are two examples of GPTs. In each of these theories, a system with operational dimension two corresponds to a bit and a qubit respectively. Note that the state space of a qubit can be embedded in $\mathbb{R}^3$. Also, the operational dimension of a system may not necessarily be equal to the dimension of vector space in which the state space resides \cite{Brunner14,Dallarno17}. Later we also discuss (see section \ref{sec:gpt non-proj}) a class of hypothetical theories with operational dimension $2$.

Within the framework of GPT, a generalised notion of non-projective simulable measurements as in quantum theory is given by the non-sharp simulable measurements. These measurements are complementary to the set of sharp simluable measurements which are defined as follows.

\begin{defi}
A $k$-outcome measurement $\mathcal{E}:=$ $\{e_i\}_{i=1}^k$ is sharp-simulable for a GPT system $S$ of operational dimension $d$ if $e_i=\sum_{j} P_{ij}\pi_{j}$ where $\{ P_{ij}\}_{i=1}^{k}$ is a probability distribution for all $j\in \{1,\cdots,d\}$ and $\{\pi_{j}\}_{j=1}^{d}$ is a sharp measurement for the GPT system $S$.
\end{defi}

Clearly, the non-projective simulable measurements are a special case of not sharp simulable measurements for the quantum theory. As discussed earlier, non-projective measurements are valuable resources for many information-processing tasks. Thus, their detection is an important problem. There can be various settings where one can attempt to detect such measurements. The first hurdle one faces while detecting non-projective measurements in a prepare and measure setup is that it cannot be done by using a single quantum system. This is because any input-output correlation generated by a $d$-level quantum system and a fixed measurement device can also be simulated by a $d$-level classical system \cite{Frenkel15}. Thus for bipartite systems, the task is to find alternate setups and generate correlations which cannot be explained in terms of $d$-outcome measurements performed on the sub-systems. 

A possible approach is to consider the Bell nonlocality scenario, where independent measurements are performed on a spatially separated bipartite system and only the conditional input-output probability is analysed. There exist Bell inequalities where three-outcome qubit measurements can outperform fundamentally two-outcome measurements \cite{Kleinmann17}. An essential assumption in Bell's nonlocality scenario is freedom of choice. Operationally it means that at each spatially separated lab, some seed randomness is required while performing one of many measurements. This situation could be overcome with a setting where only a fixed measurement is performed on the subsystems. To this end, we will use the Local Operation (LO) and {\it bounded local dimension} framework where space-like separated parties can perform local operations for free and are allowed to pre-share systems of bounded local dimension. Here, we consider the pre-sharing of additional resources to be costly. In the following, we describe a setting where we can detect non-projective measurements without assuming
freedom of choice in a scenario where parties are space-like separated.

\subsection{The Task $\mathbb{G}~(n,d,k)$}

The setup for the correlation simulation task we now define, in its most general form, involves $n$ spatially separated parties $A_1$, $A_2$, $\cdots$, $A_n$ (see figure \ref{fig:Setupndk}). Let ${\map P}_{A_1,A_2,\cdots,A_n}$ be a preparation device that prepares an $n$-partite state with local operational dimension $d$ and distributes them among these parties. The parties do not have access to any additional shared correlations however they have access to local sources of randomness. Additionally, each of them has access to an uncharacterised measuring device ${\map M}_{A_i}$ having $k$-outcomes $(k>d)$. In some special cases, the measuring devices would be assumed to be identical copies. After performing measurements on their respective subsystems, the $n$ parties output  $a_1$, $a_2$, $\cdots$, $a_n\in\{0,1,\cdots,k-1\}$, respectively. The joint probability of these outcomes is denoted as $p(a_1, a_2, \cdots, a_n)$ where $a_i$ is the output of $A_i$. Note that the devices in general could be described by a theory $\mathcal{X}:=(\Omega_d,\mathcal{E}_d,\mathcal{P})$ with operational dimension $d$ where $\Omega_d,\mathcal{E}_d,\mathcal{P}$ are the state space, effect space and the probability rule respectively. Here, we will use the notation $\mathcal{C}(k):=\{P:P=\{p(a_1,\cdots,a_n)\}_{a_1,\cdots,a_n=0}^{k-1}\}$ for the set of all valid joint probability distribution. If the parties share a multipartite correlated state and perform some local measurement allowed by the theory $\mathcal{X}$ then $\mathcal{C}^{\mathcal{X}}_d(k)\subset \mathcal{C}(k)$ will denote the subset of correlation achievable by the theory $\mathcal{X}$ while using a system of local operational dimension $d$. For classical and quantum theory $\mathcal{X}=Cl$ and $\mathcal{X}=Q$ respectively. When the local measurements are restricted to be projective simulable in quantum theory then we shall refer to this set of correlations as $\mathcal{C}^{PQ}_d (k)$. Clearly, $\mathcal{C}^{nPS}_d (k):= \mathcal{C}^{Q}_d (k)~ \backslash ~\mathcal{C}^{PQ}_d (k)$ denotes the quantum correlations that can be achieved by performing some non-projective simulable measurement ($nPS$) on shared state with local dimension $d$.

In this scenario, we shall use two different means to detect resources of our interest, {\it i.e.}, non-projective measurements. Firstly, we can have a convex set of target correlations, say $T[n,d,k]$, that we would like to simulate. In such a case, we define the winning condition as obtaining any correlation in this set. A typical question in such scenarios is regarding robustness against noise. In this direction, for some explicit tasks, we will later show that qudit non-projective measurements yield some correlations in $T[n,d,k]$ even in the presence of noise. Secondly, we can define a quantity, ${\map R}[\mathbb{G}~(n,d,k)]$ as a figure of merit, which is a function (not necessarily linear) of the joint probability distribution $\{p(a_1,\cdots,a_n)\}_{a_1,\cdots,a_n}$. If this quantity ${\map R} [\mathbb{G}~(n,d,k)]$ has a threshold value for measuring devices with fundamentally $d$-outcomes, obtaining some value beyond this threshold will detect non-projective, {\it i.e.} fundamentally more than $d$-outcome measurements. Using an operational theory, the objective would be to achieve a target correlation using a shared state and effect in the first case while the second case requires one to optimise the quantity ${\map R}[\mathbb{G}~(n,d,k)]$ over all states and effects given an upper bound on the operational dimension of the local sub-systems. We will specify a priori for each task whether we will be using the former or latter for detecting non-projective measurements. Additionally, if the system, measurement, and probability rule are given by some theory $\mathcal{X}$ then the class target correlations or the figure of merit could be used to prove that the theory is not physical. Specifically, later we will use one such task to show that certain polygon GPTs are unphysical. We would refer to the operational task defined above in short as $\mathbb{G}~(n,d,k)$ henceforth. In case the uncharacterised measuring device of each party is assumed to be identical then we will use $\mathbb{G}^{sym}~(n,d,k)$ to denote the task.

\begin{figure}[h!]
    \centering
    \includegraphics[scale=0.1]{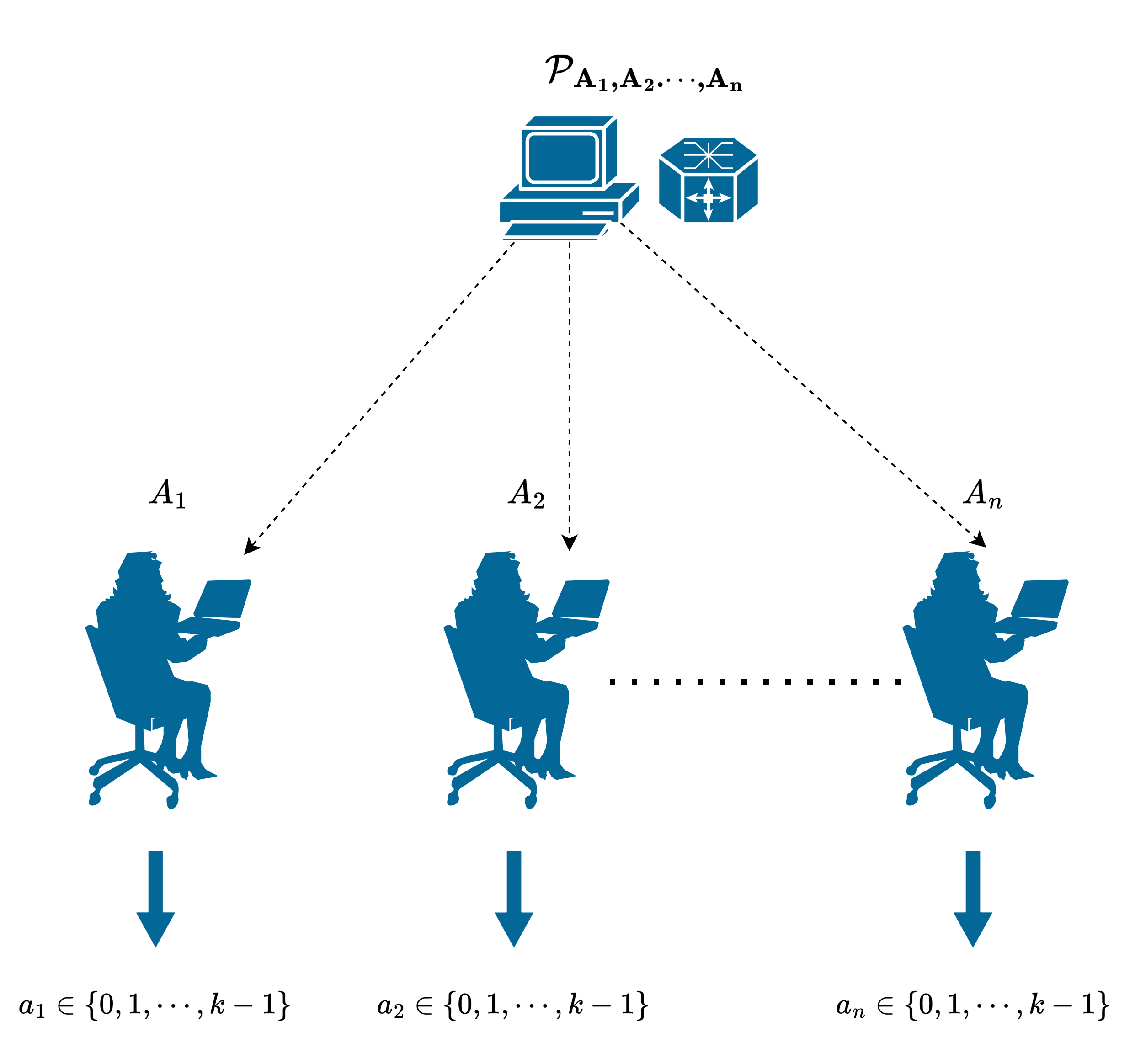}
    \caption{The task $\mathbb{G}~(n,d,k)$ considers $n$ spatially separated parties $A_1$, $A_2$, $\cdots$, $A_n$ who only pre-share a multipartite system, each with local {\it operational dimension} $d$, prepared using ${\map P}_{A_1,A_2,\cdots,A_n}$ and have access to some uncharacterised during device. Using these resources they generate outcomes $a_1$, $a_2$, $\cdots$, $a_n\in\{0,1,\cdots,k-1\}$ respectively. The joint probability of these outcomes is later used to detect qudit non-projective measurements using a set of target correlations or a payoff function.}
    \label{fig:Setupndk}
\end{figure}

\subsubsection{Merit of the task}
Let us consider a case where we define payoff function ${\map R}[\mathbb{G}~(n,d,k)]$ for the task $\mathbb{G}~(n,d,k)$. The parties can obtain a correlation $p(a_1, a_2, \dots, a_n)$ using a multipartite correlated state $\omega_{A_1, A_2,\dots A_n}\in \Omega_d$ with local operational dimension $d$ and a measuring device $\{e^{A_i}_j\}_{j=1}^k\in \mathcal{E}^{A_i}_d$ for $i^{th}$ party. Let the probability rule $\mathcal{P}$ is prescribed by a theory $\mathcal{X}:=(\Omega_d,\mathcal{E}_d,\mathcal{P})$. Then the objective of the parties in the theory $\mathcal{X}$ is to optimise the merit of the task:

\begin{equation}
{\map R}_{max}^{\mathcal{X}}[\mathbb{G}~(n,d,k)]=\max_{\Omega_d,\mathcal{E}_d}{\map R}^{\mathcal{X}}[\mathbb{G}~(n,d,k)]
\end{equation}

where ${\map R}^{\mathcal{X}}[\mathbb{G}~(n,d,k)]$ denotes the payoff obtained using a shared state $\omega_{A_1, A_2,\dots A_n}$ and local effects $\{e^{A_i}_j\}_{j=1}^k$ for the $i^{th}$ party.

\subsubsection{Classical $d$-Bound}
When using a figure of merit ${\map R}[\mathbb{G}~(n,d,k)]$ to detect non-projective measurements we would like to obtain an upper bound on the payoff while using only qudit projective simulable measurements on a pre-shared quantum state. It would involve optimisation over all the shared states with local dimension $d$ and projective simulable measurements. However, consider the following result for this task which we will formally state later (see theorem \ref{theo:classicalcorr}).  The set of quantum correlations generated from projective simulable measurements on a shared quantum system of local dimension $d$ is equal to the correlations generated by sharing a classical system with the same local dimension. Thus, we can equivalently optimise the payoff function over the set of probability distributions $\{P = \{p(a_1, \dots, a_n)\}_{a_1, \dots, a_n = 0}^{k-1}\}$ 
derived from local post-processing of joint distributions where each party has \(d\)-outcomes. A violation of the obtained bound using some quantum 
measurement implies that the correlation generated is outside the classical set. As a result, the
payoff functions help witness the presence of non-projective measurements.  Note that the set of all the joint probability distributions with higher than $d$-outcome obtained using projective simulable measurements is non-convex and therefore this approach would be useful for us.

The state space $\Omega_d$ of the classical systems with local measurement dimension $d$ is a $d^n$-simplex $\mathbb{R}^{d^n}$ and any state $\omega_{A_1,\cdots, A_n}^{Cl}\in\mathbb{R}^{d^n}$ can be defined as $\omega_{A_1,\cdots, A_n}^{Cl}=(\lambda_{i_1,i_2,\cdots,i_n})_{i_1,i_2,\cdots,i_n\in\{0,1,\cdots d-1\}}$ such that $\sum_{i_1,i_2,\cdots,i_n=0}^{d-1} \lambda_{i_1,i_2,\cdots,i_n}=1$ and $\lambda_{i_1,i_2,\cdots,i_n}\ge 0$. For the task $\mathbb{G}~(n,d,k)$, the $n$ parties after obtaining an outcome can each locally apply some stochastic map $S_{d\rightarrow k}^{A_i}:=\left\{s_{l,m}\right\}_{l=1,\dots,k \atop m=1,\dots,d}$ where $s_{l,m}\geq 0~\forall ~l,m$ and $\sum_{l} s_{l,m}=1~\forall~m$.

Now, the projective simulable bound (${\map R}_{max}^{PQ}[\mathbb{G}~(n,d,k)]$) which is also same as classical bound (${\map R}_{max}^{Cl}[\mathbb{G}~(n,d,k)]$) on the figure of merit for the task is:
\begin{equation}
{\map R}_{max}^{PQ}[\mathbb{G}~(n,d,k)]=\max_{\mathbb{R}^{d^n},~\{S_{d\rightarrow k}^{A_i}\}_{i=1}^n}{\map R}^{Cl}[\mathbb{G}~(n,d,k)]
\end{equation}

When the task involves two parties only and the local dimension of the shared classical system is restricted to be two then the above optimisation problem becomes easier to solve for some figure of merits. This is a consequence of the result that we have stated below. It will be used to characterise correlations with more than two outcomes arising from sharing a bipartite classical state with local operational dimension two. 

\begin{lemma}\label{lemma:generator2}(Guha {\it et al.} \cite{Guha2021})
    All the bipartite classically correlated state of local operational dimension two, {\it i.e.} $\omega_{A, B}^{Cl}=(\lambda_{i,j})_{i,j\in\{0,1\}}$ such that $\lambda_{i,j}\geq 0$ and $\sum_{i,j=0}^1\lambda_{i,j}=1$, can be obtained from a two-level classically correlated state $(\lambda,0,0,1-\lambda)$ where $\lambda\in[0,1]$ and applying local stochastic maps. 
\end{lemma}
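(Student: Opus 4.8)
The plan is to recognise the source state $(\lambda,0,0,1-\lambda)$ as a perfectly correlated shared bit: a latent variable $\Lambda$ that takes the value $0$ with probability $\lambda$ and $1$ with probability $1-\lambda$, a copy of which is delivered to each party. Relabelling the target entries $\lambda_{i,j}$ as $\lambda_{a,b}$ with $a$ Alice's output and $b$ Bob's, if the parties apply local stochastic maps $S_A=\{s^A_{a,j}\}$ and $S_B=\{s^B_{b,j}\}$ (each a $2\times2$ column-stochastic matrix), the resulting outcome distribution is
\begin{equation}
\tilde\lambda_{a,b}=\sum_{j=0}^{1} P(\Lambda=j)\, s^A_{a,j}\, s^B_{b,j},
\end{equation}
a convex combination of two product distributions. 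Thus the lemma is equivalent to the statement that every $2\times2$ joint distribution can be written in this form with a single binary latent variable, i.e.\ that the $2\times2$ nonnegative matrix $(\lambda_{a,b})$ admits a nonnegative rank-two factorisation.

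First I would produce an explicit decomposition using the chain rule of probability. Writing Bob's marginal as $c_j:=\lambda_{0,j}+\lambda_{1,j}=P(b=j)$, I would set the free source parameter to $\lambda=c_0$ (so that $1-\lambda=c_1$), let Bob simply copy the shared bit to his output (so $S_B$ is the identity and $b=\Lambda$), and let Alice's map reproduce the conditional distribution $s^A_{a,j}=\lambda_{a,j}/c_j=P(a\mid b=j)$. Substituting into the display then gives $\tilde\lambda_{a,b}=c_b\,(\lambda_{a,b}/c_b)=\lambda_{a,b}$, exactly the target. Both maps are manifestly nonnegative and column-stochastic, and the source has the prescribed form $(\lambda,0,0,1-\lambda)$, so this settles the generic case.

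The only points needing care are degenerate marginals. If $c_0=0$ (or $c_1=0$) the conditional $\lambda_{a,j}/c_j$ is undefined, but then the entire column $j$ of the target vanishes, so Alice's conditional response to that value of $\Lambda$ may be fixed arbitrarily (say uniformly) without altering $\tilde\lambda_{a,b}$; the construction still returns $(\lambda_{a,b})$. I expect no genuine obstacle beyond this bookkeeping: the essential content is that a binary shared bit suffices, which holds precisely because each party's index ranges over only two values, and the chain-rule construction makes the required local stochastic maps explicit. A more structural alternative would simply invoke that any $2\times2$ nonnegative matrix has nonnegative rank at most two, but the conditional-probability construction is self-contained and directly exhibits the maps, so I would present that route.
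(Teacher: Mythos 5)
Your construction is correct: writing $\lambda_{a,b}=P(b)\,P(a\mid b)$, letting the perfectly correlated source bit carry the value of $b$ with $\lambda=P(b=0)$, having Bob output it verbatim and Alice sample from the conditional, reproduces any nonnegative normalised $2\times2$ table, and your handling of a vanishing marginal column is the only degenerate case that needs attention. Note that the paper itself states this lemma as an imported result from Guha \emph{et.\ al.}\ \cite{Guha2021} and gives no proof of it, so there is nothing in the manuscript to compare against; your chain-rule argument is the standard, self-contained justification and fills that gap cleanly.
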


Thus, while optimizing over two-level classically correlated states, without loss of generality, we can assume that the initial shared state is $(\lambda_{0,0}=\lambda,\lambda_{0,1}=0,\lambda_{1,0}=0,\lambda_{1,1}=1-\lambda)$ where $\lambda\in[0,1]$. 

\section{Detecting Qubit Non-projective Measurements}\label{sec:qubit non proj}

In this section, we will first show the equivalence between sharing a classical and quantum system of equal local dimension when only projective simulable measurements are allowed in the task $\mathbb{G}~(n,d,k)$ introduced in the previous section. This will be useful in obtaining the projective simulable bound for the payoff (Sec.\ref{Subsec:QPS bound}) and showing the impossibility of obtaining any correlation in the set of target correlations using projective simulable measurements. Then we will provide schemes for the detection of three (see Sec.\ref{sec:S223}) and four-outcome (see Sec.\ref{sec:S224}) qubit non-projective measurements using a class of target correlations. In each of these subsections, we will first prove the impossibility of producing any of the target correlations using projective simulable qubit measurement.  Subsequently, we will show explicit quantum state and non-projective measurements that yield a subset of target correlations. Furthermore, we will also show the robustness against noise for the detection of qubit non-projective measurements in each of the subsections. Finally in Sec.\ref{sec:S223S}, we will consider a situation where local measurement devices are identical. We show that by sharing even two bipartite systems, each of local operational dimension two, and using them adaptively, it is impossible to obtain certain correlations that arise from a qubit non-projective measurement.

\subsection{Quantum Projective-Simulable $d$-Bound}\label{Subsec:QPS bound}
In the following, we show that the correlations obtained from quantum projective-simulable measurements on systems with local operational dimension $d$ are the same as the correlations that can be obtained from sharing classical systems with operational dimension $d$ for all tasks that we have considered here.

\begin{theo}\label{theo:classicalcorr}
For task $\mathbb{G}~(n,d,k)$, $\mathcal{C}^{Cl}_d (k) = \mathcal{C}^{PQ}_d (k)$. 
\end{theo}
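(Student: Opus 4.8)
The plan is to prove the equality $\mathcal{C}^{Cl}_d(k) = \mathcal{C}^{PQ}_d(k)$ by establishing the two inclusions separately. Both directions rest on the same structural observation: a projective-simulable POVM, by Definition~1, factors through a fixed $d$-outcome projective measurement followed by classical stochastic post-processing, and the outcome statistics of a fixed $d$-outcome projective measurement on a quantum state are reproducible by a classical system of operational dimension $d$.

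First I would show $\mathcal{C}^{Cl}_d(k) \subseteq \mathcal{C}^{PQ}_d(k)$, which is the easier direction. Given a classically correlated $n$-partite state $\omega^{Cl}_{A_1,\dots,A_n} = (\lambda_{i_1,\dots,i_n})$ together with local stochastic maps $S^{A_i}_{d\to k}$, I would embed each classical subsystem into $\mathbb{C}^d$ by identifying the classical alphabet $\{0,\dots,d-1\}$ with a fixed orthonormal basis $\{\ket{j}\}_{j=0}^{d-1}$. The diagonal density operator $\rho = \sum_{i_1,\dots,i_n} \lambda_{i_1,\dots,i_n}\, \ketbra{i_1\cdots i_n}{i_1\cdots i_n}$ is then a legitimate quantum state, and each party performs the computational-basis projective measurement $\{\Pi_j = \ketbra{j}{j}\}$ post-processed by its stochastic map $S^{A_i}_{d\to k}$, which is exactly a projective-simulable POVM with elements $E^{A_i}_l = \sum_j s^{A_i}_{l,j}\Pi_j$. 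A direct computation of $\tr\!\big(\rho\, E^{A_1}_{a_1}\otimes\cdots\otimes E^{A_n}_{a_n}\big)$ recovers the classical correlation, so every element of $\mathcal{C}^{Cl}_d(k)$ lies in $\mathcal{C}^{PQ}_d(k)$.

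Next I would establish the reverse inclusion $\mathcal{C}^{PQ}_d(k) \subseteq \mathcal{C}^{Cl}_d(k)$, where the work lies. Starting from an arbitrary shared quantum state $\rho_{A_1,\dots,A_n}$ and projective-simulable POVMs $E^{A_i}_l = \sum_j P^{A_i}_{l,j}\Pi^{A_i}_j$, the key point is that each party's measurement outcome depends on $\rho$ only through the classical joint distribution of the underlying projective outcomes $j_1,\dots,j_n$, namely $q(j_1,\dots,j_n) = \tr\!\big(\rho\, \Pi^{A_1}_{j_1}\otimes\cdots\otimes\Pi^{A_n}_{j_n}\big)$. This nonnegative, normalized distribution over $\{0,\dots,d-1\}^n$ is precisely a classically correlated state of local operational dimension $d$. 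The final correlation is then obtained from $q$ by applying the local stochastic maps $P^{A_i}$, i.e. $p(a_1,\dots,a_n) = \sum_{j_1,\dots,j_n} \big(\prod_i P^{A_i}_{a_i,j_i}\big)\, q(j_1,\dots,j_n)$, which manifestly exhibits $p$ as a classical state plus local post-processing, placing it in $\mathcal{C}^{Cl}_d(k)$.

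The main obstacle I anticipate is verifying that the intermediate distribution $q(j_1,\dots,j_n)$ genuinely qualifies as a classical state of the correct local operational dimension $d$, rather than merely being some distribution over $d^n$ labels. This requires care because the projective bases $\Pi^{A_i}_j$ may differ from party to party and need not be the computational basis; one must argue that the operational dimension of the classical shadow is controlled by the number of projectors $d$ in each local measurement, not by the dimension of the ambient vector space in which the quantum state lives, and that the tensor-product structure of the measurement guarantees $q$ factors correctly across parties. Once the multilinearity of the trace against the product of local projective-simulable decompositions is spelled out, both inclusions follow, and combining them yields the claimed equality.
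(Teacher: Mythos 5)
Your proposal is correct and follows essentially the same route as the paper's proof: the reverse inclusion is the same trivial embedding of a diagonal state measured in the computational basis, and the forward inclusion is the paper's observation (carried out there as an explicit trace computation showing only the diagonal coefficients $\alpha_{ijij}$ of $\rho$ in the product measurement basis survive) that the correlation factors through the classical distribution $q(j_1,\dots,j_n)=\tr\bigl(\rho\,\Pi^{A_1}_{j_1}\otimes\cdots\otimes\Pi^{A_n}_{j_n}\bigr)$ followed by local stochastic post-processing. No gaps; your concern about the operational dimension of $q$ is resolved exactly as you suggest, since $q$ is a distribution on $\{0,\dots,d-1\}^n$ and the post-processing acts locally.
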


{\it Outline of proof:} This follows from the following observation. For a local system of operational dimension $d$, joint outcome probabilities from any quantum state and projective measurement are diagonal elements of the density matrix for the state in the measurement basis. Thus, the same statistics could also be obtained from a classically correlated (diagonal) state and measurement in the computational basis. The converse is also trivially true. For the detailed proof please see Appendix \ref{app:classicalcorr}.

Consequently, the correlations obtained using projective simulable measurements on quantum systems with bounded local operational dimensions can be equivalently characterised using the correlations obtained from analogous classically correlated systems. We shall use this later to find the projective simulable bound on the figure of merit for some task $\mathbb{G}~(n,k,d)$. This observation leads us to the following.

\begin{cor}\label{cor:non-proj detect}
If there exists a quantum strategy using a multipartite pre-shared quantum state, with local dimension $d$, and a $k$-outcome measurement device such that $\mathcal{R}^{Q}[\mathbb{G}~(n,d,k)]>\mathcal{R}_{max}^{Cl}[\mathbb{G}~(n,d,k)]$, then the measurement device must be non-projective. 
\end{cor}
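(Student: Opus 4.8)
The plan is to prove Corollary~\ref{cor:non-proj detect} by contraposition, leveraging Theorem~\ref{theo:classicalcorr} as the workhorse. Suppose, for contradiction, that the $k$-outcome measurement device used in the quantum strategy is \emph{not} non-projective; by definition this means it is projective-simulable. Then the entire quantum strategy---consisting of the multipartite pre-shared state of local dimension $d$ together with the local projective-simulable measurements---produces a joint correlation that lies, by construction, in the set $\mathcal{C}^{PQ}_d(k)$. Since the payoff $\mathcal{R}$ is a function of the joint probability distribution alone, the value $\mathcal{R}^{Q}[\mathbb{G}~(n,d,k)]$ realised by this strategy is one of the values attainable over correlations in $\mathcal{C}^{PQ}_d(k)$.

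The key step is then to invoke Theorem~\ref{theo:classicalcorr}, which asserts $\mathcal{C}^{Cl}_d(k) = \mathcal{C}^{PQ}_d(k)$. This equality lets me transfer the correlation realised by the (assumed) projective-simulable quantum strategy into the classical set: there exists a classically correlated state of local operational dimension $d$ and local stochastic post-processing reproducing exactly the same distribution, hence exactly the same payoff. Consequently $\mathcal{R}^{Q}[\mathbb{G}~(n,d,k)]$ is bounded above by the supremum of $\mathcal{R}^{Cl}$ over all such classical strategies, which is precisely the definition of $\mathcal{R}_{max}^{Cl}[\mathbb{G}~(n,d,k)]$. This yields $\mathcal{R}^{Q}[\mathbb{G}~(n,d,k)] \le \mathcal{R}_{max}^{Cl}[\mathbb{G}~(n,d,k)]$, contradicting the hypothesis of the corollary that the quantum value \emph{strictly exceeds} the classical bound. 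The contradiction forces the device to be non-projective, as claimed.

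I expect the argument to be essentially immediate once Theorem~\ref{theo:classicalcorr} is available, so the corollary is more of a direct logical consequence than a theorem requiring independent machinery. The one point demanding a little care is ensuring that the optimisation defining $\mathcal{R}_{max}^{Cl}$ genuinely ranges over \emph{all} classical strategies compatible with the task, so that any single realised correlation in $\mathcal{C}^{Cl}_d(k)$ is automatically dominated by the maximum; this is guaranteed by the definition of $\mathcal{R}_{max}^{Cl}$ as a maximum over all shared classical states in $\mathbb{R}^{d^n}$ and all local stochastic maps $S_{d\to k}^{A_i}$. A second subtlety, worth stating explicitly, is that the payoff being a function purely of the joint outcome distribution---and not of the internal device description---is what allows the payoff value to be preserved under the correlation-set equality; since this is how $\mathcal{R}$ was defined in the task $\mathbb{G}~(n,d,k)$, no additional assumption is needed.

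The main (and only real) obstacle is therefore not in the corollary itself but in the supporting Theorem~\ref{theo:classicalcorr}, whose detailed proof is deferred to the appendix; granting that result, the present statement follows by the short contrapositive chain above.
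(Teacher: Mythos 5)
Your proposal is correct and follows essentially the same route as the paper: both arguments reduce to invoking Theorem~\ref{theo:classicalcorr} to identify $\mathcal{C}^{Cl}_d(k)$ with $\mathcal{C}^{PQ}_d(k)$, conclude $\mathcal{R}_{max}^{Cl}=\mathcal{R}_{max}^{PQ}$, and infer that any payoff exceeding this common bound cannot come from a projective-simulable device. Your contrapositive phrasing and the explicit remark that the payoff depends only on the joint distribution are just a slightly more spelled-out version of the paper's three-line proof.
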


\begin{proof}
    For a task $\mathbb{G}~(n,d,k)$, we know from theorem \ref{theo:classicalcorr} we know that $\mathcal{C}^{Cl}_d (k) = \mathcal{C}^{PQ}_d (k)$. Consequently, $\mathcal{R}_{max}^{Cl}[\mathbb{G}~(n,d,k)]=\mathcal{R}_{max}^{PQ}[\mathbb{G}~(n,d,k)]$. Thus, the violation of the classical bound can be used to witness non-projective measurements if the local dimension of the pre-shared quantum system is bounded by $d$.
\end{proof}

Now we will look into certain specific tasks $\mathbb{G}~(n,k,d)$ to detect qubit non-projective measurements. In section \ref{sec:S223} and \ref{sec:S224} the task $\mathbb{G}~(2,2,3)$ and $\mathbb{G}~(2,2,4)$ are respectively used to detect qubit three and four-outcome non-projective measurements. In section \ref{sec:S223S} we use task $\mathbb{G}~(2,2,3)$ with an added constraint of having the two measuring devices to be identical. There we show that certain correlations are qubit non-projective simulable but cannot be obtained using two copies of shared two-qubit state and projective measurement on each copy followed by post-processing of the measurement outcomes.

It is worth noting the following regarding correlations obtained from projective simulable measurements when the measuring devices are assumed to be identical.  In the task $\mathbb{G}~(n,d,k)$ with this constraint on the measuring devices, the set of projective qudit simulable correlations is also the same as the classically simulable correlations generated from identical post-processing on a shared classical state.   Even if identical projective simulable measurements are performed on a shared $d$ dimensional quantum system, it is equivalent to assuming that only the post-processing of the outcomes from a projective measurement is identical for both parties. The shared state and measurement would change accordingly to keep the probability distribution of the outcome invariant. Using this observation and the proof of theorem \ref{theo:classicalcorr} provided in appendix \ref{app:classicalcorr} it is easy to see the following. The correlations generated using identical projective measurements on a shared quantum system of local dimension $d$ are the same as the correlations generated using $d$-level shared classical state and identical post-processing of measurement outcomes.

\subsection{Detecting $3$-outcome qubit non-projective Measurements: $\mathbb{G}~(2,2,3)$}\label{sec:S223}
For the task $\mathbb{G}~(2,2,3)$ we consider two parties Alice and Bob. ${\map P}_{AB}$ be a preparation device that prepares a bipartite state, say $\rho_{AB}$, with local operational dimension $2$ and distributes them between these two parties. Their respective measuring devices ${\map M}_A=\{E_i\}_{i=0}^2$ and ${\map M}_B=\{F_i\}_{i=0}^2$, which can yield $3$-outcomes each, outputs $a\in\{0,1,2\}$ and $b\in\{0,1,2\}$ respectively. Using this bipartite state and the measuring devices the parties observe joint probability of outcomes $\{p(a,b\}_{a,b}$. Here, $p(a,b)= Tr(\rho_{AB}(E_a\otimes F_b))$.  Now, we consider the following figure of merit:

\begin{equation}
{\map R}[\mathbb{G}~(2,2,3)]=\min_{a,b \atop b\ne a} p(a,b)
\end{equation}

The maximum value of the figure of merit is obtained when the probability of all correlated events is zero, {\it i.e.} $p(a,b)=0$ if $a=b$, and the anti-correlated events are equiprobable, {\it i.e.} $p(a,b)=\frac{1}{6}$ if $a\neq b$. For any other distribution, the least probable anti-correlated event will have a probability lower than $\frac{1}{6}$. Thus, the maximum payoff that can be obtained using any shared state and measurements for this task is $\frac{1}{6}$. Using this figure of merit, we will now state the following result that would be useful later:

\begin{theo}\label{theo:SRG1}
Given any pre-shared bipartite state with local operational dimension $2$, the maximum payoff using projective simulable measurements $${\map R}^{PQ}_{max}[\mathbb{G}~(2,2,3)]=\frac{1}{8}< {\map R}^{Q}_{max}[\mathbb{G}~(2,2,3)]=\frac{1}{6}.$$
\end{theo}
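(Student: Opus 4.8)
The plan is to prove the two halves of the statement separately: that the quantum value equals $\frac{1}{6}$ and is attained, and that every projective-simulable strategy (equivalently classical, by Theorem~\ref{theo:classicalcorr}) stays strictly below it. I would begin by pinning down a universal upper bound valid in any theory. For \emph{any} joint distribution $\{p(a,b)\}$ the six off-diagonal entries are nonnegative and sum to at most $1$, so $\min_{a\ne b}p(a,b)\le\frac16$; consequently no theory can exceed $\frac16$, and equality forces the extremal distribution $p(a,a)=0$ for all $a$ and $p(a,b)=\frac16$ for $a\ne b$. This simultaneously caps $\mathcal{R}^{Q}_{max}$ and identifies the \emph{unique} correlation that saturates the bound.

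To show the quantum value reaches $\frac16$, I would exhibit an explicit strategy: the parties share the singlet $\ket{\psi^-}$ and both apply the symmetric trine POVM $E_a=\frac23\proj{\psi_a}$ with $\proj{\psi_a}=\frac12(\mathbb{I}+\hat{n}_a\cdot\vec{\sigma})$, where the three Bloch vectors $\hat{n}_a$ are coplanar at $120^\circ$, so that $\sum_a\hat{n}_a=0$ guarantees $\sum_a E_a=\mathbb{I}$. Using the singlet correlation $\langle\psi^-|(\proj{\psi_a}\otimes\proj{\psi_b})|\psi^-\rangle=\frac14(1-\hat{n}_a\cdot\hat{n}_b)$ yields $p(a,b)=\frac19(1-\hat{n}_a\cdot\hat{n}_b)$, which is $0$ for $a=b$ and, since $\hat{n}_a\cdot\hat{n}_b=-\frac12$, equals $\frac16$ for $a\ne b$. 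This realizes the extremal distribution exactly, establishing $\mathcal{R}^{Q}_{max}=\frac16$.

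For the projective-simulable side I would invoke Theorem~\ref{theo:classicalcorr} (cf.\ Corollary~\ref{cor:non-proj detect}), which gives $\mathcal{C}^{PQ}_2(3)=\mathcal{C}^{Cl}_2(3)$ and hence $\mathcal{R}^{PQ}_{max}=\mathcal{R}^{Cl}_{max}$, together with Lemma~\ref{lemma:generator2} to write any admissible correlation as a two-branch mixture $p(a,b)=\lambda\,\alpha_a\beta_b+(1-\lambda)\alpha'_a\beta'_b$, with $\lambda\in[0,1]$ and $\alpha,\alpha',\beta,\beta'$ probability vectors over $\{0,1,2\}$ coming from the local stochastic post-processing. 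Since the feasible set is compact and the objective continuous, the classical maximum is attained, so it suffices to rule out the value $\frac16$. Assuming $\mathcal{R}^{Cl}=\frac16$ forces, by the upper-bound argument, the exact extremal distribution, in particular $p(a,a)=0$ for every $a$. For $0<\lambda<1$ both branches must then vanish on the diagonal, so each branch is a product supported on a rectangle $S\times T$ with $S\cap T=\emptyset$ in $\{0,1,2\}$; disjointness gives $|S|+|T|\le3$ and hence $|S|\cdot|T|\le2$, so each branch makes at most two off-diagonal cells positive and the two branches cover at most four. The degenerate weights $\lambda\in\{0,1\}$ give a single product covering at most two. Either way strictly fewer than the six off-diagonal entries can be positive, contradicting $p(a,b)=\frac16>0$ for all $a\ne b$. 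Therefore $\mathcal{R}^{PQ}_{max}=\mathcal{R}^{Cl}_{max}<\frac16$.

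The main obstacle is the \emph{strictness}: because the bound $\frac16$ is never reached classically, one cannot merely compare attained optima but must show that the unique distribution saturating the universal bound is classically infeasible. The support-counting argument above is the crux supplying this, and the only care required is the separate handling of the degenerate branch weights $\lambda\in\{0,1\}$ and the compactness remark, which is what upgrades ``not equal to $\frac16$'' to ``strictly less than $\frac16$.''
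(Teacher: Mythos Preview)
Your proof is correct and follows the same overall architecture as the paper's: reduce the projective-simulable problem to the classical one via Theorem~\ref{theo:classicalcorr}, normalise the shared state via Lemma~\ref{lemma:generator2}, exhibit the singlet with the trine POVM for the quantum value $\tfrac16$, and argue that no classical strategy can reach $\tfrac16$. The difference is that the paper's proof simply invokes Theorems~1 and~2 of \cite{Guha2021} for both the strict classical bound and the quantum optimality, whereas you supply self-contained arguments for each: the universal $\tfrac16$ upper bound from averaging the six off-diagonal entries, and the support-counting observation that a two-branch product mixture with vanishing diagonal can make at most $2+2=4$ off-diagonal cells positive. Your explicit compactness remark is also a nice touch, since it is what turns ``$\tfrac16$ is not attained'' into the strict inequality the theorem asserts; the paper leaves this implicit in the cited reference. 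So the route is the same, but your version is more elementary and does not rely on the external citation.
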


\begin{proof}
    From theorem \ref{theo:classicalcorr}, any correlation that can be obtained using projective simulable measurement on a bipartite qubit state can also be obtained from a two-level classically correlated system. Also from lemma \ref{lemma:generator2} {\it w.l.o.g.}, we can assume that the initial shared classically correlated state is $\omega_{A,B}^{Cl}=(\lambda_{0,0}=\lambda,\lambda_{0,1}=0,\lambda_{1,0}=0,\lambda_{1,1}=1-\lambda)$ where $\lambda\in[0,1]$. From theorem $1$ and $2$ in \cite{Guha2021}, we know that the payoff obtained using any two-level classically correlated state is sub-optimal and bounded by $\frac{1}{8}$. The violation of this bound can be used to detect qubit non-projective measurement. The quantum state and measurement that leads to the optimum payoff are shared singlet $\ket{\psi^-}_{AB}=\frac{1}{\sqrt{2}}(\ket{01}_{AB}-\ket{10}_{AB})$ and {\it trine} measurement ${\map M}_A={\map M}_B=\{\frac{2}{3}\Pi_{\alpha}=\ket{\psi_{\alpha}}\bra{\psi_{\alpha}}:\ket{\psi_{\alpha}}=\cos[\frac{2\pi}{3}\alpha]\ket{0}+  \sin[\frac{2\pi}{3}\alpha]\ket{1}, \alpha\in\{0,1,2\} \}$.  
\end{proof}

 Even when considering white noise acting on the shared bipartite state, {\it i.e.} $p\ket{\psi^-}\bra{\psi^-}+(1-p)\frac{\mathbb{I}}{2}\otimes\frac{\mathbb{I}}{2}$, the payoff obtained from the same measurement setting is higher than classical payoff when $p>\frac{1}{4}$. Thus, using the results from \cite{Guha2021} one can detect three-outcome qubit non-projective measurements whenever they violate the projective simulable bound. A similar result to detect four-outcome qubit non-projective measurement can be obtained for a bipartite task $\mathbb{G}~(2,2,4)$ while considering a non-convex figure of merit as in \cite{Guha2021}.

Instead of considering a figure of merit discussed above for the task $\mathbb{G}~(2,2,3)$, we can consider specific target correlations that need to be simulated as a winning condition. Here we will consider two different such sets of target correlations for the task $\mathbb{G}~(2,2,3)$. The first set of target correlation $T_1[2,2,3]=\{\{p(a,b)\}_{a,b=0}^2\}$ is of the following form 
 \begin{equation}\label{eq:corr t1 G223}
    p(a,b)=\begin{cases}
        & 0 ~~~~~~~~~~~~\text{ for } a=b\\
        & \alpha_{a,b}(>0) \text{ for } a\neq b
    \end{cases}
 \end{equation}
 The second set of target correlation $T_2[2,2,3]=\{\{p(a,b)\}_{a,b=0}^2\}$ for the task $\mathbb{G}~(2,2,3)$ is of the following form 
 \begin{equation}
    p(a,b)=\begin{cases}\label{eq:corr t2 G223}
        & \alpha \text{ for } a=b\\
        & \beta \text{ for } a\neq b
    \end{cases} \text{ where } \alpha\neq \beta
 \end{equation}
 In other words, we require that the correlated events have the same probability and the anti-correlated events have the same probability.
 
 For this correlation-simulation task, in the following, we show that it is impossible to obtain these kinds of correlations while using any qubit projective simulable measurements. Also, we will show that these target correlations can be used to detect three-outcome non-projective measurements. 
 \subsubsection{Simulability of target correlation $T_1[2,2,3]$}\label{sec: 223 t1}
 \begin{theo}\label{theo:classical223 noiseless}
    For the task $\mathbb{G}~(2,2,3)$, $$ \mathcal{C}^{PQ}_2 (3)\bigcap T_1[2,2,3]=\varnothing.$$ 
\end{theo}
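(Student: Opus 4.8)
The plan is to collapse the problem to a purely combinatorial statement about classical correlations and then close it with a counting bound. First I would invoke Theorem~\ref{theo:classicalcorr} to replace $\mathcal{C}^{PQ}_2(3)$ by $\mathcal{C}^{Cl}_2(3)$, so that it suffices to rule out any member of $T_1[2,2,3]$ produced by a shared bipartite classical system of local operational dimension two together with local stochastic post-processing. By Lemma~\ref{lemma:generator2} I may take the shared state to be the perfectly correlated $(\lambda,0,0,1-\lambda)$ with $\lambda\in[0,1]$, which sends the common label $0$ to Alice and Bob with probability $\lambda$ and the common label $1$ with probability $1-\lambda$. Writing Alice's two output distributions (conditioned on receiving $0$ or $1$) as probability vectors $u,v\in\R^3$ and Bob's as $w,x\in\R^3$, every achievable correlation takes the product-mixture form
\begin{equation}
p(a,b)=\lambda\,u_a w_b+(1-\lambda)\,v_a x_b .
\end{equation}

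Next I would extract the support constraints forced by the diagonal condition $p(a,a)=0$. Since all terms are nonnegative, $p(a,a)=0$ requires $\lambda\,u_a w_a=0$ and $(1-\lambda)\,v_a x_a=0$ for every $a$. Hence whenever $\lambda>0$ the supports of $u$ and $w$ are disjoint, and whenever $\lambda<1$ the supports of $v$ and $x$ are disjoint. Introducing the two index rectangles $R_0=\{(a,b):u_a w_b>0\}$ and $R_1=\{(a,b):v_a x_b>0\}$, the off-diagonal positivity requirement $p(a,b)>0$ for $a\neq b$ means that each of the six off-diagonal pairs of $\{0,1,2\}^2$ must lie in $R_0$ (when $\lambda>0$) or in $R_1$ (when $\lambda<1$).

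The crux is then a cardinality bound. Each rectangle $R_i=A_i\times B_i$ has $A_i\cap B_i=\varnothing$ by the disjoint-support conclusion above (discarding the inactive rectangle in the boundary cases $\lambda\in\{0,1\}$), so $|A_i|+|B_i|\le 3$ with both factors at least $1$, forcing $|R_i|=|A_i|\,|B_i|\le 2$, the maximum being attained only for the $1\times 2$ or $2\times 1$ shapes. Consequently $R_0\cup R_1$ covers at most four off-diagonal pairs, whereas $\{0,1,2\}^2$ has six. Thus at least one off-diagonal pair $(a,b)$ lies in neither rectangle, giving $p(a,b)=0$ and contradicting the defining property $p(a,b)>0$ of $T_1[2,2,3]$. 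This yields the empty intersection.

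I expect the only delicate point to be bookkeeping rather than substance: verifying that the reduction through Lemma~\ref{lemma:generator2} legitimately absorbs all local randomness into the stochastic maps $u,v,w,x$, and treating the boundary cases $\lambda\in\{0,1\}$ uniformly (where a single rectangle is active and the bound drops from four to two, sharpening the contradiction). The combinatorial heart---six off-diagonal pairs cannot be covered by two disjoint-support rectangles each of size at most two---is what ultimately drives the impossibility and is entirely elementary.
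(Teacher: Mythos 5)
Your proof is correct, and its reduction steps coincide exactly with the paper's: Theorem~\ref{theo:classicalcorr} to pass from $\mathcal{C}^{PQ}_2(3)$ to $\mathcal{C}^{Cl}_2(3)$, then Lemma~\ref{lemma:generator2} to fix the shared state as $(\lambda,0,0,1-\lambda)$. Where you diverge is in the final impossibility step. The paper simply cites Lemmas 1 and 2 of Guha \emph{et al.}~\cite{Guha2021}, which assert that no correlation with vanishing diagonal and strictly positive off-diagonal entries on more than two outcomes is achievable from a shared two-level classical system; you instead prove this from scratch. Your argument --- that $p(a,a)=0$ for all $a$ forces $\mathrm{supp}(u)\cap\mathrm{supp}(w)=\varnothing$ whenever $\lambda>0$ (and likewise for the second branch), so that each active rectangle $A_i\times B_i$ built from disjoint nonempty subsets of $\{0,1,2\}$ has size $|A_i|\,|B_i|\le 2$, and two such rectangles cannot cover the six off-diagonal cells --- is a clean, self-contained replacement for the external citation. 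This buys a fully elementary and more transparent proof, and it transfers verbatim to the four-outcome analogue (Theorem~\ref{theo:classical224 noiseless}), where disjoint subsets of $\{0,1,2,3\}$ give rectangles of size at most four, covering at most eight of the twelve off-diagonal cells. Your handling of the boundary cases $\lambda\in\{0,1\}$ is also correct: there only one rectangle is active and the covering deficit only grows. The one bookkeeping point you flag --- absorbing the parties' local randomness into the stochastic maps $u,v,w,x$ --- is indeed harmless, since a convex mixture of stochastic maps is again a stochastic map.
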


\begin{proof}
    From theorem \ref{theo:classicalcorr}, we know that correlations obtained using qubit projective simulable measurements can also be obtained from a two-level classically correlated system. From lemma \ref{lemma:generator2}, {\it w.l.o.g.}, the initially shared classically correlated state be $\omega_{A,B}^{Cl}=(\lambda_{0,0}=\lambda,\lambda_{0,1}=0,\lambda_{1,0}=0,\lambda_{1,1}=1-\lambda)$ where $\lambda\in[0,1]$. Now using lemma $1$ and $2$ \footnote{For $N>2$, any correlation $\{p(a,b)\}_{a,b=0}^{N}$ such that $p(a,a)=0~\forall a\in\{0,1,\cdots N\}$ and $p(a,b)>0~\forall a,b(\neq a)\in\{0,1,\cdots N\}$ cannot be simulated by sharing only bipartite classical system of local dimension $2$.} in \cite{Guha2021}, it is straightforward to see that the correlations shown in eq.(\ref{eq:corr t1 G223}) cannot be generated using $\omega_{A,B}^{Cl}$ and performing a local stochastic operation by Alice and Bob.  
\end{proof}

\begin{theo}\label{theo:quant223 noiseless}
 For the task $\mathbb{G}~(2,2,3)$,
$$\mathcal{C}^{Q}_2 (3)\bigcap T_1[2,2,3] \neq \varnothing
$$
\end{theo}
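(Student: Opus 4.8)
The plan is to prove the statement by explicit construction, exhibiting a single point in the intersection and thereby certifying that it is nonempty. I would reuse exactly the state and measurement already identified in Theorem~\ref{theo:SRG1}: take the shared state to be the singlet $\ket{\psi^-}_{AB}=\frac{1}{\sqrt2}(\ket{01}_{AB}-\ket{10}_{AB})$ and let both parties apply the trine POVM ${\map M}_A={\map M}_B=\{E_\alpha=\tfrac{2}{3}\ket{\psi_\alpha}\bra{\psi_\alpha}\}_{\alpha=0}^2$ with $\ket{\psi_\alpha}=\cos[\tfrac{2\pi}{3}\alpha]\ket{0}+\sin[\tfrac{2\pi}{3}\alpha]\ket{1}$. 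A preliminary sanity check is that this is a legitimate qubit POVM: because the three trine vectors form a symmetric (tight-frame) configuration in the real plane, $\sum_\alpha\ket{\psi_\alpha}\bra{\psi_\alpha}=\tfrac{3}{2}\mathbb{I}$, so $\sum_\alpha E_\alpha=\mathbb{I}$. This places the resulting correlation squarely in $\mathcal{C}^{Q}_2(3)$, since the state has local operational dimension two and each device is a $3$-outcome qubit measurement.

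The core step is the evaluation of the joint probabilities $p(a,b)=\tr[(E_a\otimes E_b)\ket{\psi^-}\bra{\psi^-}]$. I would use the elementary singlet overlap identity $|(\bra{\phi}\otimes\bra{\chi})\ket{\psi^-}|^2=\tfrac{1}{2}(1-|\braket{\phi|\chi}|^2)$, valid for any two normalized qubit vectors $\ket{\phi},\ket{\chi}$; it follows immediately from writing the amplitude as the $2\times2$ determinant $\tfrac{1}{\sqrt2}(\phi_0\chi_1-\phi_1\chi_0)$ and noting that $|\det|^2=\det(M^\dagger M)=1-|\braket{\phi|\chi}|^2$ for the matrix $M$ whose columns are $\ket{\phi}$ and $\ket{\chi}$. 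Substituting the trine vectors and using $\braket{\psi_a|\psi_b}=\cos[\tfrac{2\pi}{3}(a-b)]$ then yields
\begin{equation}
p(a,b)=\frac{2}{9}\left(1-\cos^2\!\left[\tfrac{2\pi}{3}(a-b)\right]\right).
\end{equation}

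Finally I would read off the two cases. On the diagonal $a=b$ one has $\cos^2 0=1$, hence $p(a,a)=0$; for $a\neq b$ one has $|a-b|\in\{1,2\}$ with $\cos^2[\tfrac{2\pi}{3}]=\cos^2[\tfrac{4\pi}{3}]=\tfrac{1}{4}$, giving $p(a,b)=\tfrac{1}{6}>0$ (and $\sum_{a,b}p(a,b)=6\cdot\tfrac{1}{6}=1$, as required). This distribution satisfies precisely the defining conditions of the target set $T_1[2,2,3]$ in Eq.~\eqref{eq:corr t1 G223}, namely vanishing on the diagonal and strictly positive off it, so it is a member of $\mathcal{C}^{Q}_2(3)\cap T_1[2,2,3]$, which is therefore nonempty. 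I do not expect any genuine obstacle, as the claim is purely existential; the only points needing care are the POVM normalization and the overlap identity, both routine. It is worth noting that the off-diagonal value $\tfrac{1}{6}$ coincides with the optimal payoff of Theorem~\ref{theo:SRG1}, so this single construction simultaneously witnesses both results.
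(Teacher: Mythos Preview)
Your proof is correct and follows essentially the same construction as the paper: share the singlet and apply identical rank-one three-outcome qubit POVMs, so that $p(a,a)=0$ while all off-diagonal probabilities are strictly positive. The only cosmetic difference is that the paper phrases it for an arbitrary extremal three-outcome POVM in Bloch form, obtaining $p(a,b)=\lambda_a\lambda_b(1-\hat n_a\cdot\hat n_b)$, whereas you work out the specific trine instance via the singlet overlap identity; your version is a special case of theirs and suffices for the existential claim.
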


\begin{proof}
    Let Alice and Bob pre-share the two qubit singlet state $\ket{\psi^-}_{AB}=\frac{1}{\sqrt{2}}(\ket{01}_{AB}-\ket{10}_{AB})$. Both perform the three-outcome extremal qubit POVM ${\map M}_A={\map M}_B=\{E_i=\lambda_i(\mathbb{I}+\hat{n}_i\cdot \vec{\sigma})\}_{i=0}^2$ where $\lambda_i>0$, $\sum_i \lambda_i=1$, $\vec{\sigma}=(\sigma_1,\sigma_2,\sigma_3)$ is a vector of Pauli matrices, $\hat{n}_i\in \mathbb{R}^3$, $|\hat{n}_i|^2=1$ (unit vector), $\sum_i \lambda_i \hat{n}_i=0$ and $\hat{n}_i\neq \hat{n}_j~\forall i,j$. The correlation obtained from the state and the measurement is $p(a,b)=\lambda_a\lambda_b(1-\hat{n}_a\cdot\hat{n}_b)$. This correlation is of the form described in eq.(\ref{eq:corr t1 G223}). 
\end{proof}

As a consequence of theorem \ref{theo:classical223 noiseless} and \ref{theo:quant223 noiseless}, the correlations in $T_1[2,2,3] $ can be used to detect three-outcome qubit non-projective measurements. However, this detection scheme can be used in the absence of any noise. Next, we will discuss a detection scheme that can also be used in the presence of arbitrary noise. 

 \subsubsection{Simulability of target correlation $T_2[2,2,3]$}\label{sec: 223 t2}

\begin{theo}\label{theo:classical223}
    For the task $\mathbb{G}~(2,2,3)$, $$ \mathcal{C}^{PQ}_2 (3)\cap T_2[2,2,3]=\varnothing.$$ 
\end{theo}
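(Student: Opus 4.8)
The plan is to reduce the problem to classical correlations and then exploit a rank constraint, following the same opening moves as the proof of Theorem~\ref{theo:classical223 noiseless}. First I would invoke Theorem~\ref{theo:classicalcorr} to replace $\mathcal{C}^{PQ}_2(3)$ by the classically simulable set $\mathcal{C}^{Cl}_2(3)$, and then apply Lemma~\ref{lemma:generator2} so that it suffices to consider the canonical shared state $\omega_{A,B}^{Cl}=(\lambda_{0,0}=\lambda,\lambda_{0,1}=0,\lambda_{1,0}=0,\lambda_{1,1}=1-\lambda)$ with $\lambda\in[0,1]$, followed by local stochastic post-processings $S^A_{2\to3}$ and $S^B_{2\to3}$. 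Writing Alice's two output distributions as vectors $\mathbf{x}=(x_0,x_1,x_2)^{T}$ and $\mathbf{y}=(y_0,y_1,y_2)^{T}$ (her responses to local values $0$ and $1$) and Bob's as $\mathbf{u},\mathbf{v}$, the most general achievable correlation is
\begin{equation}
p(a,b)=\lambda\, x_a u_b+(1-\lambda)\, y_a v_b .
\end{equation}

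The key structural observation is that, viewed as a $3\times3$ matrix $P=[p(a,b)]$, this is a sum of two outer products, $P=\lambda\,\mathbf{x}\mathbf{u}^{T}+(1-\lambda)\,\mathbf{y}\mathbf{v}^{T}$, so $\mathrm{rank}(P)\le 2$. Hence every correlation in $\mathcal{C}^{PQ}_2(3)$ is represented by a rank-deficient matrix, and the whole theorem reduces to showing that no element of $T_2[2,2,3]$ is rank-deficient.

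To close the argument I would compute the rank of the target directly. Any $P\in T_2[2,2,3]$ has the form $P=\beta J+(\alpha-\beta)I$, where $J$ is the all-ones matrix and $I$ the identity. Since $J$ has eigenvalues $3$ and $0$ (the latter twice), $P$ has eigenvalue $\alpha+2\beta$ (once) and $\alpha-\beta$ (twice), giving
\begin{equation}
\det P=(\alpha+2\beta)(\alpha-\beta)^{2}.
\end{equation}
Normalisation of the distribution forces $3\alpha+6\beta=1$, i.e. $\alpha+2\beta=\tfrac13\neq0$, while the defining constraint $\alpha\neq\beta$ gives $\alpha-\beta\neq0$. Therefore $\det P=\tfrac13(\alpha-\beta)^{2}>0$ and $P$ has full rank $3$. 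A full-rank matrix cannot equal a matrix of rank at most $2$, so $T_2[2,2,3]\cap\mathcal{C}^{PQ}_2(3)=\varnothing$.

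I expect no serious obstacle: once the reduction to the canonical classical state is in place, the statement collapses to a rank mismatch between the rank-$\le2$ achievable matrices and the full-rank target. The only point needing care is verifying that \emph{both} eigenvalues of the target are nonzero, which is precisely where the assumption $\alpha\neq\beta$ (together with normalisation) enters; note that if instead $\alpha=\beta$ the target degenerates to the uniform, rank-$1$ correlation $p(a,b)=\tfrac19$, which is trivially classically simulable, confirming that the constraint $\alpha\neq\beta$ is exactly what is needed.
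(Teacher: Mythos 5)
Your proposal is correct, and it reaches the conclusion by a genuinely different route from the paper. The paper, after the same reduction via Theorem~\ref{theo:classicalcorr} and Lemma~\ref{lemma:generator2}, proceeds by explicit elimination: it derives that both marginals must be uniform, uses this to express $s_{l1},s'_{l1}$ in terms of $s_{l0},s'_{l0}$ and $\lambda$, and then checks that the resulting system of equations \ref{eq: classical223 4}--\ref{eq: classical223 5} has no solution under $p(0,0)\neq p(0,1)$ — a computation whose final step is asserted rather than displayed. You instead observe that the achievable correlation matrix $P=\lambda\,\mathbf{x}\mathbf{u}^{T}+(1-\lambda)\,\mathbf{y}\mathbf{v}^{T}$ has column space contained in $\mathrm{span}\{\mathbf{x},\mathbf{y}\}$, hence rank at most $2$, while every element of $T_2[2,2,3]$ equals $\beta J+(\alpha-\beta)I$ with $\det P=\tfrac13(\alpha-\beta)^2>0$, hence rank $3$; the eigenvalue computation is correct and the normalisation constraint $\alpha+2\beta=\tfrac13$ is exactly what rules out the first eigenvalue vanishing. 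Your argument buys several things: it is shorter and fully explicit; it makes transparent that $\alpha\neq\beta$ is precisely the obstruction (as you note, $\alpha=\beta$ gives the rank-one uniform distribution); it does not actually need Lemma~\ref{lemma:generator2}, since for an arbitrary two-level classical state the column space of $P$ is still spanned by Alice's two response vectors; and it transfers verbatim to Theorem~\ref{theo:classical224}, where $\beta J+(\alpha-\beta)I$ is a $4\times4$ matrix of full rank whenever $\alpha\neq\beta$, again exceeding the rank-$2$ bound. The paper's elimination approach, by contrast, is uniform with the neighbouring proofs in that section and in principle also characterises which correlations \emph{are} reachable, but as a proof of emptiness it is less economical than yours.
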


\begin{proof}
From theorem \ref{theo:classicalcorr}, all qubit projective simulable correlations can also be obtained from a two-level classically correlated system. Also from lemma \ref{lemma:generator2} {\it w.l.o.g.}, the initial shared classically correlated state be $\omega_{A,B}^{Cl}=(\lambda_{0,0}=\lambda,\lambda_{0,1}=0,\lambda_{1,0}=0,\lambda_{1,1}=1-\lambda)$ where $\lambda\in[0,1]$. A general local stochastic map for Alice is given by 

\begin{equation}
S^A_{2\rightarrow 3}:=(s_{lm})_{l=0,1,2\atop m=0,1}
\end{equation}

such that 
\begin{equation}\label{eq: classical223 1}
\sum_{l=0}^2 s_{lm}=1~\forall~m\in \{{0,1}\}\text{ and } s_{lm}\ge 0
\end{equation}
Similarly, a general local stochastic map for Bob will be 
\begin{equation}
S^B_{2\rightarrow 3}:=(s'_{lm})_{l=0,1,2\atop m=0,1}
\end{equation}

such that 
\begin{equation}\label{eq: classical223 2}
\sum_{l=0}^2 s'_{lm}=1 ~\forall~m\in \{{0,1}\} \text{ and } s'_{lm}\ge 0
\end{equation}

Now, the correlations obtained by them are given by 
\begin{equation}\label{eq: classical223 3}
P=(S_{2\rightarrow 3}^{A}\otimes S_{2\rightarrow 3}^{B})(\omega_{A,B}^{Cl})^T:= (p(a,b))_{a,b=0,1,2}
\end{equation}

To check if we can obtain any correlation in $T_2[2,2,3]$ using the two-level classically correlated system, we need to solve the following set of equations.

\begin{equation}\label{eq: classical223 4}
    p(0,0)=p(1,1)=p(2,2)\ne \frac{1}{9}
\end{equation}
\begin{equation}\label{eq: classical223 5}
p(0,1)=p(0,2)=p(1,0)=p(1,2)=p(2,0)=p(2,1)
\end{equation}

From eq.(\ref{eq: classical223 4}) and (\ref{eq: classical223 5}) we obtain that all the marginal correlation for Alice $p(a)=\frac{1}{3}~\forall a\{0,1,2\}$. Similarly, from eq.(\ref{eq: classical223 4}) and (\ref{eq: classical223 5}) we obtain that all the marginal correlation for Bob $p(b)=\frac{1}{3}\forall b\{0,1,2\}$. Using this with eq.(\ref{eq: classical223 1}, \ref{eq: classical223 2}, \ref{eq: classical223 3}) we obtain the following:

\begin{equation}
    s_{01}= \frac{1}{(1 - \lambda)} (\frac{1}{3} - \lambda s_{00}) \text{ and }  s_{11}= \frac{1}{(1 - \lambda)} (\frac{1}{3} - \lambda s_{10})
\end{equation}

\begin{equation}
    s'_{01}= \frac{1}{(1 - \lambda)} (\frac{1}{3} - \lambda s'_{00}) \text{ and }  s'_{11}= \frac{1}{(1 - \lambda)} (\frac{1}{3} - \lambda s'_{10})
\end{equation} 

Upon substituting the above and solving the linear set of eq. (\ref{eq: classical223 4}) and  (\ref{eq: classical223 5}) under the constraint given in eq.(\ref{eq: classical223 1}) and (\ref{eq: classical223 2}), assuming that either $p(0,0)>p(0,1)$ or $p(0,0)<p(0,1)$, we get that there is no solution to the system of equations.
\end{proof}

\begin{theo}\label{theo:quant223}
 For the task $\mathbb{G}~(2,2,3)$,
$$\varnothing\neq\mathcal{C}^{Q}_2 (3)\bigcap T_2[2,2,3]\subsetneq T_2[2,2,3]$$
\end{theo}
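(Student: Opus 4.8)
The plan is to establish the two set-relations separately. Note first that normalisation $\sum_{a,b}p(a,b)=1$ forces $3\alpha+6\beta=1$ for any element of $T_2[2,2,3]$, so $T_2[2,2,3]$ is the one-parameter family $\{(\alpha,\beta):\beta=(1-3\alpha)/6,\ \alpha\in[0,1/3],\ \alpha\neq 1/9\}$, and every such correlation has uniform marginals $p(a)=p(b)=\alpha+2\beta=1/3$. The non-emptiness $\mathcal{C}^{Q}_2(3)\cap T_2[2,2,3]\neq\varnothing$ is the easy inclusion; the proper containment $\subsetneq$ is where the work lies.

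For non-emptiness I would reuse the construction of Theorem \ref{theo:quant223 noiseless}: take the singlet $\ket{\psi^-}_{AB}$ and let both parties measure the symmetric extremal ``trine'' POVM with $\lambda_a=\tfrac13$ and coplanar Bloch vectors at mutual angle $2\pi/3$, so $\hat n_a\cdot\hat n_b=1$ for $a=b$ and $-\tfrac12$ otherwise. The formula $p(a,b)=\lambda_a\lambda_b(1-\hat n_a\cdot\hat n_b)$ then gives $p(a,a)=0$ and $p(a,b)=\tfrac16$ for $a\neq b$, i.e.\ the point $(\alpha,\beta)=(0,\tfrac16)\in T_2[2,2,3]$. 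To exhibit interior points (and to anticipate the noise-robustness claim), I would run the same measurement on the depolarised singlet $\rho_p=p\proj{\psi^-}+(1-p)\tfrac{\mathbb I}{2}\otimes\tfrac{\mathbb I}{2}$; using $\Tr[E_a]=2\lambda_a$ one finds $\alpha=(1-p)/9$ and $\beta=p/6+(1-p)/9$, so that $\beta-\alpha=p/6>0$ for every $p\in(0,1]$. Hence an entire open segment of $T_2[2,2,3]$ lies in $\mathcal{C}^{Q}_2(3)$.

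For the proper containment it suffices to produce a single element of $T_2[2,2,3]$ outside $\mathcal{C}^{Q}_2(3)$. The natural candidate is the perfectly correlated point $\beta=0$, which normalisation pins down to $(\alpha,\beta)=(\tfrac13,0)$, i.e.\ $p(a,b)=\delta_{ab}/3$. I would argue its impossibility from operational dimension. Suppose some bipartite qubit state $\rho_{AB}$ and POVMs $\{E_a\}_{a=0}^2$, $\{F_b\}_{b=0}^2$ realise it. The marginals are $p(a)=1/3>0$, so the conditional Bob-states $\sigma_a:=\Tr_A[(E_a\otimes\mathbb I)\rho_{AB}]/p(a)$ are well-defined density operators on a two-dimensional space, and $\Tr[F_b\sigma_a]=p(a,b)/p(a)=\delta_{ab}$. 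From $\Tr[F_b\sigma_b]=1$ with $0\le F_b\le\mathbb I$ one gets $(\mathbb I-F_b)\sigma_b=0$, so $\Supp(\sigma_b)$ lies in the unit eigenspace of $F_b$; from $\Tr[F_b\sigma_a]=0$ with $F_b,\sigma_a\ge 0$ one gets $F_b\sigma_a=0$, so $\Supp(\sigma_a)\subseteq\ker F_b$ for $a\neq b$. Since the unit eigenspace of $F_b$ and $\ker F_b$ are orthogonal, the three supports $\Supp(\sigma_0),\Supp(\sigma_1),\Supp(\sigma_2)$ would be pairwise orthogonal and nonzero, which is impossible in $\mathbb{C}^2$. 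Thus $(\tfrac13,0)\notin\mathcal{C}^{Q}_2(3)$, giving the strict inclusion.

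The main obstacle is this second part: I must rule out the perfect-correlation point over \emph{all} qubit states and \emph{all} three-outcome POVMs, not merely the symmetric ansatz used for achievability. The state-discrimination argument above is exactly what handles this, reducing the constraint to the statement that a qubit cannot perfectly distinguish three states --- i.e.\ to operational dimension two --- which is precisely the resource the task is designed to probe.
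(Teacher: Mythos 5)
Your proposal is correct, and the achievability half coincides with the paper's: the singlet with the trine POVM gives the point $(\alpha,\beta)=(0,\tfrac16)$, and depolarising the singlet sweeps out the segment $\alpha\in[0,\tfrac19]$, exactly the region $R1$ of Figure \ref{fig:Noisy223}. Where you genuinely diverge is in the proper-containment half. The paper excludes correlations from $\mathcal{C}^{Q}_2(3)$ by an entropic argument: the mutual information of $\{p(a,b)\}$ is bounded by $\log_2$ of the local dimension under local processing, so every point of $T_2[2,2,3]$ with $2\log_2 3+3x\log_2 x+6y\log_2 y>1$ is unattainable, which carves out the whole region $R4$. You instead target the single perfectly-correlated point $(\alpha,\beta)=(\tfrac13,0)$ and rule it out by a state-discrimination argument: $\Tr[F_b\sigma_a]=\delta_{ab}$ forces the three conditional Bob states to have pairwise orthogonal supports, impossible in $\mathbb{C}^2$. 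Both are sound; note that your point has mutual information $\log_2 3>1$, so it also lies in the paper's excluded region. Your route is more elementary (no entropic machinery, and it really is the ``operational dimension two'' obstruction the task is built around) and it needs only one witness point, which is all the theorem requires; the paper's route buys more, namely an explicit one-parameter family of excluded correlations that is then reused to delineate the regions $R3$ and $R4$ and to pose the open question about $x>\tfrac29$. The only thing your argument does not recover is that quantitative boundary, but that is not part of the statement being proved.
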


\begin{proof}
    When Alice and Bob share any qubit bipartite state they can only perform some local operation during the task. The mutual information is a non-increasing quantity under local trace-preserving maps and is upper-bounded by the logarithm of the local dimension of the shared system \cite{wilde2011classical}.  Therefore, the mutual information trivially imposes a bound on the correlations that can be obtained while sharing two two-qubit systems implying the following: any correlation in the set $T_2[2,2,3]$ whose mutual information is greater than $1$ cannot be simulated using a shared qubit bipartite system. If $p(0,0)=x$ and $p(0,1)=y$ then using the property of the correlations in the set $T_2[2,2,3]$ and normalization of probability, we get,
\begin{equation}
   3x+6y=1\implies x+2y=\frac{1}{3}
 \end{equation}
In this case, mutual information is given by $2\log_2 3+3x\log_2 x+6y\log_2 y$. Thus, even using qubit non-projective measurement we cannot obtain correlations in $T_2[2,2,3]$ such that 
\begin{equation}
2\log_2 3+3x\log_2 x+6y\log_2 y >1
\end{equation}    

This shows that there are some correlations in the set $T_2[2,2,3]$ that cannot be obtained using any qubit measurements (local) by Alice and Bob (region $R4$ in figure \ref{fig:Noisy223}). Now, we will show that Alice and Bob can obtain some correlations in the set $T_2[2,2,3]$ by performing local measurements on a bipartite two-qubit state. If Alice and Bob share the state $\ket{\psi^-}_{AB}=\frac{1}{\sqrt{2}}(\ket{01}_{AB}-\ket{10}_{AB})$ and they perform trine measurement ${\map M}_A={\map M}_B=\{\Pi_{\alpha}=\frac{2}{3}\ket{\psi_{\alpha}}\bra{\psi_{\alpha}}:\ket{\psi_{\alpha}}=\cos[\frac{2\pi}{3}\alpha]\ket{0}+  \sin[\frac{2\pi}{3}\alpha]\ket{1}, \alpha\in\{0,1,2\} \}$, Alice and Bob can obtain correlations such that $p(a,b)=0$ if $a=b$ and $p(a,b)=\frac{1}{6}$ otherwise.
\end{proof}

Three-outcome trine measurements on a two-qubit shared state can simulate more correlations in the set $T_2[2,2,3]$. Using the shared state $\rho_p:= p\ket{\psi^-}\bra{\psi^-}+(1-p)\frac{\mathbb{I}}{2}\otimes\frac{\mathbb{I}}{2}$, where $\ket{\psi^-}_{AB}=\frac{1}{\sqrt{2}}(\ket{01}_{AB}-\ket{10}_{AB})$ is the qubit singlet state and $p\in [0,1]$, and the trine measurement ${\map M}_A={\map M}_B=\{\Pi_{\alpha}=\frac{2}{3}\ket{\psi_{\alpha}}\bra{\psi_{\alpha}}:\ket{\psi_{\alpha}}=\cos[\frac{2\pi}{3}\alpha]\ket{0}+  \sin[\frac{2\pi}{3}\alpha]\ket{1}, \alpha\in\{0,1,2\} \}$, Alice and Bob can obtain correlations given in table \ref{table: noisy singlet-triene}. Thus, using the state and measurement specified, $x$ can take all values between $[0,\frac{1}{9}]$, limiting values being achieved for $p=1$ and $p=0$ respectively (region $R1$ in figure \ref{fig:Noisy223}).

    \begin{table}[h!]
        \centering
        \begin{tabular}{|c|c|c|c|}
        \hline
            $a\backslash b$ & $0$ & $1$ & $2$\\
            \hline
            $0$ & $\frac{1}{9}(1-p)$ & $\frac{1}{18}(2+p)$ & $\frac{1}{18}(2+p)$\\
            \hline
            $1$ & $\frac{1}{18}(2+p)$ & $\frac{1}{9}(1-p)$ & $\frac{1}{18}(2+p)$\\
            \hline
            $2$ & $\frac{1}{18}(2+p)$ & $\frac{1}{18}(2+p)$ & $\frac{1}{9}(1-p)$\\
            \hline
        \end{tabular}
        \caption{Joint probability $p(a,b)$ obtained for different amounts of noise while performing the trine measurement on shared qubit state $\rho_p$.}
        \label{table: noisy singlet-triene}
    \end{table}

 Similarly, using the shared state $\Tilde{\rho}_p:= p\ket{\psi^+}\bra{\psi^+}+(1-p)\frac{\mathbb{I}}{2}\otimes\frac{\mathbb{I}}{2}$, where $\ket{\psi^+}_{AB}=\frac{1}{\sqrt{2}}(\ket{01}_{AB}+\ket{10}_{AB})$ and $p\in [0,1]$, and the trine measurement ${\map M}_A={\map M}_B=\{\frac{2}{3}\Pi_{\alpha}=\ket{\psi_{\alpha}}\bra{\psi_{\alpha}}:\ket{\psi_{\alpha}}=\cos[\frac{2\pi}{3}\alpha]\ket{\frac{0+1}{\sqrt{2}}}+  \sin[\frac{2\pi}{3}\alpha]\ket{\frac{0-1}{\sqrt{2}}}, \alpha\in\{0,1,2\} \}$, Alice and Bob can obtain correlations given in table \ref{table: noisy psi+-triene}. Using the state and measurement specified,  $x$ can take all values between $[\frac{1}{9},\frac{2}{9}]$, limiting values being achieved for $p=0$ and $p=1$ respectively (region $R2$ in figure \ref{fig:Noisy223}).

    \begin{table}[h!]
        \centering
        \begin{tabular}{|c|c|c|c|}
        \hline
            $a\backslash b$ & $0$ & $1$ & $2$\\
            \hline
            $0$ & $\frac{1}{9}(1+p)$ & $\frac{1}{18}(2-p)$ & $\frac{1}{18}(2-p)$\\
            \hline
            $1$ & $\frac{1}{18}(2-p)$ & $\frac{1}{9}(1+p)$ & $\frac{1}{18}(2-p)$\\
            \hline
            $2$ & $\frac{1}{18}(2-p)$ & $\frac{1}{18}(2-p)$ & $\frac{1}{9}(1+p)$\\
            \hline
        \end{tabular}
        \caption{Joint probability $p(a,b)$ obtained for different amounts of noise while performing trine measurement on shared state $\Tilde{\rho}_p$.} 
        \label{table: noisy psi+-triene}
    \end{table}

\begin{figure}[h!]
    \centering
    \includegraphics[scale=0.4]{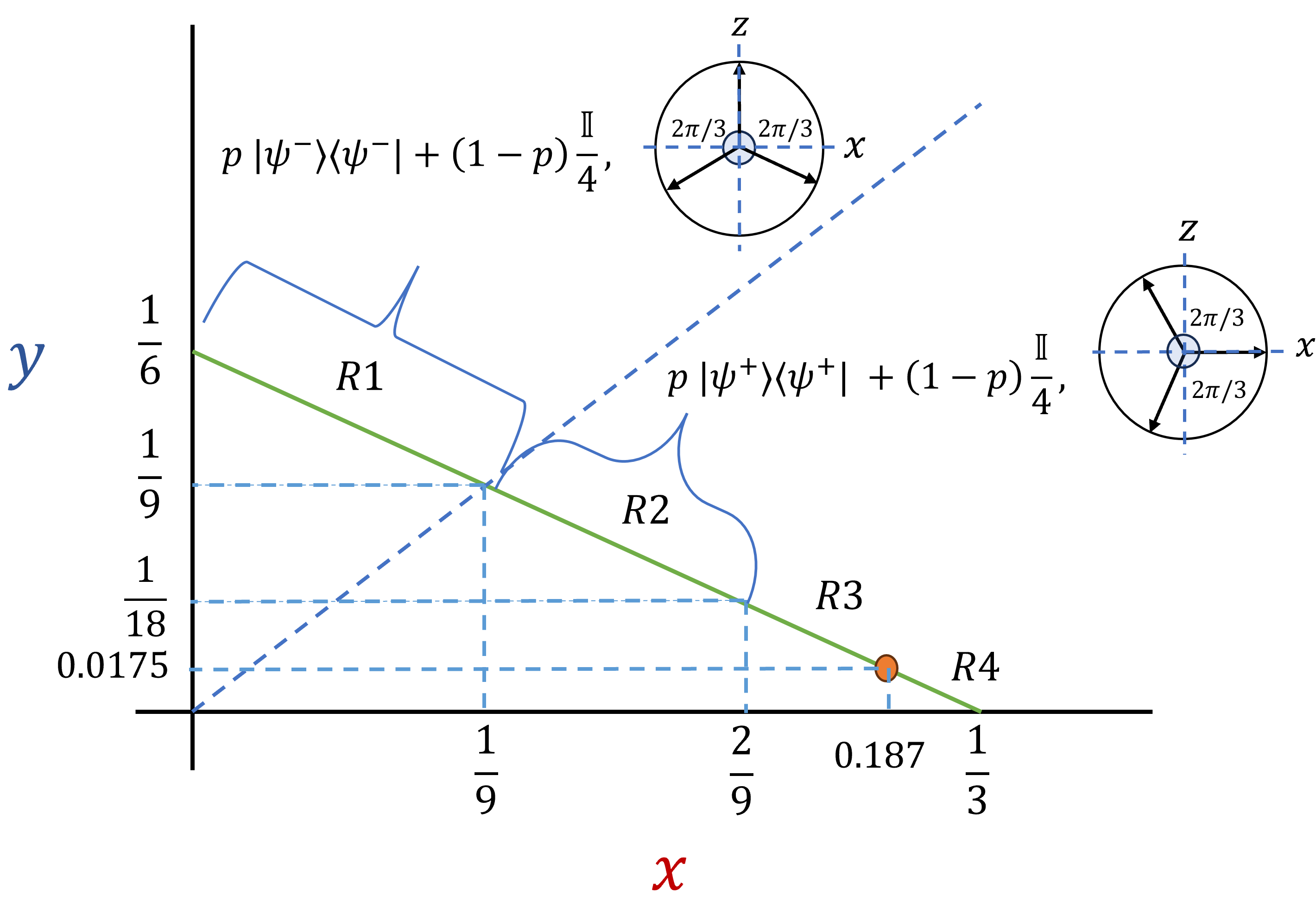}
    \caption{The correlations in the set $T_2[2,2,3]$ that can be obtained using qubit non-projective measurement. Here $x,y$ denotes the probability of each event where the outcomes are correlated and anti-correlated respectively. Correlations in the region $R1$ and $R2$ can be obtained using trine local measurements on the shared qubit state $\rho_p$ (mixture of maximally entangled state $|\psi^{-}\rangle$ and maximally mixed state) and rotated trine local measurements on $\Tilde{\rho}_p$ (mixture of maximally entangled state $|\psi^{+}\rangle$ and maximally mixed state) respectively. The correlations in the region $R4$ cannot be obtained using any local quantum measurements on a two-qubit shared state. The realizability of the correlations in the region $R3$ with shared states of local dimension $2$ is unknown.}
    \label{fig:Noisy223}
\end{figure}

Note that, firstly, the aforementioned correlations $\mathcal{C}^{Q}_2 (3)\bigcap T_2[2,2,3]$ are obtained using some non-projective simulable measurement on shared bipartite qubit state. This follows as a consequence of theorem \ref{theo:classical223} and \ref{theo:quant223}. Secondly, the correlations obtained from the qubit non-projective measurements shown in the table \ref{table: noisy singlet-triene} and \ref{table: noisy psi+-triene} do not exhaust all the correlations in the set $T_2[2,2,3]$ for which mutual information is less than $1$. In other words, whether the correlations such that $x>\frac{2}{9}$, while the mutual information is less than or equal to $1$, (see correlations marked by $R3$ in figure \ref{fig:Noisy223}) is achievable using qubit non-projective measurement needs to be explored further.

\subsubsection{Robustness against noise}
As discussed in theorem \ref{theo:classical223} the set of correlations in $T_2[2,2,3]$ is not qubit projective simulable while some of them can be obtained using shared two-qubit state and local non-projective measurement. If Alice and Bob want to simulate the correlation $\{p(a,b)\}_{a,b=0}^{2}\in T_2[2,2,3]$ such that $p(i,i)=0~\forall~i\in\{0,1,2\}$ then they can use the shared two qubit state $\ket{\psi^-}_{AB}=\frac{1}{\sqrt{2}}(\ket{01}_{AB}-\ket{10}_{AB})$ and perform the trine measurement ${\map M}_A={\map M}_B=\{\frac{2}{3}\Pi_{\alpha}=\ket{\psi_{\alpha}}\bra{\psi_{\alpha}}:\ket{\psi_{\alpha}}=\cos[\frac{2\pi}{3}\alpha]\ket{0}+  \sin[\frac{2\pi}{3}\alpha]\ket{1}, \alpha\in\{0,1,2\} \}$. However, this shared state may become noisy. Considering depolarising noise here the evolved shared state becomes $\rho_{\epsilon_s}:= \epsilon_s\ket{\psi^-}\bra{\psi^-}+(1-\epsilon_s)\frac{\mathbb{I}}{2}\otimes\frac{\mathbb{I}}{2}$ where $\epsilon_s\in (0,1]$. Also, the measurement device can be noisy. Here  we specifically consider depolarising noise acting on the measurement device and thus the effective measurement ${\map M}_X$, where $X=A,B$ for Alice and Bob respectively, is given as follows:

\begin{equation}
\begin{split}
& \Pi_{0}=\begin{pmatrix}
\frac{1+\epsilon_X}{3} & 0 \\
0 & \frac{1-\epsilon_X}{3} 
\end{pmatrix},
\Pi_{1}=\begin{pmatrix}
\frac{2-\epsilon_X}{6} & \frac{-\epsilon_X}{2\sqrt{3}} \\
\frac{-\epsilon_X}{2\sqrt{3}} & \frac{2+\epsilon_X}{6} 
\end{pmatrix},
\Pi_{2}=\begin{pmatrix}
\frac{2-\epsilon_X}{6} & \frac{\epsilon_X}{2\sqrt{3}} \\
\frac{\epsilon_X}{2\sqrt{3}} & \frac{2+\epsilon_X}{6} 
\end{pmatrix} \\
\end{split}
\end{equation}

 Here $\epsilon_A,\epsilon_B\in(0,1]$. The correlation obtained using this noisy state and measurement is given in table \ref{table: noisy singlet-noisy triene}. Interestingly for arbitrary amounts of noise in state and measurement, the correlation obtained belongs to $T_2[2,2,3]$. Thus, the detection scheme proposed here is robust against arbitrary depolarising noise.

 \begin{table}[h!]
        \centering
        \begin{tabular}{|c|c|c|c|}
        \hline
            $a\backslash b$ & $0$ & $1$ & $2$\\
            \hline
            $0$ & $\frac{1}{9}(1-\epsilon_s \epsilon_A \epsilon_B)$ & $\frac{1}{18}(2+\epsilon_s \epsilon_A \epsilon_B)$ & $\frac{1}{18}(2+\epsilon_s \epsilon_A \epsilon_B)$\\
            \hline
            $1$ & $\frac{1}{18}(2+\epsilon_s \epsilon_A \epsilon_B)$ & $\frac{1}{9}(1-\epsilon_s \epsilon_A \epsilon_B)$ & $\frac{1}{18}(2+\epsilon_s \epsilon_A \epsilon_B)$\\
            \hline
            $2$ & $\frac{1}{18}(2+\epsilon_s \epsilon_A \epsilon_B)$ & $\frac{1}{18}(2+\epsilon_s \epsilon_A \epsilon_B)$ & $\frac{1}{9}(1-\epsilon_s \epsilon_A \epsilon_B)$\\
            \hline
        \end{tabular}
        \caption{Joint probability $p(a,b)$ obtained from noisy state $\rho_{\epsilon_s}$ and noisy trine measurement on the shared qubit state. Here, $\epsilon_s, \epsilon_A,\epsilon_B$ are the parameters corresponding to the noise in the pre-shared state, Alice's and Bob's measuring device respectively.}
        \label{table: noisy singlet-noisy triene}
    \end{table}

Alternately, Alice and Bob may want to simulate the correlation $\{p(a,b)\}_{a,b=0}^{2}\in T_2[2,2,3]$ such that $p(i,i)=\frac{2}{9}~\forall~i\in\{0,1,2\}$. Then they can use the shared two qubit state $\ket{\psi^+}_{AB}=\frac{1}{\sqrt{2}}(\ket{01}_{AB}+\ket{10}_{AB})$ and perform the trine measurement ${\map M}_A={\map M}_B=\{\frac{2}{3}\Pi_{\alpha}=\ket{\psi_{\alpha}}\bra{\psi_{\alpha}}:\ket{\psi_{\alpha}}=\cos[\frac{2\pi}{3}\alpha]\ket{\frac{0+1}{\sqrt{2}}}+  \sin[\frac{2\pi}{3}\alpha]\ket{\frac{0-1}{\sqrt{2}}}, \alpha\in\{0,1,2\} \}$. Analogous to the previous case, it is possible to show even for this state and measurement, the robustness of the detection scheme against arbitrary depolarising noise.

In general, if Alice and Bob share an arbitrary two-qubit state $\rho_{AB}$, they can perform three-outcome qubit non-projective measurements, denoted as $\mathcal{M}_A = \{E_i\}_{i=0}^{2}$ and $\mathcal{M}_B = \{F_i\}_{i=0}^{2}$, where $\sum_i E_i = \sum_i F_i = \mathbb{I}$, $E_i > 0$ and $F_i > 0$. The correlations they observe are given by $Tr(\rho_{AB}(E_a\otimes F_b))$. The set of non-projective measurements that can be detected using a fixed shared state is obtained by solving the equation $Tr(\rho_{AB}(E_a\otimes F_b))= p(a,b)$ where the correlation $p(a,b)$ is defined in eq.(\ref{eq:corr t2 G223}). For a given state this is a set of quadratic equations in the parameters of the POVMs.

\subsection{Detecting $4$-outcome qubit non-projective Measurements: $\mathbb{G}~(2,2,4)$}\label{sec:S224}
Now we will discuss a task to witness four-outcome non-projective measurements. Here also we consider two non-communicating parties Alice and Bob. Let ${\map P}_{AB}$ be a preparation device that prepares a bipartite state with local operational dimension $2$ and distributes them between these two parties. Their respective measuring devices ${\map M}_A$ and ${\map M}_B$, which can yield $4$-outcomes each, outputs $a\in\{0,1,2,3\}$ and $b\in\{0,1,2,3\}$ respectively. 
We will consider two different such sets of target correlations for the task $\mathbb{G}~(2,2,4)$. The first set of target correlation $T_1[2,2,4]=\{\{p(a,b)\}_{a,b=0}^3\}$ is of the following form 
 \begin{equation}\label{eq:corr t1 G224}
    p(a,b)=\begin{cases}
        & 0 ~~~~~~~~~~~~\text{ for } a=b\\
        & \alpha_{a,b}(>0) \text{ for } a\neq b
    \end{cases}
 \end{equation}
 The second set of target correlation $T_2[2,2,4]=\{\{p(a,b)\}_{a,b=0}^3\}$ for the task $\mathbb{G}~(2,2,4)$ is of the following form 
 \begin{equation}\label{eq:corr t2 G224}
    p(a,b)=\begin{cases}
        & \alpha \text{ for } a=b\\
        & \beta \text{ for } a\neq b
    \end{cases} \text{ where } \alpha\neq \beta
 \end{equation}

 Next, we show that any correlation in $T_1[2,2,4]$ and $T_2[2,2,4]$ cannot be obtained using some qubit projective simulable measurement on a pre-shared bipartite two-qubit state.

\subsubsection{Simulability of target correlation $T_1[2,2,4]$}\label{sec: 224 t1}
\begin{theo}\label{theo:classical224 noiseless}
For the task $\mathbb{G}~(2,2,4)$, $$ \mathcal{C}^{PQ}_2 (4)\cap T_1[2,2,4]=\varnothing$$ 
\end{theo}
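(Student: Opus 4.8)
The plan is to follow exactly the template of the proof of Theorem~\ref{theo:classical223 noiseless}, adapting only the final counting step from three to four outcomes. First I would invoke Theorem~\ref{theo:classicalcorr} to replace any projective-simulable qubit strategy by a two-level classically correlated state, and then Lemma~\ref{lemma:generator2} to assume \emph{without loss of generality} that this state is $\omega_{A,B}^{Cl}=(\lambda_{0,0}=\lambda,\lambda_{0,1}=0,\lambda_{1,0}=0,\lambda_{1,1}=1-\lambda)$ with $\lambda\in[0,1]$. After Alice and Bob apply their local stochastic maps $S^A_{2\rightarrow 4}=(s_{lm})$ and $S^B_{2\rightarrow 4}=(s'_{lm})$, the resulting correlation is
\begin{equation}
p(a,b)=\lambda\, s_{a0}\,s'_{b0}+(1-\lambda)\, s_{a1}\,s'_{b1},
\end{equation}
that is, a convex combination of the two nonnegative rank-one matrices $\vec{u}\,\vec{v}^{\,T}$ and $\vec{u}'\,\vec{v}'^{\,T}$, where $\vec{u}=(s_{a0})_a$, $\vec{v}=(s'_{b0})_b$, $\vec{u}'=(s_{a1})_a$, $\vec{v}'=(s'_{b1})_b$ are probability vectors on $\{0,1,2,3\}$.

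The property of $T_1[2,2,4]$ that I would exploit is that $p(a,a)=0$ for every $a$ while $p(a,b)>0$ for every $a\neq b$. For $\lambda\in(0,1)$ both nonnegative terms must vanish separately on the diagonal, so $s_{a0}s'_{a0}=0$ and $s_{a1}s'_{a1}=0$ for all $a$; equivalently the supports of $\vec{u}$ and $\vec{v}$ are disjoint, and likewise for $\vec{u}'$ and $\vec{v}'$. A positive off-diagonal entry $p(a,b)$ then requires either $a\in\mathrm{supp}(\vec{u})$ and $b\in\mathrm{supp}(\vec{v})$, or $a\in\mathrm{supp}(\vec{u}')$ and $b\in\mathrm{supp}(\vec{v}')$. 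Each rank-one term therefore \emph{covers} at most $|\mathrm{supp}(\vec{u})|\cdot|\mathrm{supp}(\vec{v})|$ ordered off-diagonal pairs, and under the disjointness constraint $|\mathrm{supp}(\vec{u})|+|\mathrm{supp}(\vec{v})|\le 4$ this product is maximised at $2\times 2=4$. Two terms thus cover at most $8$ ordered pairs, whereas membership in $T_1[2,2,4]$ demands that all $4\times 3=12$ off-diagonal pairs be strictly positive, a contradiction.

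The boundary cases $\lambda\in\{0,1\}$ are easier, since the correlation reduces to a single rank-one matrix whose disjoint support covers at most $4<12$ off-diagonal pairs. The only step needing genuine care, and hence the main obstacle, is the counting/covering argument: one must check that the disjoint-support constraint forced by the vanishing diagonal really caps each rank-one block at four strictly positive off-diagonal entries, i.e. that $\max\{pq : p+q\le 4,\ p,q\in\mathbb{Z}_{\ge 0}\}$ is attained at $p=q=2$. This is precisely the $N=3$ (four-outcome) instance of the extension of Lemmas~$1$ and $2$ of \cite{Guha2021} recorded in the footnote to Theorem~\ref{theo:classical223 noiseless}, so I would either cite that extension directly or reproduce the short counting argument above.
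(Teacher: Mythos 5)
Your proposal is correct and follows essentially the same route as the paper: the paper's proof of Theorem~\ref{theo:classical224 noiseless} simply says it is ``exactly similar'' to that of Theorem~\ref{theo:classical223 noiseless}, which reduces via Theorem~\ref{theo:classicalcorr} and Lemma~\ref{lemma:generator2} to a diagonal two-level classical state and then cites the $N>2$ extension of Lemmas~1 and~2 of Guha \emph{et al.} recorded in the footnote. The only difference is that you supply the disjoint-support/covering count ($2\cdot 4<12$ off-diagonal pairs) explicitly rather than by citation, and that argument is sound.
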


\begin{proof}
    The proof is exactly similar to the proof of theorem \ref{theo:classical223 noiseless}.
\end{proof}

\begin{theo}\label{theo:quant224 noiseless}
 For the task $\mathbb{G}~(2,2,4)$,
$$\mathcal{C}^{Q}_2 (4)\bigcap T_1[2,2,4] \neq \varnothing$$
\end{theo}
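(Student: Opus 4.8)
The plan is to mirror the three-outcome construction of Theorem~\ref{theo:quant223 noiseless}, replacing the trine POVM by a four-outcome extremal qubit POVM. First I would have Alice and Bob pre-share the two-qubit singlet $\ket{\psi^-}_{AB}=\frac{1}{\sqrt{2}}(\ket{01}_{AB}-\ket{10}_{AB})$, and let both perform the same four-outcome extremal POVM ${\map M}_A={\map M}_B=\{E_i=\lambda_i(\mathbb{I}+\hat{n}_i\cdot\vec{\sigma})\}_{i=0}^3$ with $\lambda_i>0$, $\sum_i\lambda_i=1$, $|\hat{n}_i|=1$, $\hat{n}_i\neq\hat{n}_j$ for $i\neq j$, and $\sum_i\lambda_i\hat{n}_i=0$. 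The last identity is exactly what makes $\sum_i E_i=\mathbb{I}$, so that each device is a legitimate POVM; the canonical choice satisfying all of these constraints is the tetrahedral (SIC) POVM, $\lambda_i=\tfrac14$ with the $\hat{n}_i$ pointing to the vertices of a regular tetrahedron.

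The single computational input I would reuse is the singlet correlation identity already invoked in the three-outcome proof. Since $\langle\psi^-|(\hat{n}_a\cdot\vec{\sigma})\otimes(\hat{n}_b\cdot\vec{\sigma})|\psi^-\rangle=-\hat{n}_a\cdot\hat{n}_b$ while the single-party terms vanish, it follows that
\begin{equation}
p(a,b)=\Tr[(E_a\otimes E_b)\ket{\psi^-}\bra{\psi^-}]=\lambda_a\lambda_b(1-\hat{n}_a\cdot\hat{n}_b).
\end{equation}
The claim then drops out by inspection: on the diagonal $\hat{n}_a\cdot\hat{n}_a=1$ gives $p(a,a)=0$, while distinctness of the unit vectors off the diagonal gives $\hat{n}_a\cdot\hat{n}_b<1$ and hence $p(a,b)>0$. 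This is precisely the form in equation~\ref{eq:corr t1 G224} demanded by $T_1[2,2,4]$, so $\mathcal{C}^{Q}_2(4)\cap T_1[2,2,4]\neq\varnothing$; for the explicit tetrahedral POVM one obtains $p(a,a)=0$ and $p(a,b)=\tfrac{1}{12}$ for $a\neq b$, which indeed sums to unity over the sixteen entries.

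I do not expect a genuine obstacle, because the three-outcome argument transfers essentially verbatim. The only thing that must be verified is the existence of a four-outcome extremal POVM meeting the geometric constraints --- four distinct unit Bloch vectors carrying positive weights that sum to zero --- which the regular tetrahedron supplies. The one point worth emphasising is that strict positivity of every off-diagonal entry rests only on the pairwise distinctness $\hat{n}_a\neq\hat{n}_b$, so any non-degenerate four-direction configuration (not merely the symmetric SIC one) yields a correlation in $T_1[2,2,4]$, giving in fact a whole family of witnessing strategies rather than a single example.
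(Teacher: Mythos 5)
Your proposal is correct and matches the paper's proof essentially verbatim: the same singlet state, the same parametrisation of a four-outcome extremal qubit POVM with $\sum_i\lambda_i\hat n_i=0$, and the same resulting correlation $p(a,b)=\lambda_a\lambda_b(1-\hat n_a\cdot\hat n_b)$ lying in $T_1[2,2,4]$. The extra remarks on the tetrahedral SIC instance and the family of non-degenerate configurations are sound but not needed beyond what the paper states.
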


\begin{proof}
    Let Alice and Bob pre-share the two qubit singlet state $\ket{\psi^-}_{AB}=\frac{1}{\sqrt{2}}(\ket{01}_{AB}-\ket{10}_{AB})$. And both perform the four-outcome extremal qubit POVM ${\map M}_A={\map M}_B=\{E_i=\lambda_i(\mathbb{I}+\hat{n}_i\cdot \vec{\sigma})\}_{i=0}^3$ where $\lambda_i>0$, $\sum_i \lambda_i=1$, $\vec{\sigma}=(\sigma_1,\sigma_2,\sigma_3)$ is a vector of Pauli matrices, $\hat{n}_i\in \mathbb{R}^3$, $|\hat{n}_i|^2=1$ (unit vector), $\sum_i \lambda_i \hat{n}_i=0$ and $\hat{n}_i\neq \hat{n}_j~\forall i,j$. The correlation obtained from the state and the measurement is $p(a,b)=\lambda_a\lambda_b(1-\hat{n}_a\cdot\hat{n}_b)$. This correlation is of the form described in eq.(\ref{eq:corr t1 G224}). 
\end{proof}

Therefore, the correlations in $T_1[2,2,4]$ can be used to detect three-outcome qubit non-projective measurements. However, this detection scheme can be used in the absence of any noise. Next, we will discuss a detection scheme that can also be used in the presence of arbitrary depolarising noise. 
 
\subsubsection{Simulability of target correlation $T_2[2,2,4]$}\label{sec: 224 t2}
\begin{theo}\label{theo:classical224}
For the task $\mathbb{G}~(2,2,4)$, 
$$ \mathcal{C}^{PQ}_2 (4)\cap T_2[2,2,4]=\varnothing$$ 
\end{theo}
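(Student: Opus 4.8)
The plan is to reuse the reduction from Theorem~\ref{theo:classical223} and then dispatch the problem with a single rank argument rather than solving the (now much larger) system of equations explicitly. First I would apply Theorem~\ref{theo:classicalcorr} to replace the projective-simulable qubit measurements by a shared two-level classically correlated system, so that it suffices to show $\mathcal{C}^{Cl}_2(4)\cap T_2[2,2,4]=\varnothing$. By Lemma~\ref{lemma:generator2} I may take the shared state to be $\omega_{A,B}^{Cl}=(\lambda,0,0,1-\lambda)$ with $\lambda\in[0,1]$. Writing the local stochastic maps as $S^A_{2\to4}=(s_{lm})$ and $S^B_{2\to4}=(s'_{lm})$ with $l\in\{0,1,2,3\}$ and $m\in\{0,1\}$, the realisable correlation is $p(a,b)=\lambda\,s_{a0}\,s'_{b0}+(1-\lambda)\,s_{a1}\,s'_{b1}$.

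The key observation is that the $4\times4$ correlation matrix $P=(p(a,b))_{a,b=0}^{3}$ is a sum of two outer products, $P=\lambda\,\mathbf{u}(\mathbf{u}')^{T}+(1-\lambda)\,\mathbf{v}(\mathbf{v}')^{T}$, where $\mathbf{u}=(s_{a0})_a$, $\mathbf{v}=(s_{a1})_a$, $\mathbf{u}'=(s'_{b0})_b$ and $\mathbf{v}'=(s'_{b1})_b$. Hence $\mathrm{rank}(P)\le2$. In fact this cap holds for \emph{any} bipartite classical state of local operational dimension $2$, since there $P=A\Lambda B^{T}$ with $A,B$ of size $4\times2$ and $\Lambda$ of size $2\times2$, so $\mathrm{rank}(P)\le2$ irrespective of the details of the shared state.

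On the other hand, every matrix in $T_2[2,2,4]$ has the form $P=(\alpha-\beta)\,\mathbb{I}_4+\beta\,J_4$, where $J_4$ denotes the $4\times4$ all-ones matrix. Its spectrum consists of $\alpha+3\beta$ (with eigenvector $\mathbf{1}$, multiplicity one) and $\alpha-\beta$ (multiplicity three). Normalisation of the joint distribution forces $4\alpha+12\beta=1$, hence $\alpha+3\beta=\tfrac14\neq0$, while the defining condition $\alpha\neq\beta$ gives $\alpha-\beta\neq0$. All four eigenvalues are therefore nonzero, so $\mathrm{rank}(P)=4$. Since $4>2$, no element of $T_2[2,2,4]$ admits the factorised rank-$\le2$ form, proving $\mathcal{C}^{PQ}_2(4)\cap T_2[2,2,4]=\varnothing$.

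I do not anticipate a genuine obstacle: the argument reduces to the factorisation that caps the classical correlation matrix at rank two together with the eigenvalue count that pins every target matrix at rank four. The only point worth stating carefully is that the rank bound does not depend on the particular two-level shared state, so Lemma~\ref{lemma:generator2} is a convenience here rather than a necessity. One could instead mimic the brute-force equation-solving of Theorem~\ref{theo:classical223} on the $16$-entry system, but that route is considerably more tedious and, unlike the rank argument, does not immediately generalise to other outcome counts.
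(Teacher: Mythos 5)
Your proof is correct, and it takes a genuinely different and cleaner route than the paper's. The paper, after the same reduction via Theorem~\ref{theo:classicalcorr} and Lemma~\ref{lemma:generator2}, imposes the target conditions as an explicit system of equations, derives that all marginals must equal $\tfrac14$, substitutes, and checks by brute force that the only solution is the uniform distribution $p(a,b)=\tfrac1{16}$ (which is excluded by $\alpha\neq\beta$). You instead observe that any correlation from a shared two-level classical state factorises as $P=S^A\Lambda (S^B)^T$ with $\Lambda$ of size $2\times2$, so $\mathrm{rank}(P)\le 2$, whereas every element of $T_2[2,2,4]$ equals $(\alpha-\beta)\mathbb{I}_4+\beta J_4$ with spectrum $\{\alpha+3\beta=\tfrac14,\ \alpha-\beta\neq0\ (\times 3)\}$ and hence has rank $4$. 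Both steps are verified correctly, and you are right that Lemma~\ref{lemma:generator2} becomes unnecessary since the rank bound holds for an arbitrary $2\times2$ state $\Lambda$. What your approach buys is generality and transparency: the same argument shows immediately that $T_2[2,2,k]$ is unreachable from any shared classical state of local dimension $d<k$ (target rank $k$ versus classical rank $\le d$), which also covers the paper's Theorem~\ref{theo:classical223} as the case $k=3$; what it gives up is only the paper's incidental byproduct of identifying the unique boundary solution $p(a,b)=\tfrac1{16}$.
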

\begin{proof}
From theorem \ref{theo:classicalcorr} we can analyse the correlations that can be obtained using a two-level bipartite classically correlated state. And using lemma \ref{lemma:generator2} the initially shared classically correlated state {\it w.l.o.g.} can be $\omega_{A,B}^{Cl}=(\lambda,0,0,1-\lambda)$ where $\lambda\in[0,1]$. A general local stochastic map for Alice is denoted as

\begin{equation}
S^A_{2\rightarrow 4}:=(s_{lm})_{l=0,1,2,3\atop m=0,1}
\end{equation}

such that 
\begin{equation}\label{eq: classical224 1}
\sum_{l=0}^2 s_{lm}=1~\forall~m\in \{{0,1}\}\text{ and } s_{lm}\ge 0
\end{equation}
Similarly, a generic local stochastic map for Bob is denoted as
\begin{equation}
S^B_{2\rightarrow 4}:=(s'_{lm})_{l=0,1,2,3\atop m=0,1}
\end{equation}

such that 
\begin{equation}\label{eq: classical224 2}
\sum_{l=0}^2 s'_{lm}=1 ~\forall~m\in \{{0,1}\} \text{ and } s'_{lm}\ge 0
\end{equation}

The correlation obtained from the state after applying the local stochastic maps is given by 
\begin{equation}\label{eq: classical224 3}
P=(S_{2\rightarrow 3}^{A}\otimes S_{2\rightarrow 3}^{B})(\omega_{A,B}^{Cl})^T:= (p(a,b))_{a,b=0,1,2}
\end{equation}

To check if we can obtain any correlation in $T_2[2,2,4]$ using a two-level classically correlated system, we need to solve the following set of equations using the expressions obtained from eq.(\ref{eq: classical224 3}).

\begin{equation}\label{eq: classical224 4}
    p(0,0)=p(1,1)=p(2,2)=p(3,3)
\end{equation}

\begin{equation}\label{eq: classical224 5}
\begin{split}
& p(0,1)=p(0,2)=p(0,3)=p(1,0)=p(1,2)= p(1,3)=p(2,0)=p(2,1)=p(2,3)
\end{split}
\end{equation}

From eq.(\ref{eq: classical224 4}) and (\ref{eq: classical224 5}), we obtain that all the marginal correlation for Alice $p(a)=\frac{1}{4}~\forall a$. Similarly, from eq.(\ref{eq: classical224 4}) and (\ref{eq: classical224 5}), we obtain that all the marginal correlation for Bob $p(b)=\frac{1}{4}\forall b$. Using this with eq.(\ref{eq: classical224 1}), (\ref{eq: classical224 2}) and (\ref{eq: classical224 3}), we obtain the following:

\begin{equation}\label{eq: classical224 6} 
    s_{l1}= \frac{1}{1 - \lambda} (\frac{1}{4} - \lambda s_{l0})~\forall~l\in\{0,1,2,3\}
\end{equation}

\begin{equation}\label{eq: classical224 7}
   s'_{l1}= \frac{1}{1 - \lambda} (\frac{1}{4} - \lambda s'_{l0})~\forall~l\in\{0,1,2,3\}
\end{equation} 
Now upon substituting the eq.(\ref{eq: classical224 6}), (\ref{eq: classical224 7}) and solving the linear set of eq.(\ref{eq: classical224 4}) and (\ref{eq: classical224 5}) under the constraint given in eq.(\ref{eq: classical224 1}) and (\ref{eq: classical224 2}), we get that there is only one solution for which $p(a,b)=\frac{1}{16}~\forall~a,b$. There is no solution such that either $p(0,0)>p(0,1)$ or $p(0,0)<p(0,1)$.

\end{proof}

\begin{theo}\label{theo:quant224}
For the task $\mathbb{G}~(2,2,4)$,
$$\varnothing\neq\mathcal{C}^{Q}_2 (4)\bigcap T_2[2,2,4] \subsetneq T_2[2,2,4]$$
\end{theo}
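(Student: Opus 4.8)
The plan is to establish the two inclusions separately, in direct analogy with the proof of Theorem~\ref{theo:quant223} for the three-outcome case. Parametrise any member of $T_2[2,2,4]$ by $x:=p(a,a)=\alpha$ and $y:=p(a,b)=\beta$ for $a\neq b$. Since there are four correlated and twelve anti-correlated events, normalisation forces $4x+12y=1$, i.e. $x+3y=\tfrac14$, so the whole family is a one-parameter set; by the symmetry of the distribution the marginals are uniform, $p(a)=p(b)=\tfrac14$ for every outcome.

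For the non-emptiness $\varnothing\neq\mathcal{C}^{Q}_2(4)\cap T_2[2,2,4]$ I would exhibit an explicit qubit strategy. Let Alice and Bob share the singlet $\ket{\psi^-}_{AB}$ and each perform the tetrahedral SIC-POVM ${\map M}_A={\map M}_B=\{E_i=\tfrac14(\mathbb{I}+\hat{n}_i\cdot\vec{\sigma})\}_{i=0}^3$, where $\{\hat{n}_i\}$ are the four vertices of a regular tetrahedron on the Bloch sphere, so that $\hat{n}_i\cdot\hat{n}_j=1$ for $i=j$ and $-\tfrac13$ otherwise, while $\sum_i\hat{n}_i=0$ guarantees $\sum_iE_i=\mathbb{I}$ and each $E_i\ge 0$ has eigenvalues $\{0,\tfrac12\}$. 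Using the singlet identities $\langle\psi^-|\sigma_\mu\otimes\sigma_\nu|\psi^-\rangle=-\delta_{\mu\nu}$, the correlation reduces to the form already derived in Theorem~\ref{theo:quant224 noiseless}, namely $p(a,b)=\tfrac{1}{16}(1-\hat{n}_a\cdot\hat{n}_b)$, giving $\alpha=p(a,a)=0$ and $\beta=p(a,b)=\tfrac{1}{12}$ for $a\neq b$. This lies in $T_2[2,2,4]$ since $\alpha\neq\beta$ and $4\cdot0+12\cdot\tfrac{1}{12}=1$. Replacing the singlet by $\rho_p=p\ket{\psi^-}\bra{\psi^-}+(1-p)\tfrac{\mathbb{I}}{2}\otimes\tfrac{\mathbb{I}}{2}$ only rescales the Bloch correlation term by $p$, yielding $\alpha=\tfrac{1-p}{16}$ and $\beta=\tfrac{p}{12}+\tfrac{1-p}{16}$, which still belongs to $T_2[2,2,4]$ for every $p\in(0,1]$, simultaneously establishing robustness under arbitrary depolarising noise.

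For the strict inclusion $\cap\,T_2[2,2,4]\subsetneq T_2[2,2,4]$ I would invoke the same information-theoretic obstruction used in Theorem~\ref{theo:quant223}: mutual information is monotone under local trace-preserving maps and bounded by $\log_2 d$, so any distribution reachable from a shared two-qubit state must satisfy $I(A{:}B)\le 1$. With uniform marginals $H(A)=H(B)=\log_2 4=2$, one has $I(A{:}B)=4+4x\log_2 x+12y\log_2 y$ along $x+3y=\tfrac14$. Evaluating at the perfectly correlated endpoint $x=\tfrac14$, $y=0$ gives $I(A{:}B)=4+4\cdot\tfrac14\log_2\tfrac14=2>1$, so this correlation, together with a whole neighbourhood of it by continuity, cannot be produced by any local qubit measurements on a bipartite two-qubit state, proving that the intersection is a proper subset.

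The correlation evaluation for the SIC-POVM is routine; the step demanding most care is verifying that the tetrahedral operators form a genuine qubit POVM (positivity of each $E_i$ together with the resolution $\sum_iE_i=\mathbb{I}$, which hinges on $\sum_i\hat{n}_i=0$), and, for the strict-inclusion half, confirming that the $I(A{:}B)>1$ region consists of bona fide members of $T_2[2,2,4]$ with $\alpha\neq\beta$ and non-negative entries, so that the mutual-information bound indeed excludes them.
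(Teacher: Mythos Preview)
Your proposal is correct and follows essentially the same approach as the paper: the non-emptiness is shown via the singlet (and its depolarised version $\rho_p$) together with the tetrahedral SIC-POVM, yielding exactly the paper's Table~\ref{table: noisy singlet-SIC}, and the strict inclusion is argued via the mutual-information bound $I(A{:}B)\le 1$ for two-qubit sources. The only difference is that the paper additionally exhibits a second family of states $\tilde{\rho}_p$ built on $\ket{\phi^+}$ with rotated SIC-POVMs to cover the range $x\in[\tfrac{1}{16},\tfrac{2}{16}]$ (region $R2$ in Fig.~\ref{fig:Noisy224}), but this extra construction is not required to prove the theorem as stated.
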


\begin{proof}
    First, we shall show that some of the correlations in the set $T_2[2,2,4]$ cannot be obtained using any local quantum measurement performed on a bipartite two-qubit state. When Alice and Bob share any qubit bipartite state they can only perform some local operation during the task. The mutual information being a non-increasing quantity under local trace-preserving maps is upper-bounded by the logarithm of the local dimension of the shared system \cite{wilde2011classical}. Thus, any correlation in the set $T_2[2,2,4]$ whose mutual information is greater than $1$ cannot be simulated while sharing a bipartite two-qubit system. Say, $p(0,0)=x$ and $p(0,1)=y$ then using the property of the correlations in the set $T_2[2,2,4]$ and normalization of probability, we get,

    \begin{equation}
        4x+12y=1\implies x+3y=\frac{1}{4}
    \end{equation}
    
In this case, mutual information is given by $2\log_2 4+4x\log_2x + 12y\log_2y$. Thus, even using qubit non-projective measurement we cannot obtain correlations in $T_2[2,2,4]$ such that $2\log_2 4+4x\log_2x + 12y\log_2y >1$ (region $R4$ in figure \ref{fig:Noisy224}). 
    
Now, we shall show explicitly that some of the correlations in set $T_2[2,2,4]$ can be obtained using some local quantum measurements by Alice and Bob on their pre-shared state two-qubit state. If the shared two-qubit state is $\ket{\psi^-}_{AB}=\frac{1}{\sqrt{2}}(\ket{01}_{AB}-\ket{10}_{AB})$ and the measurement is qubit SIC-POVM ${\map M}_A={\map M}_B=\{\Pi_{\alpha}=\frac{1}{2}\ket{\psi_{\alpha}}\bra{\psi_{\alpha}}:\ket{\psi_{0}}=\ket{0},\ket{\psi_{\alpha}}=\sqrt{\frac{1}{3}}\ket{0}+  \sqrt{\frac{2}{3}}e^{\frac{2\pi i}{3}(\alpha-1)}\ket{1} \text{ for } \alpha\in\{1,2,3\} \}$, Alice and Bob can obtain correlations such that $p(a,b)=0$ if $a=b$ and $p(a,b)=\frac{1}{12}$ otherwise.

\end{proof}

The qubit SIC-POVM can generate more correlations in the set $T_2[2,2,4]$ beyond those discussed in the proof above. Using the state $\rho_p:= p\ket{\psi^-}\bra{\psi^-}+(1-p)\frac{\mathbb{I}}{2}\otimes\frac{\mathbb{I}}{2}$, where $\ket{\psi^-}_{AB}=\frac{1}{\sqrt{2}}(\ket{01}_{AB}-\ket{10}_{AB})$ and $p\in [0,1]$, and perfoming qubit SIC-POVM ${\map M}_A={\map M}_B=\{\Pi_{\alpha}=\frac{1}{2}\ket{\psi_{\alpha}}\bra{\psi_{\alpha}}:\ket{\psi_{0}}=\ket{0},\ket{\psi_{\alpha}}=\sqrt{\frac{1}{3}}\ket{0}+  \sqrt{\frac{2}{3}}e^{\frac{2\pi i}{3}(\alpha-1)}\ket{1} \text{ for } \alpha\in\{1,2,3\} \}$, Alice and Bob can obtain correlations given in table \ref{table: noisy singlet-SIC}. Using this state and measurement specified above $x$ can take all values between $[0,\frac{1}{16}]$, extreme values being achieved for $p=1$ and $p=0$ respectively (region $R1$ in figure \ref{fig:Noisy224}).

Similarly, sharing the two-qubit state $\Tilde{\rho}_p:= p\ket{\phi^+}\bra{\phi^+}+(1-p)\frac{\mathbb{I}}{2}\otimes\frac{\mathbb{I}}{2}$, where $\ket{\phi^+}_{AB}=\frac{1}{\sqrt{2}}(\ket{00}_{AB}+\ket{11}_{AB})$ and $p\in [0,1]$, and by performing the qubit SIC-POVM mentioned below, Alice and  Bob can obtain correlations given in table \ref{table: noisy singlet-SIC 2}. Using this state and measurement specified above $x$ can take all values between $[\frac{1}{16},\frac{2}{16}]$, extreme values being achieved for $p=0$ and $p=1$ respectively (region $R2$ in figure \ref{fig:Noisy224}). 

\begin{equation}
  \begin{split}
      {\map M}_A=&\{\Pi_{\alpha}=\frac{1}{2}\ket{\psi_{\alpha}}\bra{\psi_{\alpha}}:\ket{\psi_{0}}=\frac{1}{\sqrt{2}}(\ket{0}+\ket{1}),\ket{\psi_{\alpha}}=\frac{(1+\sqrt{2}e^{\frac{2\pi i}{3}(\alpha-1)})\ket{0}+ (1-\sqrt{2}e^{\frac{2\pi i}{3}(\alpha-1)})\ket{1}}{\sqrt{6}}\\
      &\text{ where } \alpha\in\{1,2,3\} \}
  \end{split}  
\end{equation}
\begin{equation}
    \begin{split}
        {\map M}_B=&\{\Pi_{\alpha}=\frac{1}{2}\ket{\psi_{\alpha}}\bra{\psi_{\alpha}}:\ket{\psi_{0}}=\frac{1}{\sqrt{2}}(\ket{0}+\ket{1}), \ket{\psi_{\alpha}}=\frac{(1+\sqrt{2}e^{\frac{2\pi i}{3}(1-\alpha)})\ket{0}+ (1-\sqrt{2}e^{\frac{2\pi i}{3}(1-\alpha)})\ket{1}}{\sqrt{6}}\\
    &\text{ where } \alpha\in\{1,2,3\} \}
    \end{split}
\end{equation}

    \begin{table}[h!]
        \centering
        \begin{tabular}{|c|c|c|c|c|}
        \hline
            $a\backslash b$ & $0$ & $1$ & $2$ & $3$\\
        \hline    
            $0$ & $\frac{1}{16}(1-p)$ & $\frac{1}{48}(3+p)$ & $\frac{1}{48}(3+p)$ & $\frac{1}{48}(3+p)$\\
        \hline    
            $1$ & $\frac{1}{48}(3+p)$ & $\frac{1}{16}(1-p)$ & $\frac{1}{48}(3+p)$ & $\frac{1}{48}(3+p)$\\
        \hline    
            $2$ & $\frac{1}{48}(3+p)$ & $\frac{1}{48}(3+p)$ & $\frac{1}{16}(1-p)$ & $\frac{1}{48}(3+p)$\\
        \hline    
            $3$ & $\frac{1}{48}(3+p)$ & $\frac{1}{48}(3+p)$ & $\frac{1}{48}(3+p)$ & $\frac{1}{16}(1-p)$\\
        \hline    
        \end{tabular}
        \caption{Joint probability $p(a,b)$ obtained for different amounts of noise while performing SIC-POVM measurement on shared state $\rho_p$.}
        \label{table: noisy singlet-SIC}
    \end{table}

 \begin{table}[h!]
        \centering
        \begin{tabular}{|c|c|c|c|c|}
        \hline
            $a\backslash b$ & $0$ & $1$ & $2$ & $3$\\
        \hline    
            $0$ & $\frac{1}{16}(1+p)$ & $\frac{1}{48}(3-p)$ & $\frac{1}{48}(3-p)$ & $\frac{1}{48}(3-p)$\\
        \hline    
            $1$ & $\frac{1}{48}(3-p)$ & $\frac{1}{16}(1+p)$ & $\frac{1}{48}(3-p)$ & $\frac{1}{48}(3-p)$\\
        \hline    
            $2$ & $\frac{1}{48}(3-p)$ & $\frac{1}{48}(3-p)$ & $\frac{1}{16}(1+p)$ & $\frac{1}{48}(3-p)$\\
        \hline    
            $3$ & $\frac{1}{48}(3-p)$ & $\frac{1}{48}(3-p)$ & $\frac{1}{48}(3-p)$ & $\frac{1}{16}(1+p)$\\
        \hline    
        \end{tabular}
        \caption{Joint probability $p(a,b)$ obtained for different amounts of noise while performing SIC-POVM measurement on shared state $\Tilde{\rho}_p$.}
        \label{table: noisy singlet-SIC 2}
    \end{table}

\begin{figure}[h!]
    \centering
    \includegraphics[scale=0.1]{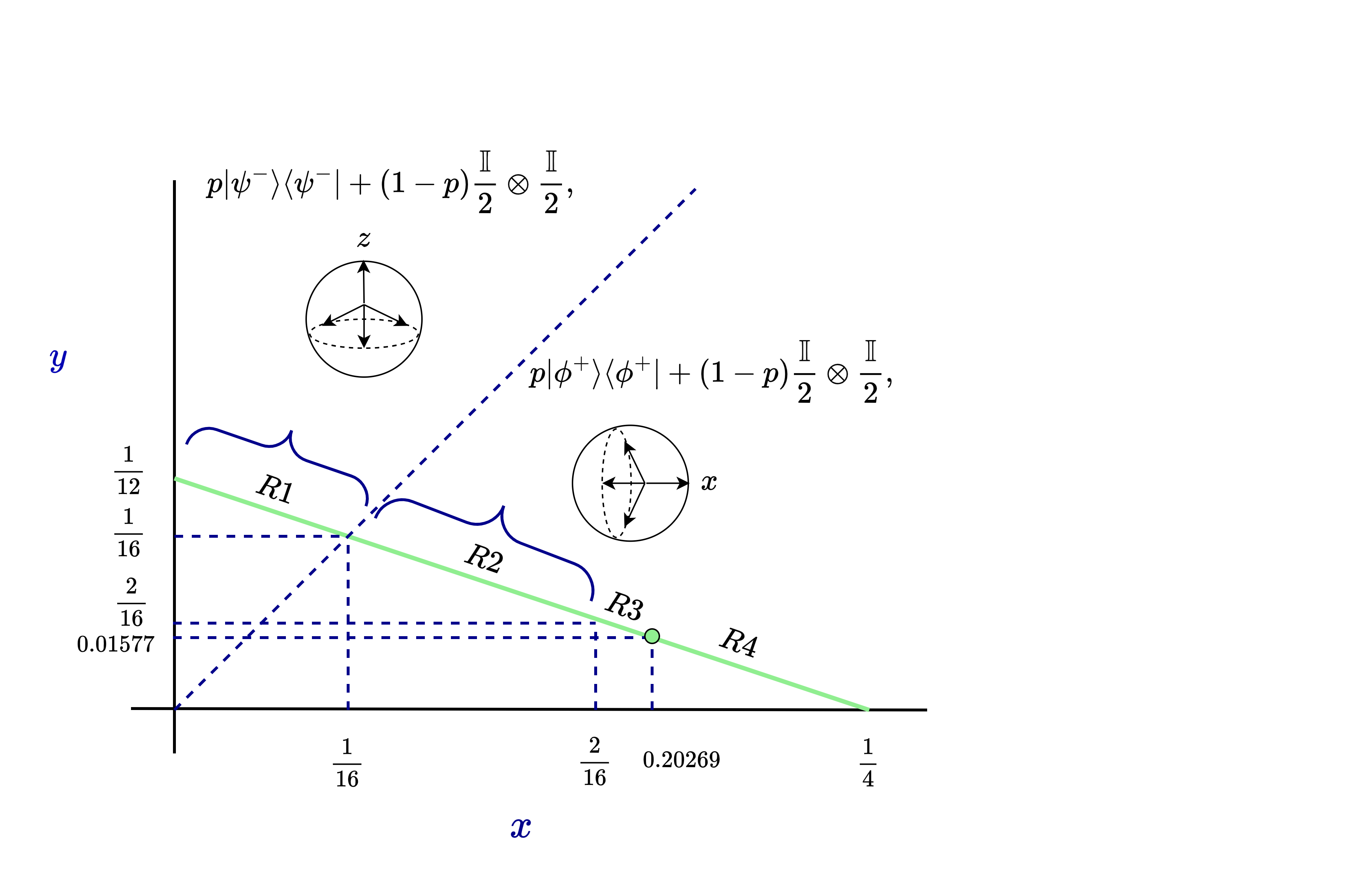}
\caption{The graph shows the correlations in the set $T_2[2,2,4]$ that can be obtained using qubit non-projective measurement. Here $x,y$ denotes the probability of each event where the outcomes are correlated and anti-correlated respectively. Correlations in the region $R1$ and $R2$ can be obtained using SIC-POVM local measurements on the shared two-qubit state $\rho_p$ (mixture of maximally entangled state $|\psi^{-}\rangle$ and maximally mixed state) and rotated SIC-POVM local measurements on $\Tilde{\rho}_p$ (mixture of maximally entangled state $|\phi^{+}\rangle$ and maximally mixed state) respectively. The correlations in the region $R4$ cannot be obtained using any local quantum measurements on a two-qubit shared state. The realizability of the correlations in the region $R3$ with shared states of local dimension $2$ is unknown.} 

    \label{fig:Noisy224}
\end{figure}

Firstly, all the correlations in the set $\mathcal{C}^{Q}_2 (4)\bigcap T_2[2,2,4]$ can obtained using only qubit non-projective simulable measurements on a shared quantum state. This follows directly as a consequence of theorem \ref{theo:classical224} and \ref{theo:quant224}. Secondly, the qubit state and measurement described above only achieve correlation such that $x\in [0,\frac{2}{16}]$. Whether all correlations in $T_2[2,2,4]$ whose mutual information is less than or equal to $1$ (see correlations marked by $R3$ in figure \ref{fig:Noisy224}) can be simulated using qubit non-projective measurement is to be explored further. 

\subsubsection{Robustness against noise}
From theorem \ref{theo:classical224}, no correlation in the set $T_2[2,2,4]$ can be obtained using qubit projective simulable measurements by Alice and Bob. However, they can obtain some of the correlations in this set using qubit non-projective measurements. If they want to simulate the correlation $\{p(a,b)\}_{a,b}\in T_2[2,2,4]$ such that $p(i,i)=0~\forall ~i\in \{0,1,2,3\}$ then they can use the pre-shared state $\ket{\psi^-}_{AB}=\frac{1}{\sqrt{2}}(\ket{01}_{AB}-\ket{10}_{AB})$ and perform the qubit SIC-POVM ${\map M}_A={\map M}_B=\{\Pi_{\alpha}=\frac{1}{2}\ket{\psi_{\alpha}}\bra{\psi_{\alpha}}:\ket{\psi_{0}}=\ket{0},\ket{\psi_{\alpha}}=\sqrt{\frac{1}{3}}\ket{0}+  \sqrt{\frac{2}{3}}e^{\frac{2\pi i}{3}(\alpha-1)}\ket{1} \text{ for } \alpha\in\{1,2,3\} \}$. However, the pre-shared state can become noisy. Considering depolarising noise acting on the state, the final state can be written as $\rho_{\epsilon_s}:= \epsilon_s\ket{\psi^-}\bra{\psi^-}+(1-\epsilon_s)\frac{\mathbb{I}}{2}\otimes\frac{\mathbb{I}}{2}$ where $\epsilon_s\in (0,1]$. Also, the measurement device can be noisy. Here,  we specifically consider that depolarising noise acts on the measurement device. The effective measurement ${\map M}_X$, where $X=A,B$ for Alice and Bob respectively, is given as follows:

\begin{equation}
\begin{split}
& \Pi_{0}=\begin{pmatrix}
\frac{1+\epsilon_X}{4} & 0 \\
0 & \frac{1-\epsilon_X}{4} 
\end{pmatrix},
\Pi_{1}=\begin{pmatrix}
\frac{3-\epsilon_X}{12} & \frac{\epsilon_X}{3\sqrt{2}} \\
\frac{\epsilon_X}{3\sqrt{2}} & \frac{3+\epsilon_X}{12} 
\end{pmatrix}, \Pi_{2}=\begin{pmatrix}
\frac{3-\epsilon_X}{12} & \frac{i^{\frac{1}{3}}\epsilon_X}{3\sqrt{2}} \\
\frac{i(i+\sqrt{3})\epsilon_X}{6\sqrt{2}} & \frac{3+\epsilon_X}{12}\\
\end{pmatrix},
\Pi_{3}=\begin{pmatrix}
\frac{3-\epsilon_X}{12} & \frac{i(i+\sqrt{3})\epsilon_X}{6\sqrt{2}} \\
-\frac{i^{\frac{1}{3}}\epsilon_X}{3\sqrt{2}} & \frac{3+\epsilon_X}{12} 
\end{pmatrix} \\
\end{split}
\end{equation}

Here $\epsilon_A,\epsilon_B\in(0,1]$. Using this noisy state and measurement, Alice and Bob obtain the correlation given in the table \ref{table: noisy singlet-noisy sic povm}. Note that this correlation belongs to $T_2[2,2,4]$ for arbitrary non-zero values of parameters $\epsilon_S,\epsilon_A,\epsilon_B$. Thus, the detection scheme proposed here is robust against arbitrary depolarising noise. Following analogous treatment as here, robustness against arbitrary noise can be shown if Alice and Bob want to simulate correlation $\{p(a,b)\}_{a,b}\in T_2[2,2,4]$ such that $p(i,i)=\frac{1}{8}~\forall ~i\in \{0,1,2,3\}$.  

    \begin{table}[h!]
        \centering
        \begin{tabular}{|c|c|c|c|c|}
        \hline
            $a\backslash b$ & $0$ & $1$ & $2$ & $3$\\
        \hline    
            $0$ & $\frac{1}{16}(1-\epsilon)$ & $\frac{1}{48}(3+\epsilon)$ & $\frac{1}{48}(3+\epsilon)$ & $\frac{1}{48}(3+\epsilon)$\\
        \hline    
            $1$ & $\frac{1}{48}(3+\epsilon)$ & $\frac{1}{16}(1-\epsilon)$ & $\frac{1}{48}(3+\epsilon)$ & $\frac{1}{48}(3+\epsilon)$\\
        \hline    
            $2$ & $\frac{1}{48}(3+\epsilon)$ & $\frac{1}{48}(3+\epsilon)$ & $\frac{1}{16}(1-\epsilon)$ & $\frac{1}{48}(3+\epsilon)$\\
        \hline    
            $3$ & $\frac{1}{48}(3+\epsilon)$ & $\frac{1}{48}(3+\epsilon)$ & $\frac{1}{48}(3+\epsilon)$ & $\frac{1}{16}(1-\epsilon)$\\
        \hline    
        \end{tabular}
        \caption{Joint probability $p(a,b)$ obtained from noisy state $\rho_{\epsilon_s}$ and noisy qubit SIC-POVM measurement. Here, $\epsilon=\epsilon_s \epsilon_A\epsilon_B$ and $\epsilon_s, \epsilon_A,\epsilon_B$ are the parameters corresponding to the noise in the pre-shared state, Alice's and Bob's measuring device respectively.}
        \label{table: noisy singlet-noisy sic povm}
    \end{table}

\subsection{$\mathbb{G}^{sym}~(2,2,3)$}\label{sec:S223S}
For the task  $\mathbb{G}~(2,2,3)$ discussed in the sub-section \ref{sec:S223}, we showed that no correlation in the set $T_2[2,2,3]$ can be obtained by qubit projective simulable measurement or analogously by any two level classically correlated system shared between Alice and Bob. About this task, a natural question is regarding the operational dimension required to simulate some specific correlation in $T_2[2,2,3]$ using projective simulable measurement only. For the task $\mathbb{G}^{sym}~(2,2,3)$, we assume Alice and Bob both have access to {\it identical copies} of an uncharacterised projective simulable measuring device as well as two bipartite preparations each of local dimension $2$ which can only be used adaptively. This is to say the parties cannot perform entangled measurements on the two local subsystems. Alternatively, they can perform a projective measurement on each local subsystem and jointly post-process the outcomes obtained from them.  Here we define the target correlation set to be $T[2,2,3]$ which contains a single correlation specified as $p(0,0)=p(1,1)=p(2,2)=0$ and $p(0,1)=p(0,2)=p(1,0)=p(1,2)=p(2,0)=p(2,1)=\frac{1}{6}$. Here we show that Alice and Bob while using {\it identical copies} of an uncharacterised projective simulable measuring device cannot obtain the target correlation under the aforementioned restriction even when a pair of two-qubit states are pre-shared between them. 

\begin{theo}\label{theo:cl-quant223sym}
     Sharing a pair of two-qubit states which can be used adaptively, the correlation in $T[2,2,3]$ cannot be obtained using identical projective simulable measurement which can otherwise be obtained from an identical copy of qubit non-projective measurement.
\end{theo}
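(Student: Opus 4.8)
The plan is to collapse the quantum description to a classical one and then to rule out the target by analysing a single identical post-processing of two shared bits.

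\emph{Classical reduction.} Since both parties apply the \emph{same} projective-simulable device and, by assumption, no entangled measurement is performed across the two local subsystems, I would invoke Theorem~\ref{theo:classicalcorr} together with the remark following Corollary~\ref{cor:non-proj detect}: the two shared two-qubit states and the identical projective-simulable measurements are then equivalent to two shared bipartite classical systems of local dimension $2$ on which both parties evaluate one common (in general stochastic and adaptive) post-processing. Using Lemma~\ref{lemma:generator2} on each pair, I may bring each to the canonical form $(\lambda_m,0,0,1-\lambda_m)$ up to local relabelling, so that Alice holds $(j_1,j_2)$, Bob holds $(k_1,k_2)$, the two pairs are independent, and both parties feed their symbols into a single map $g:\{0,1\}^2\to\{0,1,2\}$ that is identical for the two of them.

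\emph{Constraints from the target.} Writing $A_a=\{\alpha:g(a\,|\,\alpha)>0\}$ for the support of outcome $a$ and $Q$ for the joint distribution of the inputs $(\alpha,\beta)$, the requirement $p(a,a)=0$ is equivalent to $Q$ vanishing on $A_a\times A_a$ for every $a$. As the sets $A_a$ cover $\{0,1\}^2$, this forces $Q(\alpha,\alpha)=0$ for all $\alpha$, and unfolding the product form $Q=q_1\otimes q_2$ shows that at least one pair must be perfectly anti-correlated (its diagonal support must be empty). The remaining target data then read: every off-diagonal cell is populated, all marginals are uniform, $p(a)=p(b)=\tfrac13$, and every off-diagonal probability equals $\tfrac16$. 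I would record these as a polynomial system in $\lambda_1,\lambda_2$ and the entries of the common map $g$.

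\emph{Inconsistency, and the main obstacle.} The theorem then amounts to showing this system has no solution, and this is the genuinely delicate step: because two independent copies are used adaptively together with a stochastic identical map, the accessible set of correlations is strictly larger than the single-copy set, so one cannot merely quote the single-copy impossibility of Theorem~\ref{theo:classical223 noiseless}. The leverage I would try to exploit is precisely the identical-measurement constraint, namely that the \emph{same} $g$ acts on Alice's and on Bob's correlated symbols, so that reproducing a permutation-symmetric, zero-diagonal and uniformly off-diagonal pattern on all three outcomes overdetermines $g$. Concretely, the degenerate branch in which the second pair carries no information reduces the protocol to a single shared pair and is excluded by Theorem~\ref{theo:classical223 noiseless}, so the work concentrates on showing that any genuine use of the second (anti-correlated) copy that enlarges the outcome supports is incompatible with keeping the whole diagonal at zero while holding all six off-diagonal weights equal. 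Having established this, the separation is closed by achievability on the non-projective side: the identical trine POVM on a single shared singlet produces exactly $p(a,a)=0$ and $p(a,b)=\tfrac16$ for $a\neq b$, as computed in the proofs of Theorems~\ref{theo:quant223 noiseless} and~\ref{theo:quant223}, which is the correlation declared unreachable above by identical projective-simulable measurements.
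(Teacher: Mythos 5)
Your overall route --- reduce, via Theorem~\ref{theo:classicalcorr} and the identical-device remark, to two shared classical bit-pairs processed by a single common stochastic map, and then derive a contradiction from the zero-diagonal condition --- is the same as the paper's, and your achievability half (singlet plus identical trine POVM, as in Theorem~\ref{theo:quant223}) matches the paper exactly. The difficulty is that you stop precisely where the proof has to happen: you yourself flag the inconsistency of the resulting polynomial system as ``the genuinely delicate step'' and then only \emph{describe} what would need to be shown about a genuine use of the second, anti-correlated copy, without showing it. As written, the impossibility half is a plan rather than a proof, and that plan is the entire content of the theorem.

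There is also a tension in your intermediate step which, once resolved, makes the argument far shorter than you anticipate. If you genuinely work with the canonical form $(\lambda_m,0,0,1-\lambda_m)$ of Lemma~\ref{lemma:generator2}, each pair is perfectly \emph{correlated} and its diagonal support is never empty, so your own deduction that ``at least one pair must be perfectly anti-correlated'' is already a contradiction --- there is no anti-correlated branch left to analyse. This is essentially how the paper closes the argument: after disposing of the boundary cases $\lambda,\lambda'\in\{0,1\}$ by reduction to a single shared pair and Theorem~\ref{theo:SRG1} (your ``degenerate branch''), the four weights $\lambda\lambda'$, $\lambda(1-\lambda')$, $(1-\lambda)\lambda'$, $(1-\lambda)(1-\lambda')$ are all strictly positive, the perfectly correlated support gives $p(a,a)=\sum_{m}\mu_m\, s_{am}^2$, and imposing $p(a,a)=0$ for $a=0,1,2$ forces every entry $s_{am}$ of the common map to vanish, contradicting $\sum_a s_{am}=1$; the uniform off-diagonal values $1/6$ are never needed. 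If instead you intend $q_1,q_2$ to be general bipartite distributions (which is the only way an anti-correlated pair can arise), then you cannot simultaneously invoke the canonical form, and the case analysis you defer is exactly the missing content. Either commit to the canonical reduction and finish in one line, or drop it and actually carry out the anti-correlated case.
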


\begin{proof}
 The correlation in $T[2,2,3]$ can be obtained using a pre-shared qubit singlet state and trine measurement by both Alice and Bob (See proof of theorem \ref{theo:quant223}).

We will now show the impossibility of obtaining this correlation using projective simulable measurement adaptively on a two-bipartite qubit state. Using lemma \ref{lemma:generator2} and theorem \ref{theo:classicalcorr}, we can equivalently use correlation that can be obtained using two different copies of two-level classically correlated systems. {\it W.l.o.g}, we assume that the shared classically correlated state is $\omega_{A,B}^{Cl}\otimes\mu_{A,B}^{Cl}=(\lambda,0,0,1-\lambda)\otimes (\lambda',0,0,1-\lambda')=(\lambda\lambda',0,0,0,0,\lambda(1-\lambda'),0,0,0,0,(1-\lambda)\lambda',0,0,0,0,(1-\lambda)(1-\lambda))$ where $\lambda,\lambda'\in[0,1]$.

Note, if $\lambda=0\lor 1$, then $\omega_{A,B}^{Cl}$ can be locally prepared by Alice and Bob without additional pre-shared resources. Additionally, from the theorem \ref{theo:SRG1} we know that pre-shared two-level classically correlated state cannot simulate the correlation $p(0,0)=p(1,1)=p(2,2)=0, ~p(0,1)=p(0,2)=p(1,0)=p(1,2)=p(2,0)=p(2,1)=\frac{1}{6}$. Thus, using state $\omega_{A,B}^{Cl}\otimes\mu_{A,B}^{Cl}$ where $\lambda=0\lor 1$ and $\lambda'\in[0,1]$, we cannot cannot simulate the aforementioned correlation. Using a similar argument, if $\lambda'=0\lor 1$ then using state $\omega_{A,B}^{Cl}\otimes\mu_{A,B}^{Cl}$ where  where $\lambda\in[0,1]$ we cannot cannot simulate the aforementioned correlation. Thus, $\lambda,\lambda'\in(0,1)$ for possibly obtaining the target correlation in this case.

A general three-outcome local stochastic map for Alice and Bob be 

\begin{equation}\label{eq: cl-quant223sym 1}
S_{4\rightarrow 3}:=(s_{lm})_{l=0,1,2\atop m=0,1,2,3}
\end{equation}

such that 
\begin{equation}\label{eq: cl-quant223sym 2}
\sum_{l=0}^2 s_{lm}=1~\forall~m\in \{{0,1,2,3}\}\text{ and } s_{lm}\ge 0
\end{equation}
Now, the obtained correlations from using the pre-shared state and the stochastic map are given by 
\begin{equation}
P=(S_{4\rightarrow 3}\otimes S_{4\rightarrow 3})(\omega_{A,B}^{Cl}\otimes\mu_{A,B}^{Cl})^T:= (p(a,b))_{a,b=0,1,2}
\end{equation}

Using the result, $\lambda,\lambda'\in (0,1)$ and solving the equations $p(0,0)=0$, we get that $s_{00}=s_{01}=s_{02}=s_{03}=0$. Similarly, after solving for $p(1,1)=p(2,2)=0$, we get that $s_{lm}=0~\forall~ l,m$. This solution is in contradiction with eq.(\ref{eq: cl-quant223sym 1}) and (\ref{eq: cl-quant223sym 2}).
\end{proof}

Now a natural question is whether the set of all correlations that can be generated using a pair of two-qubit pre-shared states adaptively and performing identical projective simulable measurement is a proper subset of $\mathcal{C}^{Q}_2 (3)$. The answer is negative as the correlation $p(0,0)=p(2,2)=\frac{1}{4},p(1,1)=\frac{1}{2},p(i,j\ne i)=0$ can be obtained using the former which does not belong to the latter as the mutual information of this distribution is greater than $1$.

\section{Detecting Qutrit Non-projective Measurements}\label{sec:qutrit non-proj}
In this section, we will discuss some operational tasks for detecting qutrit non-projective simulable measurement. For this purpose, here we will consider both scenarios when local measurement devices are assumed to be identical copies as well as when they could be different. In the generic case, when uncharacterised measurement devices are not restricted to being identical, the number of parameters associated with characterizing correlations obtained from projective simulable measurement for the task is quite large. Therefore, either checking if a class of target correlation is achievable by projective simulable measurement or optimizing the projective simulable bound on the figure of merit for a task becomes computationally challenging. We provide evidence using numerical optimization for the projective-simulable bounds in such a scenario. In subsection \ref{sec:S235} and \ref{sec:S335} we consider a bipartite and tripartite scenario respectively for the detection of qutrit five-outcome non-projective measurement using a figure of merit. We numerically optimise the projective simulable bound and show a violation of it using qutrit non-projective measurement. In subsection \ref{sec:S235S}, we consider a bipartite scenario for the detection of qutrit five-outcome non-projective measurement while assuming that the local measuring devices are identical. We show that some specific set of target correlation cannot be achieved using qutrit projective simulable measurement. We then show that
these correlations can be generated using qutrit non-projective measurement even in the presence of arbitrary
noise.

\subsection{Detecting $5$-outcome qutrit non-projective Measurements: $\mathbb{G}~(2,3,5)$}\label{sec:S235}

For this task $\mathbb{G}~(2,3,5)$, we consider two spatially separated parties, Alice and Bob, sharing correlated systems of local operational dimension $3$ obtained from the preparation device ${\map P}_{AB}$. They have a measurement device with $5$ outcomes, {\it i.e.}, ${\map M}_A$ and ${\map M}_B$ can yield outputs $a\in\{0,1,2,3,4\}$ and $b\in\{0,1,2,3,4\}$ respectively. We define a figure of merit for this task ${\map R}[\mathbb{G}~(2,3,5)]= [\min_{a\ne b\atop(a,b)\ne (0,4)\lor (4,0)} p(a,b)]-\sum_{i=0}^4 p(i,i)$. For this task, now we will provide a bound on the payoff that can be achieved using qutrit projective simulable measurement by both parties.
\subsubsection{Evidence for Qutrit Projective Simulable Bound}\label{sec: 235 numerical bound}
For this task, we numerically optimised the payoff for qutrit projective simulable measurement to obtain projective simulable bound. Using theorem \ref{theo:classicalcorr}, we performed this optimization over correlations generated using a shared bipartite classical system with local operational dimension $3$. The optimization is as follows:
\begin{itemize}
    \item A general classical bipartite state with local operational dimension three can be written as a row matrix: $$\omega_{A,B}^{Cl}=(\lambda_{ij})_{i,j=0,1,2}$$ such that $\sum_{i,j=0}^2 \lambda_{ij}=1$ and $\lambda_{ij}\ge 0$.
    \item A general local stochastic map $$S_{3\rightarrow 5}:=(s_{lm})_{l=0,\dots,4\atop m=0,1,2}$$ such that $\sum_{l=0}^4 s_{lm}=1$ $\forall~m\in \{{0,1,2}\}$ and $s_{lm}\ge 0$.
    \item Calculate $P:= (p(a,b))_{a,b=0,\dots,4} =(S_{3\rightarrow 5}^{A}\otimes S_{3\rightarrow 5}^{B})(\omega_{A,B}^{Cl})^T$.
    \item Maximise $p(0,1)-p(0,0)-p(1,1)-p(2,2)-p(3,3)-p(4,4)$~~~ ({\it wlog})\\such that, $p(0,1)\leq p(i,j) ~\forall~(i\ne j \text{ and } (i,j)\ne(0,4)\lor (4,0))$.
\end{itemize}
Optimizing numerically under the constraints mentioned above, we obtain the maximum value of the merit of the game $\mathbb{G}~(2,3,5)$ using classical strategies as 
\begin{equation}\label{eq:235 numeric classical}
{\map R}_{max}^{Cl}[\mathbb{G}~(2,3,5)]=1.84749\times10^{-10}\approx 0
\end{equation}
\subsubsection{Quantum Violation of ${\map R}_{max}^{Cl}[\mathbb{G}~(2,3,5)]$}
Here we will discuss a quantum strategy with a two qutrit state and quantum measurement that violates this classical/projective-simulable bound. The two-qutrit state shared between the two parties is as follows:
\begin{equation*}\label{eq:235 numeric quant state}
\begin{split}
\ket{\psi}_{AB}= \frac{1}{\sqrt{6}}(& \ket{01}_{AB}+\ket{02}_{AB}+\ket{10}_{AB}+\ket{12}_{AB}+\ket{20}_{AB}-\ket{21}_{AB})
\end{split}
\end{equation*}

Both the parties perform the POVM: 
\begin{equation*}
\begin{split}
{\map M}_A= {\map M}_B= & \{\Pi_{\alpha}=\ket{\psi_{\alpha}}\bra{\psi_{\alpha}}:\ket{\psi_{0}}=\ket{0}, \ket{\psi_{1}}=\frac{1}{\sqrt{2}}\ket{1},\ket{\psi_{2}}=\frac{1}{\sqrt{2}}\ket{2},\\
& \ket{\psi_{3}}=\frac{1}{2}(\ket{1}+\ket{2}), \ket{\psi_{4}}=\frac{1}{2}(\ket{1}-\ket{2})\}
\end{split}    
\end{equation*}
Now, we calculate the correlation obtained: $p(a,b)=\Tr \left[(\Pi_{a}^{A}\otimes \Pi_{b}^{B})|\psi\rangle\langle\psi|_{A,B}\right]$. The payoff for $\mathbb{G}~(2,3,5)$ using this strategy turns out to be
\begin{equation}
{\map R}^{Q}[\mathbb{G}~(2,3,5)]=\frac{1}{48}=0.0208>{\map R}_{max}^{Cl}[\mathbb{G}~(3,3,5)]
\end{equation}
The violation of the numerically obtained projective simulable bound on the figure of merit implies that the POVM ${\map M}_A,{\map M}_B$ are non-projective simulable measurements.
\subsection{Detecting $5$-outcome qutrit non-projective Measurements (symmetric) : $\mathbb{G}^{sym}~(2,3,5)$}\label{sec:S235S}
Now we will consider a variant of the task $\mathbb{G}~(2,3,5)$, where we assume that the uncharacterised measuring devices by both Alice and Bob are identical. Consider the set of correlations $N_{AB}$ as defined in the table \ref{tab:235symm} where $p,\epsilon\in[0,1]$. For this task, we define the winning condition as simulating the correlation of the form given in the table \ref{tab:235symm} where $\epsilon=p=1$. In other words, the target correlation $T[2,3,5]$ contains a single  correlation specified as following:

\begin{equation}\label{eq:235symtag}
    \begin{split}        &p(0,0)=p(1,1)=p(2,2)=p(3,3)=p(4,4)=0,\\
    &p(0,4)=p(4,0)=0\\    &p(0,1)=p(0,2)=p(1,0)=p(2,0)=\frac{1}{12}\\
    &p(0,3)=p(3,0)=\frac{1}{6}\\
    &p(1,2)=p(2,1)=p(3,4)=p(4,3)=\frac{1}{24}\\    &p(1,3)=p(1,4)=p(2,3)=p(2,4)=p(3,1)=\frac{1}{48}\\
    &p(3,2)=p(4,1)=p(4,2)=\frac{1}{48}
    \end{split}
\end{equation}

\begin{table}[h!]
    \centering
    \begin{tabular}{|c|c|c|c|c|c|}
    \hline
        $a\backslash b$ & $0$ & $1$ & $2$ & $3$ & $4$\\
        \hline
        $0$ & $\frac{1-\epsilon^2p}{9}$& $ \frac{2+\epsilon^2p}{36}$ & $ \frac{2+\epsilon^2p}{36}$ & $ \frac{2+\epsilon p+3\epsilon^2p}{36}$ & $ \frac{2-\epsilon p-\epsilon^2p}{36}$\\
         \hline
         $1$& $ \frac{2+\epsilon^2p}{36}$ & $ \frac{1-\epsilon^2p}{36}$ & $ \frac{2+\epsilon^2p}{72}$ & $ \frac{4+2\epsilon p-3\epsilon^2p}{144}$ & $ \frac{4-2\epsilon p+\epsilon^2p}{144}$\\
         \hline
         $2$& $ \frac{2+\epsilon^2p}{36}$ & $ \frac{2+\epsilon^2p}{72}$ & $ \frac{1-\epsilon^2p}{36}$ & $ \frac{4+2\epsilon p-3\epsilon^2p}{144}$ & $ \frac{4-2\epsilon p+\epsilon^2p}{144}$\\
        \hline
         $3$& $ \frac{2+\epsilon p+3\epsilon^2p}{36}$ & $ \frac{4+2\epsilon p-3\epsilon^2p}{144}$ & $ \frac{4+2\epsilon p-3\epsilon^2p}{144}$ & $ \frac{1+\epsilon p-2\epsilon^2p}{36}$ & $ \frac{2+\epsilon^2p}{72}$\\
         \hline
         $4$& $ \frac{2-\epsilon p-\epsilon^2p}{36}$ & $ \frac{4-2\epsilon p+\epsilon^2p}{144}$ & $ \frac{4-2\epsilon p+\epsilon^2p}{144}$ & $ \frac{2+\epsilon^2p}{72}$ & $ \frac{1-\epsilon p}{36}$\\
         \hline
    \end{tabular}
    \caption{The set of all joint probability distributions $p(a,b)$ in the set $N_{AB}$ where $p,\epsilon\in [0,1]$ for the task $\mathbb{G}^{sym}(2,3,5)$.}
    \label{tab:235symm}
\end{table}

Now we will show that this correlation cannot be obtained using two identical qutrit projective simulable measuring devices by Alice and Bob.

\subsubsection{Simulability of target correlation}

\begin{theo}\label{theo:Cl235sym}
    When using identical qutrit projective simulable measurement on a bipartite state with local operational dimension $3$, it is impossible to obtain correlation in $T[2,3,5]$.
\end{theo}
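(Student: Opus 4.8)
The plan is to reduce the statement to a purely classical problem and then extract a combinatorial contradiction from the \emph{zero-pattern} of the target correlation. By Theorem~\ref{theo:classicalcorr} together with the subsequent remark on identical devices, any correlation obtained from identical qutrit projective-simulable measurements on a shared state of local operational dimension $3$ is equally obtainable by sharing a classical state $\Lambda=(\lambda_{mm'})_{m,m'=0}^{2}$, with $\lambda_{mm'}\ge 0$ and $\sum_{m,m'}\lambda_{mm'}=1$, and letting both parties apply the \emph{same} column-stochastic post-processing $S=(s_{am})_{a=0,\dots,4;\,m=0,1,2}$ with $\sum_{a}s_{am}=1$. The induced correlation is $p(a,b)=\sum_{m,m'}s_{am}\lambda_{mm'}s_{bm'}$. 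I would therefore assume, for contradiction, that some pair $(\Lambda,S)$ reproduces the target of Eq.~\eqref{eq:235symtag}, and use only its zero-pattern: every diagonal entry vanishes, $p(0,4)=p(4,0)=0$, and all remaining off-diagonal entries are strictly positive.

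The first step exploits the vanishing diagonal. Since all summands are nonnegative, $p(a,a)=0$ forces $s_{am}s_{am'}=0$ whenever $\lambda_{mm'}>0$. Taking $m=m'$ shows that $\lambda_{mm}>0$ would make $s_{am}=0$ for every $a$, contradicting $\sum_a s_{am}=1$; hence $\Lambda$ has zero diagonal. Writing $T_a:=\{m:s_{am}>0\}$ for the support of the $a$-th row, and defining the undirected graph $G$ on $\{0,1,2\}$ with an edge $\{m,m'\}$ whenever $\lambda_{mm'}>0$ or $\lambda_{m'm}>0$, the same computation shows each $T_a$ is an independent set of $G$. Moreover $p(a,b)>0$ holds precisely when some directed edge of $\Lambda$ runs from $T_a$ to $T_b$, and $T_a\neq\varnothing$ for every $a$ because every outcome has a strictly positive joint probability with some other outcome. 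Finally $G$ carries at least one edge, since the off-diagonal entries of $\Lambda$ sum to $1$.

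The core of the argument is a counting contradiction, obtained by casing on the number of edges of $G$ and using the freedom to relabel the three classical inputs (which leaves $p(a,b)$ invariant). In each case I would produce a partition of the five outcomes into at most three classes such that any two outcomes in the same class form a \emph{zero-pair}. For a triangle ($3$ edges) every $T_a$ is a singleton, so grouping outcomes by their input makes same-class pairs vanish through the zero diagonal of $\Lambda$. For a single edge, say $\{0,1\}$, input $2$ carries no $\Lambda$-edge, so every outcome is effectively of ``type $0$'' or ``type $1$'' and same-type pairs vanish, a bipartite structure. For two edges, say the path $\{0,1\},\{1,2\}$, every edge of $G$ contains the central vertex $1$, so a positive $p(a,b)$ requires $T_a=\{1\}$ or $T_b=\{1\}$; the ``central'' outcomes are pairwise zero (diagonal again) and the ``non-central'' ones are pairwise zero, yielding two classes. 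The decisive observation is then that the target has only a \emph{single} off-diagonal zero-pair, $\{0,4\}$: any internally all-zero class is a clique in this one-edge zero-graph, hence has size at most $2$, and a class of size $2$ must equal $\{0,4\}$. As the classes are disjoint, at most one can have size $2$, so with $c\le 3$ classes the total number of outcomes is bounded by $c+1\le 4<5$, the required contradiction. The step I expect to be most delicate is organizing this case analysis uniformly and justifying that only the zero-pattern, not the specific probabilities, enters, so that the single admissible zero-pair $\{0,4\}$ can be played against all three graph topologies at once.
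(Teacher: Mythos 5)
Your proposal is correct, and while its first half coincides with the paper's proof, its second half takes a genuinely different and arguably cleaner route. Both arguments begin identically: reduce via Theorem~\ref{theo:classicalcorr} (and the remark on identical devices) to a shared classical state $(\lambda_{mm'})$ with a common stochastic post-processing, and use nonnegativity of the summands in $p(a,a)=0$ together with $\sum_a s_{am}=1$ to force $\lambda_{mm}=0$ for all $m$. From there the paper substitutes the explicit numerical values of the target correlation in Eq.~\eqref{eq:235symtag} (table~\ref{tab:235symmspecific}) into the remaining quadratic equations and reports that the resulting system has no solution consistent with the stochasticity constraints --- an essentially computational step whose details are not displayed. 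You instead use only the \emph{support} of the target: the supports $T_a$ are independent sets of the graph on the three classical inputs induced by the positive $\lambda_{mm'}$, positivity of $p(a,b)$ requires a $\Lambda$-edge from $T_a$ to $T_b$, and a case analysis over the (at most three) possible edge-topologies on three vertices partitions the five outcomes into at most three classes whose internal pairs must all vanish. Since the target's off-diagonal zero-graph has the single edge $\{0,4\}$, each class is a clique of size at most two with at most one class of size two, bounding the number of outcomes by four --- a contradiction. I checked the case analysis (triangle, single edge, two-edge path) and the handling of possibly asymmetric $\Lambda$; all steps go through. Your approach buys two things: it replaces an opaque "no solution exists" computation with a human-checkable combinatorial argument, and it is manifestly more general --- it rules out \emph{every} correlation whose five outcomes cannot be partitioned into at most three cliques of its off-diagonal zero-graph, not just the specific numerical table, which would in particular strengthen the robustness discussion of Sec.~\ref{sec: 235sym robustness} for the zero-pattern-preserving part of the noise.
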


\begin{proof}
From theorem \ref{theo:classicalcorr}, correlations obtained using projective simulable measurement on a bipartite qutrit state can also be obtained using a three-level classically correlated system. We can assume that the initial shared state is classically correlated which is denoted as $\omega_{A,B}^{Cl}=(\lambda_{ij})_{i,j=0,1,2}$ such that $\sum_{i,j=0}^2 \lambda_{ij}$ and $\lambda_{ij}\ge 0$. A local stochastic map for both parties is given as follows:

\begin{equation}
S_{3\rightarrow 5}:=(s_{lm})_{l=0,\dots,4\atop m=0,1,2}
\end{equation}
such that 

\begin{equation}\label{eq:235sym stoc}
\sum_{l=0}^4 s_{lm}=1~\forall~m\in \{{0,1,2}\} \text{ and } s_{lm}\ge 0.
\end{equation}

Using the identical local stochastic map on the classical shared state above the parties obtain the correlation described by the following equation:

\begin{equation}\label{eq:235sym 1}
    P=(S_{3\rightarrow 5}^{A}\otimes S_{3\rightarrow 5}^{B})(\omega_{A,B}^{Cl})^T:= (p(a,b))_{a,b=0,\dots,4}
\end{equation} 

Now, using the expression from eq.(\ref{eq:235sym 1}) and substituting the values from eq.(\ref{eq:235symtag}), we get the following equation for probabilities of correlated outcomes:

\begin{equation}\label{eq:235sym 2}
    \begin{split}
    &p(0,0)=\lambda _{00} s_{00}^2+\lambda _{01} s_{01} s_{00}+\lambda _{02} s_{02} s_{00}+\lambda _{10} s_{01} s_{00}+\lambda _{20} s_{02} s_{00}+\lambda_{11} s_{01}^2+\lambda _{12} s_{01} s_{02}+\lambda _{21} s_{01} s_{02}+\lambda _{22} s_{02}^2=0\\
    &p(1,1)=\lambda _{00} s_{10}^2+\lambda _{01} s_{11} s_{10}+\lambda _{02} s_{12} s_{10}+\lambda _{10} s_{11} s_{10}+\lambda _{20} s_{12} s_{10}+\lambda _{11} s_{11}^2+\lambda _{12} s_{11} s_{12}+\lambda _{21} s_{11} s_{12}+\lambda _{22} s_{12}^2=0\\
    &p(2,2)=\lambda _{00} s_{20}^2+\lambda _{01} s_{21} s_{20}+\lambda _{02} s_{22} s_{20}+\lambda _{10} s_{21} s_{20}+\lambda _{20} s_{22} s_{20}+\lambda _{11} s_{21}^2+\lambda _{12} s_{21} s_{22}+\lambda _{21} s_{21} s_{22}+\lambda _{22} s_{22}^2=0\\
    &p(3,3)=\lambda _{00} s_{30}^2+\lambda _{01} s_{31} s_{30}+\lambda _{02} s_{32} s_{30}+\lambda _{10} s_{31} s_{30}+\lambda _{20} s_{32} s_{30}+\lambda _{11} s_{31}^2+\lambda _{12} s_{31} s_{32}+\lambda _{21} s_{31} s_{32}+\lambda _{22} s_{32}^2=0\\
    &p(4,4)=\lambda _{00} s_{40}^2+\lambda _{01} s_{41} s_{40}+\lambda _{02} s_{42} s_{40}+\lambda _{10} s_{41} s_{40}+\lambda _{20} s_{42} s_{40}+\lambda _{11} s_{41}^2+\lambda _{12} s_{41} s_{42}+\lambda _{21} s_{41} s_{42}+\lambda _{22} s_{42}^2=0
    \end{split}
\end{equation}

All the individual terms in the system of eq.(\ref{eq:235sym 2}) are non-negative. They must be individually zero because the probabilities are zero for correlated outcomes. Now we will show that $\lambda_{ii}= 0$ where $i\in\{0,1,2\}$. Let us assume $\lambda_{ii}\ne 0$ where $i\in\{0,1,2\}$. Then $s_{ji}=0$ where $j\in\{0,1,2,3,4\}$. This leads to a contradiction as from eq.(\ref{eq:235sym stoc}), we know that $\sum_j s_{ji}=1$.  Therefore, $\lambda_{ii}= 0$ where $i\in\{0,1,2\}$. Next substituting the values of $\lambda_{ii}$ in eq.(\ref{eq:235sym 1}) and equating them with the corresponding values from eq.(\ref{eq:235symtag}) we get that there is no solution for $\omega_{A,B}^{Cl}$ and $S_{3\rightarrow 5}$ while simultaneously satisfying the constraints given in eq.(\ref{eq:235sym stoc}).
\end{proof}
\begin{theo}\label{theo:235quant}
    The correlation in $T[2,3,5]$ can be achieved using two identical qutrit non-projective measurements.
\end{theo}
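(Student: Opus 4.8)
The plan is to prove the statement constructively: exhibit a single two-qutrit state together with one identical five-outcome POVM for both parties, verify by direct evaluation that it reproduces the target distribution of Table~\ref{tab:235symmspecific} (equivalently equation~\ref{eq:235symtag}), and then appeal to Theorem~\ref{theo:Cl235sym} to certify that the measurement is non-projective simulable. A convenient choice is exactly the state and measurement already introduced for the task $\mathbb{G}~(2,3,5)$ in Section~\ref{sec:S235}: the shared state $\ket{\psi}_{AB}=\frac{1}{\sqrt{6}}(\ket{01}+\ket{02}+\ket{10}+\ket{12}+\ket{20}-\ket{21})$ and the identical measurement ${\map M}_A={\map M}_B=\{\Pi_\alpha\}_{\alpha=0}^{4}$ with $\Pi_0=\proj{0}$, $\Pi_1=\tfrac12\proj{1}$, $\Pi_2=\tfrac12\proj{2}$, $\Pi_3=\tfrac14(\ket{1}+\ket{2})(\bra{1}+\bra{2})$ and $\Pi_4=\tfrac14(\ket{1}-\ket{2})(\bra{1}-\bra{2})$. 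First I would record that $\sum_\alpha\Pi_\alpha=\mathbb{I}$, so this is a legitimate POVM, and that it is manifestly the same on both sides as required by $\mathbb{G}^{sym}$.

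The central computation is $p(a,b)=\Tr[(\Pi_a\otimes\Pi_b)\proj{\psi}]$. I would organise it by grouping the state along the Alice/Bob cut as $\ket{\psi}=\frac{1}{\sqrt{6}}(\ket{0}_A\ket{u_0}_B+\ket{1}_A\ket{u_1}_B+\ket{2}_A\ket{u_2}_B)$ with $\ket{u_0}=\ket{1}+\ket{2}$, $\ket{u_1}=\ket{0}+\ket{2}$, $\ket{u_2}=\ket{0}-\ket{1}$. Writing each rank-one effect as $\Pi_\alpha=c_\alpha\proj{\phi_\alpha}$ (so $c_0=1$, $c_1=\dots=c_4=\tfrac12$), the joint probability factorises as $p(a,b)=c_a c_b\,|\langle \phi_b|v_a\rangle|^2$, where $\ket{v_a}=(\langle\phi_a|_A\otimes\mathbb{I}_B)\ket{\psi}$ is the unnormalised conditional vector on Bob's system. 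This reduces the whole table to computing five vectors $\ket{v_a}$ and their overlaps with the five $\ket{\phi_b}$; for instance $\ket{v_0}=\tfrac{1}{\sqrt6}(\ket1+\ket2)$ gives $p(0,3)=c_0c_3|\langle\phi_3|v_0\rangle|^2=\tfrac12\cdot\tfrac13=\tfrac16$, matching Table~\ref{tab:235symmspecific}. Matching all $25$ numbers completes the verification, with the vanishing of the $p(a,a)$ and of $p(0,4)=p(4,0)$, as well as the symmetry $p(a,b)=p(b,a)$, dropping out automatically from the orthogonality structure of the $\ket{\phi_\alpha}$.

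The final step carries the real content. Since ${\map M}_A={\map M}_B$ are identical and, by the computation, yield a correlation lying in $T[2,3,5]$, they cannot be projective simulable: were they projective simulable, Theorem~\ref{theo:Cl235sym} would allow the same distribution to be produced by identical projective-simulable qutrit measurements, which that theorem has already excluded. Hence the POVM is non-projective simulable, which proves the claim. The only genuinely laborious part is the bookkeeping in the $25$-entry check, which the factorised form $p(a,b)=c_a c_b|\langle\phi_b|v_a\rangle|^2$ renders routine; the conceptual weight of the result rests on the prior impossibility Theorem~\ref{theo:Cl235sym}. As a side benefit setting up the robustness discussion, the identical computation with the depolarised state and depolarised effects reproduces the full family $N_{AB}$ of Table~\ref{tab:235symm}, recovering the present target at $p=\epsilon=1$.
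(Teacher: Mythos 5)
Your proposal is correct and follows essentially the same route as the paper: the paper's proof simply exhibits the very same state $\ket{\psi}_{AB}$ and identical five-outcome POVM $\{\Pi_\alpha\}$ from Sec.~\ref{sec:S235} and asserts that they reproduce the target correlation, which your factorised computation $p(a,b)=c_ac_b|\langle\phi_b|v_a\rangle|^2$ verifies in detail (and your normalisation $1/\sqrt{6}$ corrects a typo in the paper's statement of the state). The closing appeal to Theorem~\ref{theo:Cl235sym} to certify non-projectiveness is implicit in the paper's surrounding discussion and is a sound addition.
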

\begin{proof}
    This correlation in eq.(\ref{eq:235symtag}), can be obtained using the bipartite qutrit state $\ket{\psi}_{AB}=\frac{1}{6}(\ket{01}_{AB}+\ket{02}_{AB}+\ket{10}_{AB}+\ket{12}_{AB}+\ket{20}_{AB}-\ket{21}_{AB})$ and the measurement ${\map M}_A={\map M}_B=\{\Pi_{\alpha}=\ket{\psi_{\alpha}}\bra{\psi_{\alpha}}:\ket{\psi_{0}}=\ket{0},\ket{\psi_{1}}=\frac{1}{\sqrt{2}}\ket{1},\ket{\psi_{2}}=\frac{1}{\sqrt{2}}\ket{2}, \ket{\psi_{3}}=\frac{1}{2}(\ket{1}+\ket{2}), \ket{\psi_{4}}=\frac{1}{2}(\ket{1}-\ket{2})\}$.
\end{proof}

\subsubsection{Evidence of robustness against noise}\label{sec: 235sym robustness}
Now we will provide evidence that no correlation in $N_{AB}$, except when either $p=0$ or $\epsilon=0$, (see table \ref{tab:235symm}) can be obtained using identical qutrit projective simulable measurements by both the parties. Analogous to the proof of theorem \ref{theo:Cl235sym}, we assume that the parties share a bipartite classically correlated state $\omega_{A,B}^{Cl}=(\lambda_{ij})_{i,j=0,1,2}$ such that $\sum_{i,j=0}^2 \lambda_{ij}$ and $\lambda_{ij}\ge 0$. Both the parties locally perform some stochastic operation $S_{3\rightarrow 5}:=(s_{lm})_{l=0,\dots,4\atop m=0,1,2}$
such that $\sum_{l=0}^4 s_{lm}=1~\forall~m\in \{{0,1,2}\}$  and  $s_{lm}\ge 0$. Using the shared bipartite state and the local stochastic map will yield the correlation $P:= (p(a,b))_{a,b=0,\dots,4}=(S_{3\rightarrow 5}^{A}\otimes S_{3\rightarrow 5}^{B})(\omega_{A,B}^{Cl})^T$. Next, we equate each of the probabilities with the corresponding values from the table \ref{tab:235symm}. We solve these equations for different discreet values of $p$ and $\epsilon$ where each of these parameters takes values in the set $\{0.01,0.02, 0.03, \cdots,0.99,1\}$. We obtain that the system of equations has no solution for each of these values of $p$ and $\epsilon$. This provides numeric evidence that there is most likely no bipartite state of local operational dimension three and projective simulable measurement (identical for both parties) that yields correlations in $N_{AB}$  such that neither $p=0$ nor $\epsilon=0$.

Now, we shall show that all the correlations in $N_{AB}$ can be obtained using identical qutrit non-projective measurements by both parties. Let the parties share two qutrit state $\rho_p= p\ketbra{\psi}{\psi}+(1-p)\frac{\mathbb{I}}{3}\otimes\frac{\mathbb{I}}{3}$ where $\ket{\psi}_{AB}=\frac{1}{6}(\ket{01}_{AB}+\ket{02}_{AB}+\ket{10}_{AB}+\ket{12}_{AB}+\ket{20}_{AB}-\ket{21}_{AB})$ and $p\in (0,1]$. Both Alice and Bob perform a noisy measurement ${\map M}_A={\map M}_B=\{\Pi_{\alpha}^{\epsilon}=\lambda_{\alpha}(\mathbb{I}+\epsilon \sum_{i=1}^8v_i G_i): \epsilon\in(0,1]\}$ where $\Pi_{\alpha}= \lambda_{\alpha}(\mathbb{I}+ \sum_{i=1}^8v_i G_i)$ are the effects described in proof of theorem \ref{theo:235quant} and $\{G_i \}_{i=1}^8$ are the eight $3\times 3$ Gell Mann matrices. Using this state and measurement they can obtain all the correlations in $T[2,3,5]$ (for different values of noise $p$ and $\epsilon$).

\subsection{Detecting $5$-outcome qutrit non-projective Measurements in tripartite scenario : $\mathbb{G}~(3,3,5)$}\label{sec:S335}

We will now consider $3$ spatially separated parties sharing correlated systems of local operational dimension $3$, using a preparation device ${\map P}_{ABC}$,  for the task $\mathbb{G}~(3,3,5)$. They have a measurement device with $5$ outcomes each, {\it i.e.}, ${\map M}_A$ , ${\map M}_B$ and ${\map M}_C$ which can yield outputs $a\in\{0,1,2,3,4\}$, $b\in\{0,1,2,3,4\}$ and $c\in\{0,1,2,3,4\}$ respectively. We define a figure of merit for this task ${\map R}[\mathbb{G}~(3,3,5)]= \min_{(a,b,c)\in S} p(a,b,c)$ where set $S=\{(a,b,c):a\ne b,c=0\}\bigcup\{(a,b,c):a\ne c,b=0\}\bigcup \{(a,b,c):b\ne c,a=0\}$. We will now provide a qutrit projective simulable bound on the payoff for this task $\mathbb{G}~(3,3,5)$.
\subsubsection{Evidence for Qutrit Projective Simulable Bound}\label{sec: 335 numerical bound}

We numerically optimised the payoff for qutrit projective simulable measurement for $\mathbb{G}~(3,3,5)$. From theorem \ref{theo:classicalcorr}, $\mathcal{C}^{PQ}_3(5)=\mathcal{C}^{Cl}_3(5)$. Thus, we optimised over the classical system with local operational dimension $3$ and local stochastic maps instead for finding the qutrit projective simulable bound. The optimization problem is now as follows:

\begin{itemize}
    \item A general tripartite classical state with local operational dimension three when written as a row matrix is denoted as: $$\omega_{A_,B,C}^{Cl}=(\lambda_{ijk})_{i,j,k=0,1,2}$$ such that $\sum_{i,j,k=0}^2 \lambda_{ijk}$ and $\lambda_{ijk}\ge 0$.
    \item A general local stochastic map for the parties $A,B$ and $C$ is denoted as $$S_{3\rightarrow 5}^{\mathcal{A}}:=(s_{lm}^{\mathcal{A}})_{l=0,\dots,4\atop m=0,1,2}$$ such that $\sum_{l=0}^4 s_{lm}^{\mathcal{A}}=1$ $\forall~m\in \{{0,1,2}\}$ and $s_{lm}^{\mathcal{A}}\ge 0$ where $\mathcal{A}\in\{A,B,C\}$.
    \item Calculate $P=(S_{3\rightarrow 5}^{A}\otimes S_{3\rightarrow 5}^{B}\otimes S_{3\rightarrow 5}^{C})(\omega_{A,B,C}^{Cl})^T:= (p(a,b,c))_{a,b,c=0,\dots,4}$.
    \item Maximise $p(0,1,2)$~~~ ({\it wlog})\\such that, $p(0,1,2)\ge p(0,\beta,\gamma)$ where $\gamma \ne \beta$, $p(0,1,2)\ge p(\alpha,0,\gamma)$ where $\alpha \ne \gamma$ and $p(0,1,2)\ge p(\alpha,\beta,0)$ where $\alpha \ne \beta$.
\end{itemize}
After numerically optimizing we obtain the maximum value of the merit of the game $\mathbb{G}~(3,3,5)$ using classical strategies as $$R_{max}^{Cl}[\mathbb{G}~(3,3,5)]=0.015888$$
\subsubsection{Quantum Violation of Projective Simulable Bound}
Now we present a quantum strategy with a three qutrit state and a non-projective measurement that violates this classical/ qutrit projective-simulable bound. Let the three qutrit states shared between the three parties be: 
\begin{equation}
\begin{split}
|\psi\rangle_{A,B,C}=\frac{1}{\sqrt{6}} & (|012\rangle_{A,B,C} + |120\rangle_{A,B,C} + |201\rangle_{A,B,C} - |021\rangle_{A,B,C} -|102\rangle_{A,B,C} -|210\rangle_{A,B,C})
\end{split}
\end{equation}
All parties perform the following POVM: 
\begin{equation}
\begin{split}
{\map M}_{\mathcal{A}}:=\{& \Pi_{0}=\frac{1}{2}|0\rangle\langle 0|, \Pi_{1}=\frac{1}{2}|1\rangle\langle 1|,\Pi_{2}=\frac{1}{2}|2\rangle\langle 2|, \Pi_{3}=\frac{1}{2}(|1\rangle+|2\rangle)(\langle 1|+\langle 2|),\Pi_{4}=\frac{1}{2}(|1\rangle-|2\rangle)(\langle 1|-\langle 2|)\} 
\end{split}
\end{equation}
where $\mathcal{A}\in\{A,B,C\}$. Now the correlation they obtain is given by $p(a,b,c)=\Tr \left\{(\Pi_{a}\otimes \Pi_{b} \otimes \Pi_{c})|\psi\rangle\langle\psi|_{A,B,C}\right\}$. The merit of the game $\mathbb{G}~(3,3,5)$ using this strategy turns out to be 
\begin{equation}
R^{Q}[\mathbb{G}~(3,3,5)]=0.020833>R_{max}^{Cl}[\mathbb{G}~(3,3,5)]
\end{equation}
This result implies that the POVM ${\map M}_{\mathcal{A}}$ is non-projective. 

\section{Generalised Measurements in a class of GPT beyond Quantum Theory}\label{sec:gpt non-proj}

In this section, we consider a class of GPTs, namely boxworld \cite{Janotta11}, and show that our proposed setup can detect non-sharp simulable measurements in such theories. In recent times, this class of GPTs has been extensively studied, owing to its post-quantum nonlocal properties. One of the prime outcomes of the present study is that although such a theory allows for non-sharp simulable measurements, they can be outperformed by analogous quantum resources, providing an operational test to rule out the possibility of such theories rendering them nonphysical.    

In the boxworld $\square\equiv(\Omega,\mathcal{E})$, where $\Omega$ and $\mathcal{E}$ are the state and effect space respectively. The normalised state space for elementary systems is a regular square. The states and effects are represented by vectors in $\mathbb{R}^3$ and the probability of an effect $e\in \mathcal{E}$,  $p(e|\omega)$ is given by the Euclidean inner product. The normalised state space is the convex hull of the $4$ pure states $\{\omega_i\}_{i=0}^{3}$:
\begin{equation*}
\omega_1= \begin{pmatrix}
1 \\
0 \\
1 
\end{pmatrix}; 
\omega_2= \begin{pmatrix}
0 \\
1 \\
1 
\end{pmatrix};
\omega_3= \begin{pmatrix}
-1 \\
0 \\
1 
\end{pmatrix};
\omega_4= \begin{pmatrix}
0 \\
-1 \\
1 
\end{pmatrix}
\end{equation*}
The zero and unit effects are given by
\begin{equation*}
\mathbb{O} = \begin{pmatrix}
0 \\
0 \\
0 
\end{pmatrix};~~
u = \begin{pmatrix}
0 \\
0 \\
1 
\end{pmatrix}
\end{equation*}
The set $\mathcal{E}$ of all possible measurement effects consists of the convex hull of zero effect, unit effect, and the extremal effects $\{e_i=\frac{1}{2} \tilde{e_i}\}_{i=0}^{3}$:
\begin{equation*}
\tilde{e_1}= \begin{pmatrix}
1 \\
1 \\
1 
\end{pmatrix}; 
\tilde{e_2}= \begin{pmatrix}
-1 \\
1 \\
1 
\end{pmatrix};
\tilde{e_3}= \begin{pmatrix}
-1 \\
-1 \\
1 
\end{pmatrix};
\tilde{e_4}= \begin{pmatrix}
1 \\
-1 \\
1 
\end{pmatrix}
\end{equation*}

A measurement on an elementary system is $M:=\{f_k\}_k$ such that $f_k\in\mathcal{E}~\forall k$ and $\sum_k f_k=u$. Analogous to projective measurements in quantum theory, an elementary boxworld system has two possible sharp measurements ${E}_1:=\{e_1,e_3\}$ and ${E}_2:=\{e_2,e_4\}$. Note that, the operational dimension of an elementary boxworld system is $2$ since no set of $3$ states in $\Omega$ can be perfectly distinguished by performing a $3$-outcome measurement, while the sets of any $2$ pure states such as ${\omega_1,\omega_2}$ can be perfectly distinguished by the measurement $E_2$, {\it i.e.} $Tr~(e_2^T~\omega_1)=0$ and $Tr~(e_2^T~\omega_2)=1$.   	

The states and effects corresponding to a composition of two elementary systems can be represented by $3\times 3$ real matrices. Any bipartite composition should include all the factorised extremal states and factorised extremal effects given by: 
$$(\omega_{4i+j})_{AB}:= \omega_i \otimes \omega_j^T~~and~~(E_{4i+j})_{AB}:= e_i \otimes e_j^T.$$
A composite system also allows the possibility of states $\omega_{AB}\in\Omega_{AB}$ that cannot be prepared as a statistical mixture of the product states, {\it i.e.} $\omega_{AB}\neq\sum_ip_{ij}\omega_i\otimes\omega_j^{T}$ with $\{p_{ij}\}_{ij}$ being a probability distribution. Such states are called entangled states. Bipartite pure entangled states in the boxworld are given by
\begin{eqnarray*}
    (\omega_{17})_{AB} &=& \frac{1}{2}\left(\omega_2\otimes \omega_2^{T}-\omega_3\otimes \omega_3^{T}+\omega_3\otimes \omega_4^{T}+\omega_4\otimes \omega_3^{T}\right)\\
    (\omega_{18})_{AB} &=& \frac{1}{2}\left(\omega_1\otimes \omega_4^{T}-\omega_1\otimes \omega_1^{T}+\omega_2\otimes \omega_2^{T}+\omega_4\otimes \omega_1^{T}\right)\\
    (\omega_{19})_{AB} &=& \frac{1}{2}\left(\omega_1\otimes \omega_1^{T}-\omega_2\otimes \omega_2^{T}+\omega_2\otimes \omega_3^{T}+\omega_3\otimes \omega_2^{T}\right)\\
    (\omega_{20})_{AB} &=& \frac{1}{2}\left(\omega_1\otimes \omega_1^{T}-\omega_1\otimes \omega_4^{T}+\omega_2\otimes \omega_4^{T}+\omega_4\otimes \omega_3^{T}\right)\\
    (\omega_{21})_{AB} &=& \frac{1}{2}\left(\omega_1\otimes \omega_4^{T}-\omega_1\otimes \omega_1^{T}+\omega_2\otimes \omega_1^{T}+\omega_4\otimes \omega_2^{T}\right)\\
    (\omega_{22})_{AB} &=& \frac{1}{2}\left(\omega_1\otimes \omega_1^{T}-\omega_1\otimes \omega_2^{T}+\omega_2\otimes \omega_2^{T}+\omega_4\otimes \omega_3^{T}\right)\\
    (\omega_{23})_{AB} &=& \frac{1}{2}\left(\omega_2\otimes \omega_2^{T}-\omega_3\otimes \omega_2^{T}+\omega_3\otimes \omega_3^{T}+\omega_4\otimes \omega_1^{T}\right)\\
    (\omega_{24})_{AB} &=& \frac{1}{2}\left(\omega_1\otimes \omega_2^{T}-\omega_2\otimes \omega_2^{T}+\omega_2\otimes \omega_3^{T}+\omega_3\otimes \omega_1^{T}\right)
\end{eqnarray*}
 Entangled effects are defined similarly. Whenever such entangled states and entangled effects are invoked in a GPT, they must satisfy the basic self-consistency (SC) condition -- any valid composition of systems, states, effects, and their transformations should produce non-negative conditional probabilities. One such valid model is called the {\it PR}-model which contains the the convex hull of pure states $\{(\omega_i)_{AB}\}_{i=1}^{24}$ and the convex hull of only product effects $\{E_j\}_{j=1}^{16}$. This toy model has attracted considerable interest in the recent past \cite{Janotta14, Massar14, Safi15, Dallarno17, Saha20}. In the following, we first show that the task $\mathbb{G}(2,2,3)$ proposed in Sec. \ref{sec:S223} turns out to be useful in detecting non-projective simulable measurements in boxworld. Next, we show that in the same task $\mathbb{G}(2,2,3)$, quantum systems outperform the boxworld, which establishes the task as a testable criterion for ruling out hypothetical models of the physical world.

Similar to the classical and quantum set of correlations $\mathcal{C}^{Cl}_{2}(3)$ and $\mathcal{C}^{Q}_{2}(3)$, respectively, obtained in the bipartite setting with $3$-outcome measurements on local subsystems of dimension $2$, we define: 
\begin{itemize}
    \item the set of correlations in the {\it PR}-model as $\mathcal{C}^{PR}_{2}(3):=\left\{p(a-1,b-1)| ~p(a-1,b-1)= \right.$ $\left. Tr~(({f}_a^T)_A\otimes({g}_b)_B~\omega_{AB})~;\right.$ 
    $\left. a,b\in\{1,2,3\} \right\},$ where $\omega_{AB}\in \Omega_{AB}$ and $({f}_a)_A\in \mathcal{E}_A$, $({g}_b)_B\in \mathcal{E}_B$ with $\sum_a (f_a)_A=u_A$, $\sum_b (g_b)_B=u_B$.
    \item the set of sharp-simulable correlations in the {\it PR}-model as $\mathcal{C}^{SPR}_{2}(3):=\left\{p(a-1,b-1)|\right.$ $\left.p(a-1,b-1)=\right.$ $\left.Tr~(({f}_a^T)_A\otimes ({g}_b)_B~\omega_{AB})~;\right.$ $\left.a,b\in\{1,2,3\} \right\}$, where $\omega_{AB}\in \Omega_{AB}$ and $({f}_a)_A=$ $\sum_{i\in\{0,2\}} q_{ai}~ (e_{i})_A$, $({g}_b)_B=\sum_{j\in\{1,3\}} r_{bj}~ (e_{j})_B$ with $\sum_a ({f}_a)_A=u_A$, $\sum_b ({g}_b)_B=u_B$. $\{ q_{aj}\}_{a=1}^{3}$ and $\{ r_{bj}\}_{b=1}^{3}$ are a probability distribution for all $j$.
\end{itemize}
  
\begin{theo}
    $\mathcal{C}^{SPR}_{2}(3)\subsetneq\mathcal{C}^{PR}_{2}(3)$.
\end{theo}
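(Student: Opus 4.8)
The plan is to sandwich the sharp-simulable set between two objects we already control, the classical set $\mathcal{C}^{Cl}_2(3)$ and the target family $T_1[2,2,3]$ of Eq.~\eqref{eq:corr t1 G223}. The inclusion $\mathcal{C}^{SPR}_2(3)\subseteq\mathcal{C}^{PR}_2(3)$ is immediate, since every sharp effect lies in $\mathcal{E}_\square$, so only strictness requires an argument. I would establish strictness by exhibiting a single correlation lying in $\mathcal{C}^{PR}_2(3)$ but provably not in $\mathcal{C}^{SPR}_2(3)$, and the natural witness is a zero-diagonal correlation in $T_1[2,2,3]$.

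First I would show $\mathcal{C}^{SPR}_2(3)\subseteq\mathcal{C}^{Cl}_2(3)$. For any $p\in\mathcal{C}^{SPR}_2(3)$ write $f^A_a=\sum_{i\in\{0,2\}}q_{ai}e_i$ and $g^B_b=\sum_{j\in\{1,3\}}r_{bj}e_j$, so that $p(a,b)=\sum_{i,j}q_{ai}r_{bj}\,P(i,j)$ with $P(i,j)=\Tr(e_i\otimes e_j\,\Omega^{AB})$. Because $\{e_i\}_{i\in\{0,2\}}$ and $\{e_j\}_{j\in\{1,3\}}$ are the two sharp two-outcome measurements (each summing to $u$), self-consistency of the PR-model forces $P(i,j)\ge 0$ and $\sum_{i,j}P(i,j)=1$; hence $P$ is an ordinary $2\times2$ joint distribution and $q,r$ are local $2\to3$ stochastic post-processings. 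This is precisely a bipartite classical system of operational dimension two followed by local stochastic maps, so $p\in\mathcal{C}^{Cl}_2(3)$. (Equality in fact holds, since product states reproduce the deterministic $2\times2$ distributions and their mixtures give all of them, but only the inclusion is needed here.) Combining with Theorem~\ref{theo:classicalcorr} gives $\mathcal{C}^{SPR}_2(3)\subseteq\mathcal{C}^{Cl}_2(3)=\mathcal{C}^{PQ}_2(3)$, whence Theorem~\ref{theo:classical223 noiseless} yields $\mathcal{C}^{SPR}_2(3)\cap T_1[2,2,3]=\varnothing$: no sharp-simulable strategy can realise a correlation with vanishing diagonal and strictly positive off-diagonal entries.

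It therefore suffices to place one such correlation inside the PR-model, and I would first note that entanglement is unavoidable: a product state gives $p(a,b)=p_A(a)p_B(b)$, and $p(a,b)>0$ for all $a\neq b$ forces $p_A,p_B$ to have full support, contradicting $p(a,a)=0$. Hence the witness must use one of the entangled PR-model states $\{\omega^{AB}_i\}_{i=17}^{24}$. The main work, and the step I expect to be the real obstacle, is the explicit construction: I would write down a $3$-outcome \emph{product} measurement $\{f^A_a\}_{a=0}^2$, $\{g^B_b\}_{b=0}^2$ assembled from \emph{non-sharp} effects of $\mathcal{E}_\square$ (the boxworld analogue of a genuinely non-projective POVM, i.e.\ effects that are not post-processings of a single sharp measurement) together with an entangled $\Omega^{AB}$, and verify directly that $p(a,a)=\Tr(f^A_a\otimes g^B_a\,\Omega^{AB})=0$ while $p(a,b)>0$ for $a\neq b$.

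The bookkeeping is to check simultaneously that (i) each $f^A_a,g^B_b$ is a legitimate element of $\mathcal{E}_\square$ and the triples sum to $u$, (ii) $\Omega^{AB}$ is a valid PR-model state, and (iii) all nine joint probabilities are non-negative with exactly the three diagonal ones vanishing; self-consistency of the PR-model guarantees (iii) automatically once (i)--(ii) hold, so the search reduces to pinning down compatible effects and an entangled state. Any such witness lies in $\mathcal{C}^{PR}_2(3)\cap T_1[2,2,3]$ and, by the second paragraph, outside $\mathcal{C}^{SPR}_2(3)$, establishing $\mathcal{C}^{SPR}_2(3)\subsetneq\mathcal{C}^{PR}_2(3)$.
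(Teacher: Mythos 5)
Your first half is sound and coincides with the paper's own Step~1: you decompose each sharp-simulable effect over the two local sharp binary measurements, invoke self-consistency to see that $P(i,j)=\Tr\left(e_i\otimes e_j\,\Omega^{AB}\right)$ is a genuine $2\times 2$ probability distribution, and conclude $\mathcal{C}^{SPR}_2(3)\subseteq\mathcal{C}^{Cl}_2(3)$ (the paper proves equality, but, as you note, only the inclusion is needed for strictness). Combined with Theorem~\ref{theo:classicalcorr}, this correctly reduces the problem to exhibiting a PR-model correlation outside $\mathcal{C}^{Cl}_2(3)$. The gap is in that second half: you never produce the witness. You identify the explicit construction of an entangled PR-model state together with non-sharp product measurements as ``the real obstacle'' and then leave it as a search to be carried out, so the proof is incomplete exactly where the content lies.

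Beyond being unfinished, your choice of witness class is riskier than the paper's. You aim for a correlation in $T_1[2,2,3]$, i.e.\ with \emph{exactly} vanishing diagonal and strictly positive off-diagonal entries. The paper instead only needs to beat the classical threshold of the payoff ${\map R}[\mathbb{G}~(2,2,3)]=\min_{a\neq b}p(a,b)$: it writes down an explicit mixture of product and entangled PR-model states and two three-outcome non-sharp local measurements achieving ${\map R}^{PR}=0.15>{\map R}_{max}^{Cl}$, a correlation whose diagonal does \emph{not} vanish. Demanding $p(a,a)=0$ for all $a$ in the PR-model is a far more rigid system of constraints (every extremal component of $f^A_a\otimes g^B_a$ must annihilate every extremal component of $\Omega^{AB}$ carrying positive weight), and neither you nor the paper establishes that the PR-model admits any such correlation; the paper's subsequent bound ${\map R}^{PR}_{max}=0.1556<\tfrac{1}{6}$ already shows boxworld cannot reach the symmetric zero-diagonal point that quantum theory attains, which should make you cautious about the existence of your intended witness. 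If you replace the $T_1$ criterion by the payoff ${\map R}$ and supply (or verify) an explicit state--measurement pair exceeding the classical bound, the argument closes; as written, it does not.
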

\begin{proof}
We provide the proof in two steps. First we show that $\mathcal{C}^{Cl}_{2}(3)=\mathcal{C}^{SPR}_{2}(3)$. Second, to show that $\mathcal{C}^{Cl}_{2}(3)\subsetneq \mathcal{C}^{PR}_{2}(3)$. We construct a correlation that is included in the set $\mathcal{C}^{PR}_{2}(3)$ but not in $\mathcal{C}^{Cl}_{2}(3)$.

Step $1$: To prove $\mathcal{C}^{Cl}_{2}(3)=\mathcal{C}^{SPR}_{2}(3)$, let us first observe that a classical system of operational dimension $2$, {\it i.e.} a bit can always be embedded in a boxworld elementary system, with the state space $\Tilde{\Omega}:=\{Conv(\omega_1, \omega_3)\}$ and the measurement $E_1=\{e_1,e_3\}$. Here, $Conv(X,Y):=\{Z|Z=p X+(1-p) Y, 0\leq p\leq 1\}$ represents convex hull of $X$ and $Y$. Consequently, $\mathcal{C}^{Cl}_{2}(3)\subseteq\mathcal{C}^{SPR}_{2}(3)$. The set of correlations $\mathcal{C}^{Cl}_{2}(3)$ can be obtained for the bipartite state space $\Tilde{\Omega}_{AB}:=\{Conv(\omega_1\otimes\omega_1^T, \omega_1\otimes\omega_3^T, \omega_3\otimes\omega_1^T, \omega_3\otimes\omega_3^T)\}$ and the local $2$-outcome measurements $E_A=E_B=E_1$ followed by classical post-processing (stochastic maps). Second, any correlation $p^{SPR}(a-1,b-1)$ obtained from a bipartite boxworld state $\omega_{AB}=\sum_{kl} \alpha_{kl}~ \omega_{k}\otimes \omega_{l}^T$ with $\sum_{kl}\alpha_{kl}=1$ and sharp-simulable measurements, say,  $$F_A:=\{(f_{a})_A=\sum_{i\in\{0,2\}} q_{ai}~ (e_{i})_A\}_{a=1}^k$$ 
$$G_B:=\{(g_{b})_B=\sum_{j\in\{1,3\}} r_{bj}~ (e_{j})_B\}_{b=1}^k$$ 
can always be written as
\begin{eqnarray*}
    p^{SPR}(a-1,b-1) &=& Tr \left((f_{a}^T)_A\otimes (g_{b})_B~\omega_{AB} \right)\\
    &=&\sum_{ijkl} q_{ai}~r_{bj}~\alpha_{kl} Tr\left((e_i^T)_A\otimes (e_j)_B (\omega_k\otimes \omega_l^T)\right)\\
    &=&\sum_{ij} q_{ai}~r_{bj}\sum_{kl}\alpha_{kl} Tr\left((e_i^T)_A~\omega_k\right) Tr\left((e_j)^T_B~\omega_l\right)\\  
    &=&\sum_{ij} q_{ai}~r_{bj}~ Tr\left(((e_i^T)_A\otimes (e_j)_B) (\sum_{m,n}\beta_{mn}\omega_m\otimes \omega_n^T)\right)
\end{eqnarray*}
where $m,n\in\{0,2\}$ and 
$\beta_{ij}=\frac{1}{4}\sum_{kl}\alpha_{kl}Tr\left((e_i^T)_A~\omega_k\right)$ $ \cdot Tr\left((e_j)_B~\omega_l^T\right)=\frac{1}{4}(\alpha_{i,j}+\alpha_{i,j\oplus3}+\alpha_{i\oplus3,j}+\alpha_{i\oplus 3,j\oplus 3})\ge 0$ and $\sum_{ij}\beta_{ij}=1$. The last equality implies that the above correlation can also be obtained from the classical state space embedded in the boxworld state space $\Tilde{\Omega}_{AB}:=\{Conv(\omega_1\otimes\omega_1^T, \omega_1\otimes\omega_3^T, \omega_3\otimes\omega_1^T, \omega_3\otimes\omega_3^T)\}$ and local measurements simulable from measurements in computational bases (say, $E_1$ for both parties).

Step 2: Now we prove that $\mathcal{C}^{Cl}_{2}(3)\subset\mathcal{C}^{PR}_{2}(3)$.  

Let us consider the following bipartite state
\begin{equation*}
    \omega_{AB}= \frac{p_1}{2} \left(\omega_6+\omega_{16}\right) + \frac{1-p_1}{2} \left(\omega_{22}+\omega_{16}\right)
\end{equation*}
and the local measurements (see figure \ref{SQUIT:example}):
\begin{equation*}
 {F}_{A}:=
\begin{cases}
    (f_1)_A = \frac{1}{2} e_2 + \frac{1}{6} e_3\\
(f_2)_A = \frac{1}{3} e_3 + \frac{1}{3} e_4\\
(f_3)_A = \frac{1}{2} e_1 + \frac{1}{6} e_4
\end{cases};~ 
 {G}_{B}:=
\begin{cases}
    (g_1)_B = \frac{1}{2} e_2 + \frac{1}{6} e_3\\
(g_2)_B = \frac{1}{3} e_3 + \frac{1}{3} e_4\\
(g_3)_B = \frac{1}{2} e_1 + \frac{1}{6} e_4
\end{cases}
\end{equation*}
Which gives rise to the following correlation:
\begin{equation}
    T_{PR}:= \left\{p(a-1,b-1)= Tr~\left((f_a^T)_A\otimes (g_b)_B~\omega_{AB}\right)\right\} 
\end{equation}

The above correlation achieves $R^{PR}[\mathbb{G}~(2,2,3)]=0.15>R_{max}^{Cl}[\mathbb{G}~(2,2,3)]$, which implies $\mathcal{C}^{Cl}_{2}(3)\subsetneq \mathcal{C}^{PR}_{2}(3)$. 
\end{proof}

\begin{figure}[h!]
    \centering
    \includegraphics[scale=0.08]{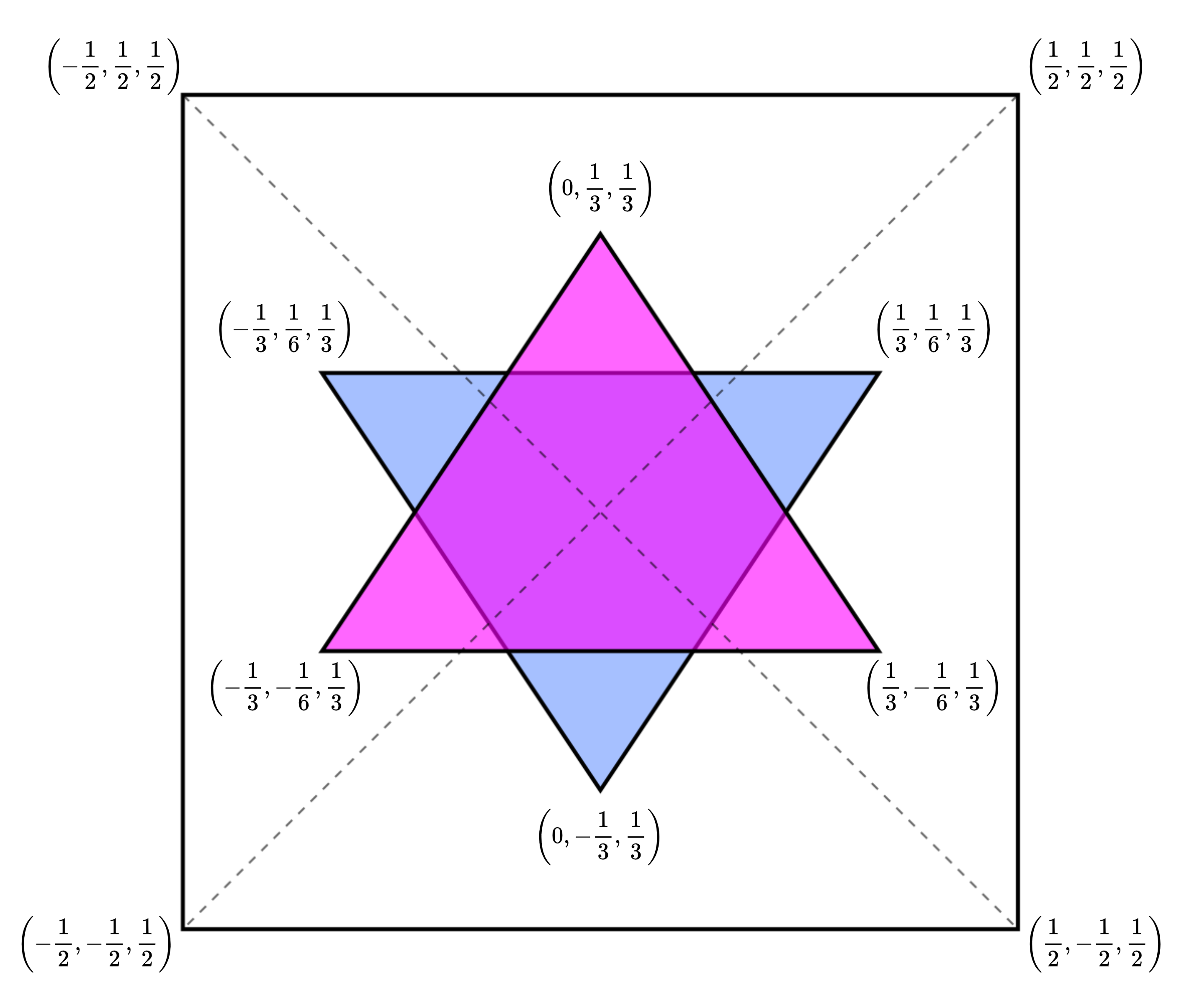}
    \caption{The local measurement strategies for obtaining $T_{PR}$: The black square denotes the local elementary effect space of the boxworld. Alice and Bob perform the measurements, each with three effects given by the vertices of the violet and blue triangles, respectively, on their subsystems of the bipartite state $\omega^{AB}$.}
    \label{SQUIT:example}
\end{figure}

\begin{theo}
    There exist a probability distribution $\tilde{P}\in \mathcal{C}^{Q}_{2}(3)$ but $\tilde{P}\notin \mathcal{C}^{S}_{2}(3)$. 
\end{theo}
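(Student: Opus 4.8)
The plan is to exhibit a single quantum correlation that attains the \emph{global} maximum of the payoff $R[\mathbb{G}(2,2,3)]=\min_{a\neq b}p(a,b)$ and then to show that the square-bit world falls strictly short of that value, so that this correlation witnesses the separation.

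First I would record a normalization bound. For any joint distribution $\{p(a,b)\}_{a,b=0}^{2}$ one has $\sum_{a\neq b}p(a,b)\le 1$, and since there are six off-diagonal entries, $R=\min_{a\neq b}p(a,b)\le\tfrac16$, with equality if and only if $p(a,a)=0$ for every $a$ and $p(a,b)=\tfrac16$ for every $a\neq b$. Call this unique maximizer $\tilde P$. By the proof of Theorem~\ref{theo:quant223} (the two-qubit singlet with the trine measurement on each side, evaluated at $p=1$) one obtains exactly $p(a,a)=0$ and $p(a,b)=\tfrac16$, so $\tilde P\in\mathcal{C}^{Q}_{2}(3)$ and it saturates $R=\tfrac16$. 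By the uniqueness just noted, proving $\tilde P\notin\mathcal{C}^{S}_{2}(3)$ is therefore equivalent to proving that no square-bit strategy reaches the value $\tfrac16$, i.e. $R^{S}_{\max}[\mathbb{G}(2,2,3)]<\tfrac16$.

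Next I would set up the boxworld optimization. A general local three-outcome measurement is $\{f_a\}_{a=0}^{2}$ with $\sum_a f_a=u$, each effect $f_a=(x_a,y_a,z_a)^{T}$ constrained by effect-positivity on the four pure states $\omega_1,\dots,\omega_4$, which is exactly $z_a\pm x_a\in[0,1]$ and $z_a\pm y_a\in[0,1]$; Bob's $\{g_b\}$ is constrained analogously, and the bipartite state $\Omega^{AB}$ ranges over the convex hull of the twenty-four extremal states. Since the state space and the effect space are polytopes and the probabilities are bilinear in the state and the two local effects, for fixed local measurements the maximization of $R$ over states is a linear program; the only genuine non-convexity is the joint optimization over the two measurements.

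The main obstacle is precisely this constrained bilinear optimization, carried out over the enlarged (entangled) square-bit state set rather than over a simplex. I expect the decisive simplification to come from the zero-diagonal requirement: imposing $p(a,a)=0$ for all three $a$ pins each pair $(f_a,g_a)$ onto the boundary facets of the square effect space, while the invariance of $\tilde P$ under simultaneous relabelling of the three outcomes collapses the free parameters to a handful. I would then argue that the residual system fixing the six off-diagonal probabilities to $\tfrac16$ is overdetermined and inconsistent -- the same mechanism that defeats two-level classical systems in Theorem~\ref{theo:classical223 noiseless}, now executed over the larger boxworld state space. Completing this (or, equivalently, verifying by the linear-programming bound that the optimal square-bit payoff equals the value $0.15$ realized by the construction in the preceding theorem, which is strictly below $\tfrac16$) establishes $\tilde P\notin\mathcal{C}^{S}_{2}(3)$ while $\tilde P\in\mathcal{C}^{Q}_{2}(3)$, completing the proof.
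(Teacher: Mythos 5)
Your overall strategy coincides with the paper's: both reduce the existence claim to the strict inequality $R^{PR}_{\max}[\mathbb{G}(2,2,3)] < R^{Q}_{\max}[\mathbb{G}(2,2,3)] = \tfrac16$. Your opening observation is a genuine improvement in exposition: since $\sum_{a\neq b}p(a,b)\le 1$ forces $\min_{a\neq b}p(a,b)\le\tfrac16$, with equality only for the single distribution $\tilde P$ having zero diagonal and all off-diagonal entries equal to $\tfrac16$, a gap between the two maxima immediately pins down a concrete $\tilde P$ that is quantum-realizable but not boxworld-realizable. The paper leaves this implication implicit.

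The gap is that you never actually establish $R^{PR}_{\max}<\tfrac16$, which is the entire content of the theorem once the reduction is made. Your two suggested routes are only sketches: the ``overdetermined system'' argument is asserted rather than carried out, and unlike the classical case of Theorem \ref{theo:classical223 noiseless} the state here ranges over the $24$-vertex boxworld polytope including the eight entangled extremal states, so the facet-pinning and symmetry reductions you invoke would need to be verified against those entangled vertices explicitly; the ``LP verification'' route presumes the answer. Moreover, your parenthetical guess that the optimum equals the value $0.15$ attained by the explicit construction of the preceding theorem is inconsistent with the paper, which reports $R^{PR}_{\max}[\mathbb{G}(2,2,3)]=0.1556$, so that construction is not optimal. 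To be fair, the paper itself only quotes this number (evidently from a numerical optimization) without derivation, so neither your proposal nor the published proof contains a self-contained analytic bound; but as written your argument stops exactly where the real work begins.
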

\begin{proof}
    The existence of such a distribution $\tilde{P}$ is implied by the fact that $R^{PR}_{max}[\mathbb{G}~(2,2,3)]=0.1556<R_{max}^{Q}[\mathbb{G}~(2,2,3)]=\frac{1}{6}$. 
\end{proof}

\section{Summary and Discussion}\label{Discussion} 

Certifying the non-classicality of quantum measuring devices is a central problem in the emerging field of quantum technology. Besides resource-intensive processes such as device tomography, several proposals have been made and experimentally realised that only require performing random measurements from a smaller set \cite{Brunner14, Sekatski23, Vertesi10, Gomez16, Tavakoli20, Martinez23}.
In all such cases, success depends directly on the guarantee that the choices made by the experimenters are random. On the other hand, such a guarantee requires the source of randomness to be quantum. A recent work \cite{Ma2023} has shown that an alternative certification of quantum measurements is possible without seed-randomness if the experimenter only knows the upper bound of the dimension of the systems being measured. Following this line of inquiry, the present work proposes a class of operational tasks that involves simulating some particular target correlations. We consider scenarios concerning both the exact and approximate simulation of target correlations. In the latter case, a figure of merit is defined for the task that can be used to detect non-projective measurements. We first show the equivalence between correlations generated using qudit projective simulable measurements and pre-shared classical $d$-level system. Next in the bipartite scenario, we discuss the detection scheme for three and four-outcome qubit non-projective measurement. These detection schemes are robust against {\it arbitrary noise}. We further show detection schemes for five-outcome qutrit non-projective measurements in bipartite as well as tripartite scenarios. We then numerically obtain projective simulable bounds on the figure of merit when the devices are completely uncharacterised and show a violation of it using qutrit measurement on a pre-shared state. However, when the measuring devices are assumed to be identical we analytically show the impossibility of simulating target correlations with qutrit projective simulable measurement. These correlations can be used to detect qutrit non-projective measurements. Finally, we show that the tasks proposed in the present article and the earlier works could be used to {\it rule out hypothetical theories} without input randomness. For example, the well-known square bit theory (box-world) which gives rise to the PR correlation in the Bell-nonlocality setting, can be deemed unphysical if one obtains a payoff beyond a threshold value in the task $\mathbb{G}(2,2,3)$. 

This work leaves several questions open. First, analogous to the detection scheme for three and four-outcome qubit non-projective measurements discussed in subsection \ref{sec: 223 t2} and \ref{sec: 224 t2}, whether the detection schemes discussed in subsection \ref{sec: 223 t1} and \ref{sec: 224 t1} are robust against noise is not known. Second, although we show some of the correlations in $T_2[2,2,3]$ (subsection \ref{sec: 223 t2}) and $T_2[2,2,4]$ (subsection \ref{sec: 224 t2}) can be simulated using qubit non-projective measurements, an open question remains regarding the possibility of generating all correlations (with mutual information less than $1$) in this set using some qubit non-projective measurement (see region $R3$ in figure \ref{fig:Noisy223} and figure \ref{fig:Noisy224}). Third, the possibility of analytically obtaining a nontrivial upper bound on the projective simulable bound discussed in subsection \ref{sec: 235 numerical bound} and \ref{sec: 335 numerical bound} remains open. This would provide sufficient proof for the detection schemes discussed in these subsections. Fourth, in subsection \ref{sec: 235sym robustness} by varying the parameters in the target correlation $N_{AB}$ we show the impossibility of generating large (but finite) correlations in this set using qutrit projective simulable measurements. It remains open whether all the correlations in this set cannot be obtained using qutrit projective simulable measurements. This would prove that the detection scheme is robust against arbitrary white noise. Finally, an open question regarding the last part of the article is whether the task $\mathbb{G}(2,2,3)$ can be used to rule out a class of GPTs called {\it polygon models}, which has been shown to mimic quantum statistics for the number of vertices $n\rightarrow \infty$. \cite{Janotta11}.

\section{Acknowledgement}
 S. R. and P. H. acknowledge partial support by the Foundation for Polish Science -- IRAP project, ICTQT, contract no. MAB/2018/5, which is carried out within the International Research Agendas Programme of the Foundation for Polish Science co-financed by the European Union from the funds of the Smart Growth Operational Programme, axis IV: Increasing the research potential (Measure 4.3). SSB acknowledges funding by the Spanish MICIN (project PID2022-141283NB-I00) with the support of FEDER funds, and by the Ministry for Digital Transformation and of Civil Service of the Spanish Government through the QUANTUM ENIA project call- Quantum Spain project, and by the European Union through
the Recovery, Transformation and Resilience Plan - NextGeneration EU within the framework of the Digital Spain 2026 Agenda.

\appendix
\section{Proofs}
\subsection{Proof of Theorem \ref{theo:classicalcorr}}\label{app:classicalcorr}

\begin{proof}
   Here we will consider the proof when the number of parties $n=2$ in the task $\mathbb{G}~(n,d,k)$. The proof for a general case with arbitrary $n$ will be a simple extension of the proof which we discuss here.  In the following we first show that given operational dimension $d$ of the local system, for any joint outcome probabilities $\{p^{PQ}(a,b)\}_{a,b=1}^k\in \mathcal{C}^{PQ}_d (k)$ obtained from some arbitrary quantum state and two local projective-simulable measurements there is a correlation $\{q^{cl}(r,s)\}_{r,s=1}^d$ obtained using a pseudo-quantum density matrix (a classical state) and local computational basis measurements that generates the same correlation as $\{p^{PQ}(a,b)\}_{a,b=1}^k$ after the post-processing of outcomes. In other words,  $\mathcal{C}^{PQ}_d (k)\subseteq \mathcal{C}^{Cl}_d (k)$.

   First, note that the statistics $\{p^{PQ}(a,b)\}$ generated from some local projective-simulable measurements $M^A:=\{e_a= \sum_{m=1}^{d} p_{am}|\psi_m\rangle\langle\psi_m|\}_{a=1}^k$ and $M^B:=\{f_b= \sum_{n=1}^{d} q_{bn}|\phi_n\rangle\langle\phi_n|\}_{b=1}^k$, respectively acting on a arbitrary bipartite quantum state $\rho\in\mathbb{C}^d\otimes\mathbb{C}^d$ can be written as
   \begin{eqnarray*}
       p^{PQ}(a,b) &=& Tr\left(\rho ~e_a\otimes f_b\right)\\
       &=& Tr\left(\rho \sum_{m} p_{am}|\psi_m\rangle\langle\psi_m|\otimes \sum_{n} q_{bn}|\phi_n\rangle\langle\phi_n|\right)\\
       &=& \sum_{m,n} p_{am} q_{bn} Tr\left(\rho ~|\psi_m\rangle\langle\psi_m|\otimes |\phi_n\rangle\langle\phi_n|\right)
   \end{eqnarray*}
Here $\{\ket{\psi_m}\}_{m=1}^d$ and $\{\ket{\phi_n}\}_{n=1}^d$ form an orthonormal basis of $\mathbb{C}^d$ and $\{p_{am}\}_{m=1}^d$ and $\{q_{bn}\}_{n=1}^d$ are a valid probability distribution $\forall ~m,n$ respectively. Now any arbitrary bipartite quantum state $\rho\in \mathcal{D}(\mathbb{C}^d\otimes\mathbb{C}^d)$ can be written as following:
$$\rho=\sum_{ijkl=1}^{d} \alpha_{ijkl}~|\psi_i\phi_j\rangle\langle\psi_k\phi_l|.$$
This leads us to the following expression for $p^{PQ}(a,b)$:

\begin{eqnarray*}
       p^{PQ}(a,b) &=& \sum_{m,n} p_{am} q_{bn} Tr\left(\rho |\psi_m\phi_n\rangle\langle\psi_m\phi_n|\right)\\
       &=& \sum_{m,n} p_{am} q_{bn} \sum_{ijkl=1}^{d} \alpha_{ijkl}~Tr\left(|\psi_i\phi_j\rangle\langle\psi_k\phi_l|\cdot|\psi_m\phi_n\rangle\langle\psi_m\phi_n|\right)\\
       &=& \sum_{m,n} p_{am} q_{bn} \sum_{ijkl=1}^{d} \alpha_{ijkl}~\delta_{mk}\delta_{ln} Tr\left(|\psi_i\phi_j\rangle\langle\psi_m\phi_n|\right)\\
       &=& \sum_{m,n} p_{am} q_{bn} \sum_{ij=1}^{d} \alpha_{ijmn}~ |\langle\psi_i\phi_j|\psi_m\phi_n\rangle|^2\\
       &=&\sum_{m,n} p_{am} q_{bn} \sum_{ij=1}^{d} \alpha_{ijmn} ~\delta_{im}\delta_{jn}\\
       &=& \sum_{m,n} p_{am} q_{bn} \alpha_{mnmn}\\
       &=& \sum_{m,n} p_{am} q_{bn} \sum_{ij=1}^{d} \alpha_{ijij}~ Tr\left(|\psi_i\phi_j\rangle\langle\psi_i\phi_j|\cdot|\psi_m\phi_n\rangle\langle\psi_m\phi_n|\right)\\
       &=& \sum_{m,n} p_{am} q_{bn} \sum_{ij=1}^{d} \alpha_{ijij}~ Tr\left(|ij\rangle\langle ij|\cdot|mn\rangle\langle mn|\right)\\
       &=& \sum_{m,n} p_{am} q_{bn} Tr\left(\tilde{\rho}~|mn\rangle\langle mn|\right)\\
       &=&  \sum_{m,n} p_{am} q_{bn}~ q^{cl}(m,n)\\
       &=& p^{cl}(a,b)
   \end{eqnarray*}
      
where $\tilde{\rho}=\sum_{ij=1}^{d} \alpha_{ijij}~|ij\rangle\langle ij|$ is a pseudo-quantum state (classical probability distribution) and $\{|i\rangle\}_{i=1}^d$ are the computational bases for the two parties. Now, it is easy to see that the converse $\mathcal{C}^{Cl}_d (k)\subseteq \mathcal{C}^{PQ}_d (k)$ is true as any correlation $\{p^{cl}(a,b)\}_{a,b=1}^k$ obtained from some classical state with local operational dimension $d$, {\it i.e.} $\{p^{cl}(r,s)\}_{r,s=1}^d$, and therefore can be trivially obtained from a pseudo-quantum state and a measurement in computational bases followed by local post-processing of outcomes for the two parties. Moreover, the generalisation of the above proof for an arbitrary number of parties $n$ can be simply done by assuming a local projective simulable measurement for each party along with a multipartite shared state among them. The steps of the proof will thus be exactly similar to the one shown above. This completes the proof.
\end{proof}

\bibliographystyle{ieeetr}

\end{document}